\numberwithin{equation}{section}
\newtheorem{thm}{Theorem}[section]
\newtheorem{prop}{Proposition}[section]
\newtheorem{defn}{Definition}[section]
\newtheorem{oss}{Remark}[section]
\begin{document}
\markboth{F. Alessio, G. Esposito}
{The Bondi-Mentzner-Sachs Group} 

\title{On the structure and applications of the Bondi-Mentzner-Sachs Group}

\author{Francesco Alessio}
\address{Universit\`a degli Studi di Napoli ``Federico II'', Dipartimento di Fisica ``Ettore Pancini'', 
Complesso Universitario di Monte
S. Angelo, Via Cintia Edificio 6, 80126 Napoli, Italy\\
\email{f.alessio@studenti.unina.it}}
\author{Giampiero Esposito}
\address{Istituto Nazionale di Fisica Nucleare, Sezione di
Napoli, Complesso Universitario di Monte S. Angelo, 
Via Cintia Edificio 6, 80126 Napoli, Italy\\
\email{gesposit@na.infn.it}}
\maketitle
\begin{abstract}
This work is a pedagogical review dedicated to a modern description of the 
Bondi-Metzner-Sachs (BMS) group. Minkowski space-time 
has an interesting and useful group of isometries, but, for a generic space-time, the isometry group 
is simply the identity and hence provides no significant informations. Yet symmetry groups have important 
role to play in physics; in particular, the Poincar\'e group, describing the isometries of Minkowski 
space-time plays a role in the standard definitions of energy-momentum and angular-momentum. 
For this reason alone it would seem to be important to look for a generalization of the concept of 
isometry group that can apply in a useful way to suitable curved space-times. The curved space-times 
that will be taken into account are the ones that suitably approach, at infinity, Minkowski space-time. 
In particular we will focus on asymptotically flat space-times. In this work the concept of 
\textit{asymptotic symmetry group} of those space-times will be studied. In the first two sections we 
derive the asymptotic group following the classical approach which was basically developed
by Bondi, van den Burg, Metzner and Sachs.
This is essentially the group of transformations between coordinate systems 
of a certain type in asymptotically flat space-times. In the third section the conformal method and the 
notion of \lq asymptotic simplicity\rq\hspace{0.1mm} are introduced, following mainly the works of 
Penrose. This section prepares us for another derivation of the BMS 
group which will involve the conformal structure, and is thus more geometrical and fundamental. 
In the subsequent sections we discuss the properties of the BMS group, e.g. 
its algebra and the possibility to obtain as its subgroup the Poincar\'e group, as we may expect. 
The paper ends with a review of the BMS invariance properties of classical gravitational 
scattering discovered by Strominger, that are finding application to black hole physics and quantum 
gravity in the literature.
\end{abstract}
\section{Introduction}
\label{sect:1}
Ever since Einstein developed his theory of general relativity, group-theoretical methods have played
an important role in deriving new solutions of the Einstein equations and understanding their
properties \cite{CUP2009,SA1,SA2,Carmeli}, and also in investigating the asymptotic structure of 
space-time \cite{Pen62,Pen63,Pen64,Pen65,Pen67}. In particular, shortly after that 
Bondi, Metzner and Sachs laid the foundations of the asymptotic
symmetry group (hereafter referred to as BMS) of asymptotically flat spacetime 
\cite{Bondi62,Sachs62,Sachs1}, McCarthy elucidated several
features of this group and its representations \cite{MC1,MC2,MC3,MC4,MC5}. For example, unlike the infinite-dimensional
representations of the Lorentz group, that allow for particles of arbitrary spin, a result first
obtained by Majorana \cite{Majo32}, McCarthy proved that the infinite-dimensional representations of the
BMS group only allow for discrete values of the spin of elementary particles \cite{MC1}.
Furthermore, over the last few years, the BMS group has been found to lead to new perspectives on
classical gravitational scattering \cite{ST2014,HPS1,HPS2} and on the problem of black-hole evaporation in
quantum gravity \cite{HPS1}. From a more mathematical point of view, all of this adds evidence in
favour of the pseudo-group structure of the functional equations of classical and quantum
physics being able to improve our understanding of the fundamental laws of nature. 

Within this conceptual framework, our review aims at introducing the general reader, who is not
necessarily a general relativist, to the modern way of understanding the BMS group and its applications.
For this purpose, we begin by recalling that the
importance of the concept of energy within a physical theory, if introduced correctly, arises from 
the fact that it is a conserved quantity in time and hence a very useful tool. 
Thus, in general relativity one of the most interesting questions is related to the meaning of gravitational energy.\\
Starting from any vector $J^a$ that satisfies a local conservation equation, that can be put in the form
\begin{equation}
\label{eqn:0}
\nabla_aJ^a=0,
\end{equation}
one can deduce an integral conservation law which states that the integral over the boundary 
$\partial\mathscr{D}$ of some compact region $\mathscr{D}$ of the flux of the vector $J^a$ across this 
boundary necessarily vanishes. In fact, using Gauss' theorem we have
\begin{equation}
\label{eqn:0.1}
\int_{\partial\mathscr{D}} J^ad\sigma_a=\int_{\mathscr{D}}\nabla_aJ^adv=0.
\end{equation}
Now we know that in General Relativity the energy-momentum tensor $T_{ab}$  satisfies the local conservation law
\begin{equation}
\label{eqn:1}
\nabla_aT^{ab}=0,
\end{equation}
which follows directly from the Einstein field equations. However from \eqref{eqn:1} we cannot deduce any integral conservation law. This is because in this case the geometric object to integrate over a 4-volume (as on the 
right-hand side of \eqref{eqn:0.1}) would be a vector and we can not take the sum of two vectors at different 
points of a manifold. This picture is ameliorated if space-time possesses symmetries, 
i.e. Killing vectors. If $K^a$ is a Killing vector,
\begin{equation*}
\nabla_{(a}K_{b)}=0,
\end{equation*}
we may build the vector 
\begin{equation*}
P^a=T^{ab}K_b,
\end{equation*}
that satisfies \eqref{eqn:0}, since
\begin{equation*}
\nabla_a P^a=\left(\nabla_aT^{ab}\right)K_b+T^{ab}\nabla_aK_b=0.
\end{equation*}
The second term vanishes because $T^{ab}$ is symmetric and so $T^{ab}\nabla_aK_b=T^{ab}\nabla_{(a}K_{b)}=0$.
Therefore the presence of Killing vectors for the metric leads to an integral conservation law. In  
flat Minkowski space-time we know that there are 10 Killing vectors:
\begin{equation*}
\textbf{L}_{\alpha}=\frac{\partial}{\partial x^{\alpha}},\hspace{2.2cm}(\alpha=0,1,2,3)
\end{equation*}
\begin{equation*}
\textbf{M}_{\alpha\beta}=e_{\alpha}x^{\alpha}\frac{\partial}{\partial x^{\beta}}
-e_{\beta}x^{\beta}\frac{\partial}{\partial x^{\alpha}},\hspace{2cm}
(\mathrm{no}\hspace{1.3mm}\mathrm{summation};\alpha,\beta =0,1,2,3)
\end{equation*}
where $e_{\alpha}$ is +1 if $\alpha=0$ and -1 if $\alpha=1,2,3$. The first four generate space-time translations 
and the second six \lq rotations\rq\hspace{0.1mm} in space-time (these are just the usual ten generators of 
the inhomogeneous Lorentz group). One may use them to define ten vectors $P^a_{\alpha}$ and 
$P^a_{\alpha\beta}$ which will obey \eqref{eqn:0}. We can think of $\textbf{P}_0$ as representing the flow of 
energy, and $\textbf{P}_1$, $\textbf{P}_2$ and $\textbf{P}_3$ as the flow of the three components of linear 
momentum. The $\textbf{P}_{\alpha\beta}$ can be interpreted as the flow of angular momentum. If the metric is 
not flat there will not, in general, be any Killing vectors. It is worth noting that the diffeomorphism group 
has, for historical reasons, frequently been invoked as a possible substitute for the Poincar\'e group for a 
generic space-time. However, it is not really useful in this context, being much too large and preserving only 
the differentiable structure of the space-time manifold rather than any of its physical more important properties.
\\ However, one could introduce in a suitable neighbourhood of a point $q$ normal coordinates $\{x^a\}$ so that 
the components $g_{ab}$ of the metric are $e_a\delta_{ab}$ (no summation) and that the components of 
$\Gamma^a{}_{bc}$ are zero at $q$. One may take a neighbourhood $\mathscr{D}$ of $q$ in which $g_{ab}$ and 
$\Gamma^a{}_{bc}$ differ from their values at $q$ by an arbitrary small amount. Then 
$\nabla_{(a}L_{\alpha\hspace{1mm}b)}$ and $\nabla_{(a}M_{\alpha\beta\hspace{1mm}b)}$ will not exactly vanish in 
$\mathscr{D}$, but will in this neighbourhood differ from zero by an arbitrary small amount. Thus\begin{equation*}
\int_{\partial\mathscr{D}}P^b_{\alpha}d\sigma_b\hspace{1cm}\mathrm{and}\hspace{1cm}
\int_{\partial\mathscr{D}}P^b_{\alpha\beta}d\sigma_b
\end{equation*}
will still be zero in the first approximation.
Hence the best we can get from \eqref{eqn:1} is an approximate integral conservation law, if we integrate 
over a region whose typical dimensions are very small compared with the radii of curvature involved in 
$R_{abcd}$. We can interpret this by regarding the space-time curvature as giving a non-local contribution 
to the energy-momentum, that has to be considered in order to obtain a correct integral conservation law.\\
From the above discussion we deduce that no exact symmetries can be found for a generic space-time. 
However, if we turn to the concept of asymptotic symmetries and we apply it to asymptotically flat space-times, 
we will see that the picture is not so bad and that we can still talk about the Poincar\'e group. The basic 
idea, developed in the remainder of the paper, is that, since we are taking into account asymptotically 
flat space-times, we may expect that by going to \lq infinity\rq\hspace{0.1mm} one might acquire the Killing 
vectors necessarily for stating integral conservation laws. 
\section{Bondi-Sachs coordinates and Boundary Conditions} 
\label{sect:2}
Consider the Minkowski metric
\begin{equation*}
g=\eta_{ab}dx^a\otimes dx^b=dt\otimes dt-dx\otimes dx-dy\otimes dy-dz\otimes dz.
\end{equation*}
We introduce new coordinates 
\begin{equation}
\label{eqn:135}
u=t-r,\hspace{1cm}r\cos\theta=z,\hspace{1cm}r\sin\theta e^{i\phi}=x+iy,
\end{equation}
in terms of which the Minkowski metric takes the form
\begin{equation}
\label{eqn:137}
g=du\otimes du+du\otimes dr+dr\otimes du-r^2(d\theta\otimes d\theta+\sin^2\theta d\phi\otimes d\phi),
\end{equation}
which can also be written as
\begin{equation}
\label{eqn:139}
ds^2=du\otimes du+du\otimes dr+dr\otimes du-r^2q_{AB}dx^A\otimes dx^B, 
\end{equation}
where 
\begin{equation*}
q_{AB}=\left(\begin{matrix} 1& 0\\ 0& \sin^2\theta\end{matrix}\right),\hspace{0.5cm}A,B,...=2,3.
\end{equation*}
Note that $q_{AB}$ represents the metric on the unit $2$-sphere. 
The coordinate $u$ is called \textit{retarded time}.\\
We proceed to the interpretation of the coordinates \eqref{eqn:135}. The hypersurfaces given by the equation 
$u=\mathrm{const}$ are null hypersurfaces, since their normal co-vector $k_a=\nabla_a u$ is null. They are 
everywhere tangent to the light-cone. Note that it is a peculiar property of null hypersurfaces that their 
normal direction is also tangent to the hypersurface. The coordinate $r$ is such that the area of the surface 
element $u=\mathrm{const}$, $r=\mathrm{const}$ is $r^2\sin\theta d\theta d\phi$. Define a ray as the line with 
tangent $k^a=g^{ab}\nabla_bu$. Then the scalars $\theta$ and $\phi$ are constant along each ray.\\
Now we would like to introduce for a generic metric tensor a set of coordinates $(u,r,x^A)$ which has the same 
properties as the ones of \eqref{eqn:135}. These coordinates are known as \textit{Bondi-Sachs coordinates} 
\cite{Bondi62,Sachs62,Sachs1}. The hypersurfaces $u=\mathrm{const}$ are null, i.e. the normal co-vector 
$k_a=\nabla_au$ satisfies $g^{ab}(\nabla_au)(\nabla_bu)=0$, so that $g^{uu}=0$, and the corresponding 
future-pointing vector $k^a=g^{ab}\nabla_bu$ is tangent to the null rays. Two angular coordinates $x^A$, 
with $A,B,...=2,3$, are constant along the null rays, i.e. $k^a\nabla_ax^A=g^{ab}(\nabla_au)\nabla_bx^A=0$, 
so that $g^{uA}=0$. The coordinate $r$, which varies along the null rays, is chosen to be an areal coordinate 
such that $\mathrm{det}[g_{AB}]=r^4\mathrm{det}[q_{AB}]$, where $q_{AB}$ is the unit sphere metric associated 
with the angular coordinates $x^A$, e.g. $q_{AB}=\mathrm{diag}(1,\sin^2\theta)$ for standard spherical coordinates 
$x^A=(\theta,\phi)$. The contravariant components $g^{ab}$ and covariant components $g_{ab}$ are related by 
$g^{ac}g_{cb}=\delta^a_b$, which in particular implies $g_{rr}=0$ (from $\delta_{ur}=0$) and $g_{rA}=0$ 
(from $\delta_{uA}=0$). See Figure \ref{fig:3.1}.
\begin{figure}[h]
\begin{center}
\includegraphics[scale=0.52]{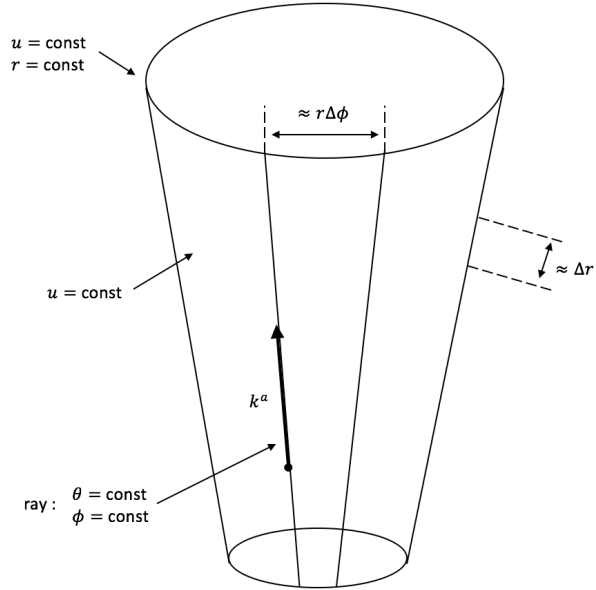}
\caption{The Bondi-Sachs coordinate system. The coordinates $u$, $r$, and $\phi$ and the vector $k^a$ are shown 
in the hypersurface $\theta=\mathrm{const}$.}
\label{fig:3.1}
\end{center}
\end{figure}
\\It can be shown \cite{Bondi62} that the metric tensor takes the form 
\begin{eqnarray}
\label{eqn:136}
g&=& g_{ab}dx^a\otimes dx^b=e^{2\beta}\frac{V}{r}du\otimes du+e^{2\beta}(du\otimes dr+dr\otimes du)
\nonumber \\
&+& g_{AB}(dx^A-U^Adu)\otimes(dx^B-U^Bdu),
\end{eqnarray}
where
\begin{equation}
\label{eqn:168}
g_{AB}=r^2h_{AB},\hspace{1cm}\mathrm{det}[h_{AB}]=h(x^A).
\end{equation}
Using Jacobi's formula for the derivative of a determinant for a generic matrix $g_{\mu\nu}$,
\begin{equation*}
 \partial_{\rho}\mathrm{det}[g_{\mu\nu}]=\partial _{\rho}g=gg^{\mu\nu}\partial_{\rho}g_{\mu\nu},
\end{equation*}
we have from the second of \eqref{eqn:168}
\begin{equation}
\label{eqn:169}
\partial_u h=0\Rightarrow h^{AB}\partial_uh_{AB}=0,\hspace{1cm}\partial_r h=0\Rightarrow h^{AB}\partial_rh_{AB}=0.
\end{equation}
We also have
\begin{equation*}
g^{ur}=e^{-2\beta},\hspace{0.5cm}g^{rr}=-\frac{V}{r}e^{-2\beta},\hspace{0.5cm}g^{rA}
=U^Ae^{-2\beta},\hspace{0.5cm}g^{AB}=-\frac{1}{r^2}h^{AB}.
\end{equation*}
A suitable representation for $h_{AB}$ is the following:\\
\begin{equation}
\label{eqn:141}
h_{AB}=\left(\begin{matrix}\cosh2\delta e^{2\gamma} & \sin\theta\sinh2\delta \\ \\ \\\sin\theta\sinh2\delta 
& \sin^2\theta\cosh2\delta e^{-2\gamma}\end{matrix}\right)\Rightarrow \mathrm{det}[h_{AB}]=\sin^2\theta.
\end{equation}
Here $V$, $\beta$, $U^A$, $\gamma$ and $\delta$ are any six functions of the coordinates. The form 
\eqref{eqn:136} holds if and only if $(u,r,\theta,\phi)$ have the properties stated above. Note that this 
form differs from the original form of Sachs \cite{Sachs62} by the transformation $\gamma\rightarrow(\gamma+\delta)/2$ 
and $\delta\rightarrow(\gamma-\delta)/2$. The original axisymmetric Bondi metric \cite{Bondi62} with rotational 
symmetry in the $\phi$-direction was characterized by $\delta=U^{\phi}=0$ and $\gamma=\gamma(u,r,\theta)$, 
resulting in a metric with reflection symmetry $\phi\rightarrow-\phi$ so that it is not suitable for describing 
an axisymmetric rotating star. \\
The next step is to write down the Einstein vacuum field equations in the above coordinate system in order to find 
the equations that rule the evolution of the six arbitrary functions on which the metric depends. 
As shown in \cite{Sachs62} or \cite{BSF} the Einstein vacuum field equations
\begin{equation*}
G_{ab}=R_{ab}-\frac{1}{2}Rg_{ab}=0,
\end{equation*}
separate into the \textit{Hypersurface equations},
\begin{equation*}
G^u_a=0,
\end{equation*}
and the \textit{Evolution equations},
\begin{equation*}
G_{AB}-\frac{1}{2}g_{AB}g^{CD}G_{CD}=0.
\end{equation*}
The former determines $\beta$ along the null rays ($G^u_r=0$), $U^A$ ($G^u_A=0$) and $V$ ($G^u_u=0$), 
while the latter gives informations about the retarded time derivatives of the two degrees of freedom 
contained in $h_{AB}$. Usually one requires the following conditions:
\begin{enumerate}
\item For any choice of $u$ one can take the limit $r\rightarrow\infty$ along each ray;
\item For some choice of $\theta$ and $\phi$ and the above choice of $u$ the metric \eqref{eqn:136} 
should approach the Minkowski metric \eqref{eqn:137}, i.e. 
\begin{equation}
\label{eqn:138}
\lim_{r\to\infty}\beta=\lim_{r\to\infty}U^A=0,\hspace{0.5cm}\lim_{r\to\infty}\frac{V}{r}
=1,\hspace{0.5cm}\lim_{r\to\infty}h_{AB}=q_{AB}.
\end{equation}
Note that these conditions, as pointed out in \cite{Sachs62}, are rather unsatisfactory from a geometrical 
point of view. They will be completely justified later, using the method of the conformal structure, introduced by Penrose;
\item Over the coordinate ranges $u_0\leq u\leq u_1$, $r_0\leq r\leq\infty$, $0\leq\theta\leq\pi$ 
and $0\leq\phi\leq 2\pi$ all metric functions can be expanded in series of $r^{-1}$. 
\end{enumerate}
Using the Einstein equations with these assumptions it can be shown \cite{Sachs62,BSF} that the 
following asymptotic behaviours hold:
\begin{subequations}
\label{eqn:145}
\begin{align}
&V=r-2M+O(r^{-1}),\\
&h_{AB}=q_{AB}+\frac{c_{AB}}{r}+O(r^{-2}),\\
&\beta=-\frac{c^{AB}c_{AB}}{32r^2}+O(r^{-3}),\\
&U^A=-\frac{D_{B}c^{AB}}{2r^2}+O(r^{-3}),
\end{align}
\end{subequations}
i.e. the metric \eqref{eqn:136} admits the asymptotic expansion
\begin{eqnarray}
\label{eqn:140}
g&=& du\otimes du+du\otimes dr+dr\otimes du-r^2q_{AB}dx^A\otimes dx^B
\nonumber \\
&-& \frac{2m_{_{B}}}{r}du\otimes du-\frac{c^{AB}c_{AB}}{4r^2}(du\otimes dr+dr\otimes du)
\nonumber \\
&-& rc_{AB}dx^A\otimes dx^B-\frac{D_Fc^{F}_{A}}{2}(du\otimes dx^A+dx^A\otimes du)+...
\end{eqnarray}
Here the function $m_{_{B}}=m_{_{B}}(u,\theta,\phi)$ is called the \textit{Bondi mass aspect}, $c_{AB}=c_{AB}(u,\theta,\phi)$ 
represents the $O(r^{-1})$ correction to $h_{AB}$ and $D_A$ is the covariant derivative with respect to the metric 
on the unit $2$-sphere, $q_{AB}$ \cite{Haw17}. Capital letters A, B,... can be raised and lowered with respect 
to $q_{AB}$. In carrying out the $1/r$ expansion of the field equations the covariant derivative ${\cal D}_A$ corresponding 
to the metric $h_{AB}$ is related to the covariant derivative $D_A$ 
corresponding to the unit sphere metric $q_{AB}$ by 
\begin{subequations}
\label{eqn:160}
\begin{equation}
{\cal D}_AV^B=D_AV^B+C^B{}_{AE}V^E,
\end{equation}
where
\begin{equation}
C^B{}_{AE}=\frac{1}{2r}q^{BF}\Bigr(D_Ac_{FE}+D_Ec_{FA}-Dc_{AE}\Bigr)
+O(r^{-2}).
\end{equation}
\end{subequations}
This property will be useful later.
\begin{defn}$\\ $
\label{defn:AF1}
A space-time $(\mathscr{M},g)$ is \textit{asymptotically flat} if the metric tensor $g$ and its components 
satisfy the conditions \eqref{eqn:145} and \eqref{eqn:140}. These conditions are often 
referred to as \textit{boundary conditions}.
\end{defn} 
\section{Bondi-Metzner-Sachs group}
\label{sect:3}
In this section our purpose is to find the coordinate transformations which preserve the asymptotic 
flatness condition. In other words we want to find the asymptotic isometry group of the metric \eqref{eqn:136} and we must demand some conditions to hold in order for the coordinate conventions and boundary conditions 
to remain invariant. It is clear that, from \eqref{eqn:140}, the corresponding changes suffered from the metric 
must therefore obey certain fall-off conditions, i.e.
\begin{equation}
\label{eqn:146}
\delta g_{rr}=0,\hspace{1cm}\delta g_{rA}=0,\hspace{1cm} g^{AB}\delta g_{AB}=0.
\end{equation} 
and 
\begin{subequations}
\label{eqn:181}
\begin{equation}
\label{eqn:147}
\delta g_{uu}=O(r^{-1}),\hspace{1cm}\delta g_{uA}=O(1),
\end{equation} 
\begin{equation}
\label{eqn:180}
\delta g_{ur}=O(r^{-2}),\hspace{1cm}\delta g_{AB}=O(r).
\end{equation} 
\end{subequations}
The third of \eqref{eqn:146} expresses the fact that we \textit{don't} want the angular metric $g_{AB}$ to undergo 
any conformal rescaling under the transformation. However a generalization which includes conformal rescalings 
of $g_{AB}$ can be found in \cite{Barn2010}.\\
We know that the infinitesimal change $\delta g_{ab}$ in the metric tensor is given by the Lie derivative of 
the metric along the $\xi^a$ direction, $\xi^a$ being the generator of the transformation of coordinates:
\begin{equation}
\label{eqn:148}
\delta g_{ab}=-\nabla_a\xi_b-\nabla_b\xi_a.
\end{equation}
Clearly the vector $\xi^a$ obeys \textit{Killing's equation},
\begin{equation*}
\nabla_a\xi_b+\nabla_b\xi_a=0,
\end{equation*}
if and only if the corresponding transformations are isometries. What we want to solve now is an 
\textit{asymptotic Killing's equation}, obtained putting together \eqref{eqn:146} and \eqref{eqn:181} 
with \eqref{eqn:148}. We get from the first of \eqref{eqn:146} 
\begin{equation*}
\nabla_r\xi_r=\partial_r\xi_r-\Gamma^{u}{}_{rr}\xi_u-\Gamma^{r}{}_{rr}\xi_r-\Gamma^{A}{}_{rr}\xi_A=0,
\end{equation*}
and using the Christoffel symbols given in \ref{C}, we get 
\begin{equation*}
\partial_r\xi_r=2\partial_r\beta,
\end{equation*}
and hence
\begin{equation}
\label{eqn:143}
\xi_r=f(u,x^A)e^{2\beta},
\end{equation}
where $f$ is a suitably differentiable function of its arguments.\\
From the second of \eqref{eqn:146} we obtain
\begin{equation*}
\nabla_r\xi_A+\nabla_A\xi_r=\partial_r\xi_A+\partial_A\xi_r-2\Gamma^{u}{}_{rA}
\xi_u-2\Gamma^{r}{}_{rA}\xi_r-2\Gamma^{B}{}_{rA}\xi_B=0,
\end{equation*}
and thus, using \eqref{eqn:143} we get 
\begin{equation*}
\partial_r\xi_A-r^2h_{AB}f\left(\partial_rU^B\right)-\frac{2\xi_A}{r}
-\left(\partial_rh_{AC}\right)h^{BC}\xi_B=-\left(\partial_Af\right)e^{2\beta},
\end{equation*}
and after some manipulation
\begin{equation*}
\partial_r\left(\xi_Bg^{BD}+fU^D\right)=-e^{2\beta}g^{AD}\left(\partial_Af\right),
\end{equation*}
which leads to 
\begin{equation*}
\xi_A=-h_{DA}f^Dr^2+fU^Dh_{DA}r^2+r^2h_{DA}\left(\partial_Bf\right)\int_r^{\infty}\frac{e^{2\beta}h^{BD}}{r'^2}dr'
\end{equation*}
\begin{equation}
\label{eqn:144}
=-f_Ar^2+fU_Ar^2+I_Ar^2+O(r),
\end{equation}
where 
\begin{equation*}
I^D(u,r,x^A)=(\partial_Bf)\int_r^{\infty}\frac{e^{2\beta}h^{BD}}{r'^2}dr'=\frac{\partial^Df}{r}+O(r^{-2}),
\end{equation*}
where $f^D$ are suitably differentiable functions of their arguments and the indices A, B etc. 
are raised and lowered with respect to the metric $q_{AB}$.\\ We can solve algebraically 
the third equation in \eqref{eqn:146} to obtain $\xi_u$:
\begin{equation*}
\xi_u=-\frac{e^{2\beta}}{2r}\left(-\partial_a\xi_B+\Gamma^{r}{}_{AB}\xi_r+\Gamma^{C}{}_{AB}\xi_C\right)h^{AB}.
\end{equation*}
Working with Christoffel symbols we get the following expression for $\xi_u$:
\begin{eqnarray}
\label{eqn:149}
\xi_u &=& -\frac{e^{2\beta}r}{4}\partial_D\left(h_{AB}f^D\right)h^{AB}+\frac{e^{2\beta}r}{2}
\left(\partial_Af\right)U^{A}-\frac{e^{2\beta}r}{4}\partial_D\left(h_{AB}I^D\right)h^{AB}
\nonumber \\
&+& e^{2\beta}\frac{V}{r}+r^2h_{AB}(U^Af^B-r^2U^AU^Bf-r^2U^AI^B).
\end{eqnarray}
Now equations \eqref{eqn:181} can be used to give constraints on the arbitrary functions 
$f$ and $f^A$. From the second of \eqref{eqn:180} we get
\begin{equation*}
\nabla_A\xi_B+\nabla_B\xi_A=\partial_A\xi_B+\partial_B\xi_A-2\Gamma^{u}{}_{AB}
\xi_u-2\Gamma^{r}{}_{AB}\xi_r-2\Gamma^C{}_{AB}\xi_C=O(r).
\end{equation*}
Using asymptotic expansions \eqref{eqn:145}, taking the order $r^2$ of the previous equation and putting it equal to zero we get
\begin{equation*}
-\partial_Af_B-\partial_Bf_A+\frac{1}{2}q_{AB}\partial_D\left(q_{CE}f^D\right)q^{CE}+q^{CD}
\left(\partial_Aq_{DB}+\partial_Bq_{DA}-\partial_Dq_{AB}\right)f_C=0,
\end{equation*}
thus
\begin{equation*}
-\partial_Af_B+\gamma^{C}{}_{AB}f_C-\partial_Af_B+\gamma^{C}_{AB}f_C=-\frac{1}{2}q_{AB}\partial_D\left(q_{CE}\right)q^{CE},
\end{equation*}
where $\gamma^A{}_{BC}$ are the Christoffel symbols with respect to the metric on the unit sphere $q_{AB}$. We eventually get
\begin{equation}
\label{eqn:150}
D_Af_B+D_Bf_A=\frac{1}{2}q_{AB}\partial_D\left(q_{CE}f^D\right)q^{CE}.
\end{equation}
and hence
\begin{equation*}
D_Af_B+D_Bf_A=f^D\frac{1}{2}q_{AB}\left(\partial_Dq_{CE}\right)q^{CE}+\left(\partial_Df^D\right)q_{AB}=q_{AB}D_Cf^C.
\end{equation*}
Thus $f^B$ are the conformal Killing vectors of the unit 2-sphere metric $q_{AB}$.\\
From the second of \eqref{eqn:147} we get 
\begin{equation*}
\nabla_u\xi_A+\nabla_A\xi_u=\partial_u\xi_A+\partial_A\xi_u-2\Gamma^{u}{}_{Au}
\xi_u-2\Gamma^{r}{}_{Au}\xi_r-2\Gamma^{B}{}_{Au}\xi_B=O(1).
\end{equation*}
Putting the order $r^2$ of this equation equal to zero we obtain
\begin{equation}
\label{eqn:151}
\partial_uf_A=0. 
\end{equation}
From the first of \eqref{eqn:180} we get
\begin{equation*}
\nabla_u\xi_r+\nabla_r\xi_u=\partial_u\xi_r+\partial_r\xi_u-2\Gamma^{u}_{ur}
\xi_u-2\Gamma^{r}_{ur}\xi_r-2\Gamma^{A}_{ur}	\xi_A=O(r^{-2}).
\end{equation*}
Putting the term of order $r^0$ of the previous equation equal to zero we get 
\begin{equation}
\label{eqn:152}
\partial_uf=\frac{1}{4}\partial_D\left(q_{AB}f^D\right)q^{AB}.
\end{equation}
Putting all the results together we have
\begin{subequations}
\label{eqn:153}
\begin{equation}
\partial_uf_A=0\Rightarrow f_A=f_A(x^B),\end{equation}
\begin{align}
D_Af_B+D_Bf_A=2q_{AB}\partial_uf\Rightarrow \left\{\begin{matrix} \partial^2_uf=0,&\\
\partial_uf=\frac{1}{2}D_Af^A.&\end{matrix}\right.
\end{align}
\end{subequations}
We get for $f$ the following expansion
\begin{equation}
f=\alpha+\frac{u}{2}D_Af^A,
\end{equation}
where $\alpha$ is a suitably differentiable function of $x^A$.\\
Consider now
\begin{equation*}
\xi^a=g^{ab}\xi_b,
\end{equation*}
from which we get 
\begin{equation}
\label{eqn:154}
\xi^u=f=\alpha+\frac{u}{2}D_Af^A,
\end{equation}
\begin{equation}
\label{eqn:155}
\xi^A=f^A-I^A=f^A-\frac{D^A\alpha}{r}-u\frac{D^AD_Cf^C}{2r}+O(r^{-2}),
\end{equation}
\begin{eqnarray}
\label{eqn:161}
\xi^r &=& -\frac{r}{2}\left[D_A\xi^A-U^A\partial_Af\right]=-\frac{r}{2}D_{C}\xi^C+O(r^{-1})
\nonumber \\
&=& -\frac{r}{2}D_Cf^C+\frac{D_CD^C\alpha}{2}+u\frac{D_CD^CD_Af^A}{4}+O(r^{-1}).
\end{eqnarray}
The second equality in \eqref{eqn:161} follows from \eqref{eqn:160} and from 
\begin{equation*}
q^{AB}c_{AB}=0,
\end{equation*}
which follows from satisfying at order $r^{-2}$ the second of \eqref{eqn:169} in the form 
\begin{equation*}
0=h^{AB}\partial_rh_{AB}=[q^{AB}-\frac{c^{AB}}{r^2}+O(r^{-3})][-\frac{c_{AB}}{r^2}+O(r^{-3})].
\end{equation*}
As $r\rightarrow\infty$  \eqref{eqn:154} and \eqref{eqn:155} become, respectively
\begin{align}
\label{eqn:156}
&\xi^u=\alpha+\frac{u}{2}D_Af^A,\\
\label{eqn:157}
&\xi^A=f^A.
\end{align}
Finally we can state that the asymptotic Killing vector is of the form \begin{equation}
\label{eqn:158}
\xi=\xi^a\partial_a=\left[\alpha(x^C)+\frac{u}{2}D_Af^A(x^C)\right]\partial_u+f^A(x^C)\partial_A,
\end{equation}
where $\alpha$ is arbitrary and $f^A$ are the conformal Killing vectors of the metric of the unit sphere. In order to fix ideas, set $x^A=(\theta,\phi)$.
It is clear then that $\theta$ and $\phi$ undergo a finite conformal transformation, i.e. 
\begin{subequations}
\label{eqn:159}
\begin{equation}
\label{eqn:182}
\theta\rightarrow\theta'=F(\theta,\phi),
\end{equation}
\begin{equation}
\label{eqn:183}
\phi\rightarrow\phi'=G(\theta,\phi),
\end{equation}
for which
\begin{equation*}
d\theta'^2+\sin^2\theta' d\phi'^2=K^2(\theta,\phi)(d\theta^2+\sin^2\theta d\phi^2),
\end{equation*}
and hence
\begin{equation}
\label{eqn:171}
K^4=J^{2}(\theta,\phi;\theta',\phi')\sin^2\theta\left(\sin\theta'\right)^{-2},\hspace{0.5cm}J
=\mathrm{det}\left(\begin{matrix}\frac{\partial F}{\partial\theta} & \frac{\partial F}
{\partial\phi}\\\frac{\partial G}{\partial\theta} & \frac{\partial G}{\partial\phi}
\end{matrix}\right).
\end{equation}
By definition of conformal Killing vector we also have
\begin{equation}
\label{eqn:214}
K^2=e^{D_Af^A}.
\end{equation}
The finite form of the transformation of the coordinate $u$ is given, as can be easily checked, by 
\begin{equation}
\label{eqn:184}
u\rightarrow u'=K[u+\alpha(\theta,\phi)].
\end{equation}
\end{subequations}
\begin{defn}$\\ $
\label{defn:BMS}
The transformations \eqref{eqn:159} are called \textit{BMS (Bondi-Metzner-Sachs) transformations}, and 
are the set of diffeomorphisms which leave the asymptotic form of the metric of 
an asymptotically flat space-time unchanged.
\end{defn}
The BMS transformations form a group. In fact, as is known, the conformal transformations form a group, so that 
$F$, $G$, and $K$ have all the necessary properties. Thus, one must only check the fact that if one 
carries out two transformation \eqref{eqn:184} the corresponding $\alpha$ for the product is again 
a suitably differentiable function of $\theta$ and $\phi$. If
\begin{equation*}
u_1\rightarrow u_2=K_{12}[u_1+\alpha_{12}]
\end{equation*} 
and 
\begin{equation*}
u_2\rightarrow u_3=K_{23}[u_2+\alpha_{23}]
\end{equation*}
then we have 
\begin{equation*}
u_1\rightarrow u_3=K_{13}[u_1+\alpha_{13}],\hspace{0.5cm}K_{13}=K_{12}K_{23},\hspace{0.5cm}
\alpha_{13}=\alpha_{12}+\frac{\alpha_{23}}{K_{12}}.
\end{equation*}
Since $\alpha_{13}$ is a suitably differentiable function it follows that 
\begin{prop}$\\ $
The BMS transformations form a group, denoted with $\mathscr{B}$.
\end{prop}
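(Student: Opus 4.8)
The plan is to exhibit a BMS transformation as a pair consisting of a conformal transformation of the unit $2$-sphere, given by \eqref{eqn:182}--\eqref{eqn:183}, together with the supertranslation datum $\alpha(\theta,\phi)$ entering \eqref{eqn:184}, and then to verify the group axioms directly on this parametrization. Associativity will be inherited, with no computation, from the associativity of composition of maps, so the genuine content lies in three points: closure, the existence of an identity element, and the existence of inverses. I would organize the proof around these three checks.

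First I would dispose of the angular part. Since the pair $(F,G)$ ranges over the conformal transformations of the round metric on $S^2$, and these are already known to form a group, the composition of two such angular maps is again conformal and each admits a conformal inverse. In particular the associated conformal factors compose multiplicatively, which is exactly the relation $K_{13}=K_{12}K_{23}$ recorded in the excerpt; via \eqref{eqn:214} this is equivalent to the statement that the exponents $D_Af^A$ add under composition. Thus the only thing not yet controlled by the group structure of the conformal maps is the transformation law of $u$.

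The heart of the argument is therefore closure of the $u$-transformation \eqref{eqn:184}. Composing $u_1\mapsto u_2=K_{12}[u_1+\alpha_{12}]$ with $u_2\mapsto u_3=K_{23}[u_2+\alpha_{23}]$ and collecting terms yields $u_1\mapsto u_3=K_{13}[u_1+\alpha_{13}]$ with $K_{13}=K_{12}K_{23}$ and $\alpha_{13}=\alpha_{12}+\alpha_{23}/K_{12}$, as already displayed. The point I would stress — and this is the one delicate step — is that here $\alpha_{23}$ must be read at the image angles $(F_{12},G_{12})$ while $K_{12}$ is read at the original angles, so that $\alpha_{13}$ is a composite of suitably differentiable functions divided by the smooth, nowhere-vanishing factor $K_{12}$. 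Hence $\alpha_{13}$ is itself suitably differentiable, the composed map is genuinely of the form \eqref{eqn:159}, and closure holds. This differentiability verification is exactly the obstacle the authors single out; everything else is formal.

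Finally, the identity is furnished by the identity conformal map ($F=\theta$, $G=\phi$, whence $K=1$) together with $\alpha=0$, for which \eqref{eqn:184} reduces to $u'=u$. For inverses, one takes the conformal inverse of $(F,G)$ — which exists because the conformal transformations form a group, with factor $K^{-1}$ — and solves $\alpha_{13}=0$ in the composition law, obtaining the required supertranslation $\alpha\mapsto-K\alpha$ read at the appropriate angles; this is again suitably differentiable for precisely the reason above. With closure, identity, inverses, and associativity in hand, the transformations \eqref{eqn:159} form a group, which we denote $\mathscr{B}$.
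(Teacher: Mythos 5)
Your proof is correct and takes essentially the same approach as the paper: both reduce everything to the known group structure of the conformal transformations of the sphere and then verify closure of the $u$-law \eqref{eqn:184}, arriving at the identical composition rule $K_{13}=K_{12}K_{23}$, $\alpha_{13}=\alpha_{12}+\alpha_{23}/K_{12}$ and the suitable differentiability of $\alpha_{13}$ as the one nontrivial point. Your extra explicit checks of identity and inverses (your $\alpha\mapsto -K\alpha$ matches the inverse \eqref{eqn:208} given later in the paper) and your remark on where $\alpha_{23}$ and $K_{12}$ are evaluated merely spell out what the paper leaves implicit.
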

\begin{defn}$\\ $
The BMS transformations for which the determinant $J$, defined in \eqref{eqn:171}, 
is positive form the \textit{proper} subgroup of the BMS group.
\end{defn}
In the remainder we will omit the word \lq proper\rq, even if all of our considerations will regard this component of $\mathscr{B}$.
\begin{oss}$\\ $
Note that the $r$ coordinate too may be involved in the BMS group of transformations, but such a 
transformation is somewhat arbitrary since it depends on the precise type of radial coordinate used and 
it is not relevant to the structure of the group. Clearly the BMS group is infinite-dimensional since 
the transformations depend upon a suitably differentiable function $\alpha(\theta,\phi)$.
\end{oss}
\section{Conformal Infinity}
\label{sect:4}
In this part of the work the notion of conformal infinity, originally introduced by Penrose, is developed. 
The idea is that if the space-time is considered from the point of view of its conformal structure only, 
\lq points at infinity\rq\hspace{0.1mm} can be treated on the same basis as finite points. This can be 
done completing the space-time manifold to a highly symmetrical conformal manifold by the addition of 
a null cone at infinity, called $\mathscr{I}$. We want to construct \cite{Pen62,Pen63,Pen64,Pen67}, 
starting from the \lq physical space-time\rq\hspace{0.1mm} $(\mathscr{\tilde M},\tilde{g})$, another 
\lq unphysical space-time\rq\hspace{0.1mm} $(\mathscr{M},g)$ with boundary $\mathscr{I}
=\mathscr{\partial M}$ \cite{Lang}, such that $\mathscr{\tilde M}$ is conformally equivalent to the 
interior of $\mathscr{M}$ with $g_{ab}=\Omega^2\tilde g_{ab}$, given an appropriate function $\Omega$. 
The two metrics $\tilde g_{ab}$ and $g_{ab}$ define on $\mathscr{\tilde{M}}$ the same null-cone structure. 
The function $\Omega$ has to vanish on $\mathscr{I}$, so that the physical metric would have to be infinite 
on it and cannot be extended. The boundary $\mathscr{I}$ can be thought as being at infinity, in the sense 
that any affine parameter in the metric $\tilde{g}$ on a null geodesic in $\mathscr{M}$ attains unboundedly 
large values near $\mathscr{I}$. This is because if we consider an affinely parametrized null geodesic 
$\gamma$ in the unphysical space-time $(\mathscr{M},g)$ with affine parameter $\lambda$, whose equation is\begin{equation*}
\frac{d^2x^a}{d\lambda^2}+\Gamma^a{}_{bc}\frac{dx^b}{d\lambda}\frac{dx^c}{d\lambda}=0,
\end{equation*}
it is easy to see that the corresponding geodesic $\tilde{\gamma}$ in the physical space-time 
$(\tilde{\mathscr{M}},\tilde{g})$ with affine parameter $\tilde{\lambda}(\lambda)$ is solution of the equation 
\begin{equation*}
\frac{d^2x^a}{d\tilde{\lambda}^2}+\tilde{\Gamma}^a{}_{bc}\frac{dx^b}{d\tilde{\lambda}}
\frac{dx^c}{d\tilde{\lambda}}=-\frac{1}{\tilde{\lambda}'}\left(\frac{\tilde{\lambda}''}
{\tilde{\lambda}'}+2\frac{\Omega'}{\Omega}\right)\frac{dx^a}{d\tilde{\lambda}},
\end{equation*} 
where $'$ denotes a $\lambda$ derivative. If we want the parameter $\tilde{\lambda}$ to be affine the 
right-hand side of the above equation must vanish, and hence we must have
\begin{equation*}
\frac{d\tilde{\lambda}}{d\lambda}=\frac{c}{\Omega^2},
\end{equation*}
where $c$ is an arbitrary constant. Since $\Omega=0$ on $\mathscr{I}$, $\tilde{\lambda}$ diverges and hence 
$\tilde{\gamma}$ never reaches $\mathscr{I}$, which apparently really is at infinity. Thus, from the point 
of view of the physical metric, the new points (i.e. those on $\mathscr{I}$) are infinitely distant from 
their neighbours and hence, physically, they represent \lq points at infinity\rq .\\ The advantage in studying 
the space-time $(\mathscr{M},g)$ instead of $(\tilde{\mathscr{M}},\tilde{g})$ is that the infinity of the 
latter gets represented by a finite hypersurface $\mathscr{I}$ and the asymptotic properties of the fields 
defined on it can be investigated by studying $\mathscr{I}$ and the  behaviour of such fields on $\mathscr{I}$. \\
However, there is a large freedom for the choice of the function $\Omega$. Anyway, it turns out \cite{Penrin2} 
from general considerations that an appropriate behaviour for $\Omega$ is that it should approach zero 
(both in the past and in the future) like the reciprocal of an affine parameter $\lambda$ on a null geodesic 
of the space-time considered ($\lambda\Omega\rightarrow \mathrm{constant}$ as $\lambda\rightarrow\pm\infty$).
\\Consider physical Minkowski space-time in spherical polar coordinates
\begin{equation}
\tilde{g}=dt\otimes dt -dr\otimes dr-r^2\Sigma_2,
\end{equation}
where
\begin{equation}
\label{eqn:sfer}
\Sigma_2=d\theta\otimes d\theta+\sin^2\theta d\phi\otimes d\phi.
\end{equation}
Introduce now the standard retarded and advanced null coordinates $(t,r)\rightarrow(u,v)$ defined by
\begin{equation*}
u=t-r,\hspace{1cm}v=t+r,\hspace{1cm} v\geq u.
\end{equation*}
The coordinates $u$ and $v$ serve as affine parameters into the past and into the future of 
null geodesics of Minkowski space-time.\\
The metric tensor becomes
\begin{equation*}
\tilde{g}=\frac{1}{2}(du\otimes dv+dv\otimes du)-\frac{1}{4}(v-u)^2\Sigma_2.
\end{equation*}
Consider now the unphysical metric
\begin{equation*}
g=\Omega^2\tilde{g},
\end{equation*}
with the choice
\begin{equation*}
\Omega^2=\frac{4}{(1+u^2)(1+v^2)}.
\end{equation*}
Note that for $u,v\rightarrow\pm\infty$  we have $\Omega u$, $\Omega v\rightarrow \mathrm{constant}$, 
as pointed out before.\\ Now to interpret this metric it is convenient to introduce new coordinates
\begin{equation*}
u=\tan p,\hspace{1cm}v=\tan q,\hspace{1cm}-\frac{\pi}{2}<p\leq q<\frac{\pi}{2},
\end{equation*}
such that we have
\begin{equation}
\label{eqn:2}
g=2(dp\otimes dq+dq\otimes dp)-\sin^2(p-q)\Sigma_2.
\end{equation}
It is possible to bring the metric \eqref{eqn:2} in a more familiar form by setting
\begin{equation*}
t'=q+p,\hspace{0.8cm}r'=q-p,\hspace{0.8cm}-\pi<t'<\pi,\hspace{0.8cm}-\pi<t'-r'<\pi,\hspace{0.8cm}0<r'<\pi,
\end{equation*}
from which follows 
\begin{equation}
\label{eqn:3}
g=dt'\otimes dt'-dr'\otimes dr'-\sin^2 (r')\Sigma_2.
\end{equation}
It is worth noting that the metric \eqref{eqn:3} is that of \textit{Einstein's static universe}, $\mathscr{E}$, 
the cylinder obtained as product between the real line and the 3-sphere, $S^3\times\mathbb{R}$. However, the 
manifold $\mathscr{M}$ represents just a finite portion of such a cylinder.
\begin{figure}[h]
\begin{center}
\includegraphics[scale=0.29]{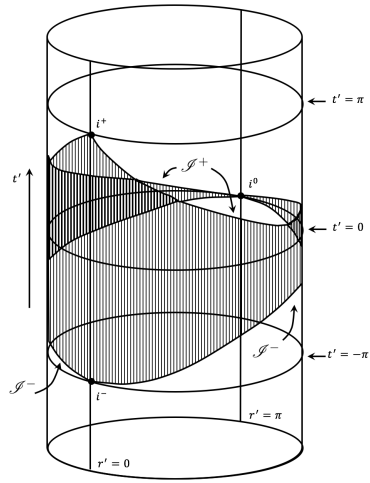}
\caption{The cylinder $\mathscr{E}=S^3\times\mathbb{R}$, of which $\mathscr{M}$ is just a finite portion, 
delimited by $\mathscr{I^+}$, $\mathscr{I^-}$, $i^+$, $i^-$ and $i^0$.
We note that the $(\theta,\phi)$ coordinates are suppressed, so that each point 
represents a 2-sphere of radius $\sin r'$.}
\label{fig:2.3}
\end{center}
\end{figure}
\\The metric \eqref{eqn:2} is defined at $q=\pi/2$ and $p=-\pi/2$: those values correspond to the infinity 
of $\mathscr{\tilde M}$ and therefore they represent the hypersurface $\mathscr{I}$. Hence we have defined 
a conformal structure on  $\mathscr{M}$, whose coordinates are free to move in the range  
$-\pi/2\leq p\leq q\leq\pi/2$. The boundary is given by $p=-\pi/2$ or $q=\pi/2$ and the interior of 
$\mathscr{M}$ is conformally equivalent to Minkowski space-time.\\
We introduce the following points in $\mathscr{M}$:
\begin{itemize}
\item $i^+$, called \textit{future timelike infinity} given by the limits $t\pm r\rightarrow\infty$, 
$u,v\rightarrow\infty$, $p,q\rightarrow\frac{\pi}{2}$, $t'\rightarrow\pi$, $r'\rightarrow 0$. 
All the images in $\mathscr{M}$ of timelike geodesics terminate at this point;
\item $i^-$, called \textit{past timelike infinity} given by the limits $t\pm r\rightarrow -\infty$, 
$u,v\rightarrow-\infty$, $p,q\rightarrow-\frac{\pi}{2}$, $t'\rightarrow -\pi$, $r'\rightarrow 0$.  
All the images in $\mathscr{M}$ of timelike geodesics originate at this point;
\item $i^0$, called \textit{spacelike infinity} given by the limits $t\pm r\rightarrow\pm \infty$, 
$u\rightarrow-\infty$, $v\rightarrow\infty$, $p\rightarrow-\frac{\pi}{2}$, $q\rightarrow\frac{\pi}{2}$, 
$t'\rightarrow0$, $r'\rightarrow \pi$. All spacelike geodesics originate and terminate at this point.
\end{itemize}
We also introduce the following hypersurfaces in $\mathscr{M}$:
\begin{itemize}
\item $\mathscr{I^+}$, called \textit{future null infinity}, is the null hypersurface where all the 
outgoing null geodesics terminate and is obtained in the following way. Null outgoing geodesics are 
described by $t=r+c$, with $c$ finite constant, from which $u=t-r=c$ and $v=t+r=2t-c$. Taking the limit 
$t\rightarrow\infty$ we get $u=c$ and $v=\infty$, hence $q=\pi/2$ and $p=\tan^{-1}c=p_0$ with 
$-\pi/2< p_0<\pi/2$. In $(t',r')$ coordinates $t'=\pi/2+p_0$ and $r'=\pi/2-p_0$. As $p_0$ runs in its 
range of values this is a point moving on the segment connecting $i^+$ and $i^0$. All outgoing null 
geodesics terminate on this segment, described by the equation $t'=\pi-r'$.
\item $\mathscr{I^-}$, called \textit{past null infinity}, is the hypersurface form which all null 
ingoing geodesics originate. It can be shown that this is given by the region $p=-\pi/2$ and 
$-\pi/2<q_0<\pi/2$ and is described, in terms of $(t',r')$ coordinates, by the segment of equation 
$t'=\pi+r'$ connecting $i^{-}$ and $i^{0}$.
\end{itemize}
\begin{figure}[h]
\begin{center}
\includegraphics[scale=0.45]{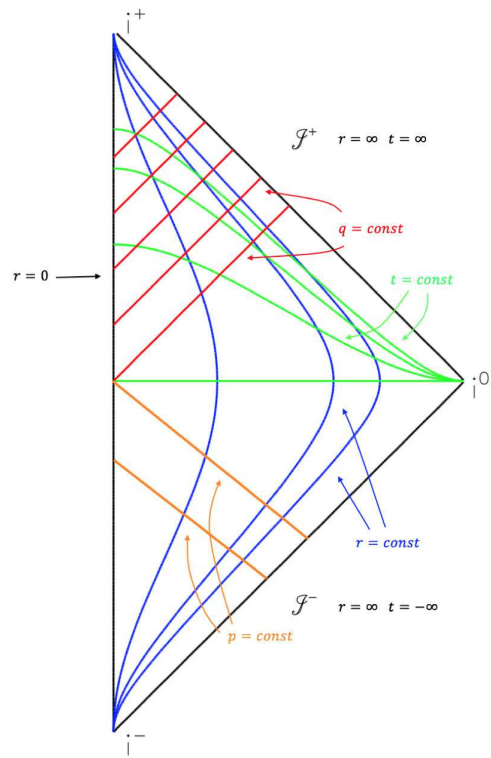}
\caption{A Penrose diagram for $\mathscr{M}$, using $(t',r')$ coordinates.} 
\label{fig:2.3.1}
\end{center}
\end{figure}
Putting 
\begin{equation*}
f^{\pm}(t',r')=t'\pm r'-\pi,
\end{equation*} 
the two equations defining the hypersurfaces $\mathscr{I}^+$ and $\mathscr{I}^-$ are 
\begin{equation*}
f^{\pm}(t',r')=0,
\end{equation*}
respectively. 
The normal co-vectors to $\mathscr{I}^+$ and $\mathscr{I}^-$ are 
\begin{equation*}
n^{\pm}_a=\frac{\partial f^{\pm}}{\partial x^a}=(1,\pm1,0,0).
\end{equation*}
Since $g^{ab}n^{\pm}_an^{\pm}_b=0$ it follows that $\mathscr{I}^+$ and $\mathscr{I}^-$ are null hypersurfaces.\\
At this stage, we can build some useful representation of the space-time $\mathscr{M}$. One of them is depicting 
$\mathscr{M}$ as a portion of the cylinder $\mathscr{E}=S^3\times E^1$, see Figure \ref{fig:2.3}. Another one is a portion of the plane in $(t',r')$ coordinates, that is an example of Penrose diagram. Each 
point of the Penrose diagram represents a sphere $S^2$, and radial null geodesics are represented by 
straight lines at $\pm 45^{\circ}$, see Figure \ref{fig:2.3.1}
\\One more representation for Minkowski space-time is furnished by Figure \ref{fig:2.4}.
\begin{figure}[h]
\begin{center}
\includegraphics[scale=0.43]{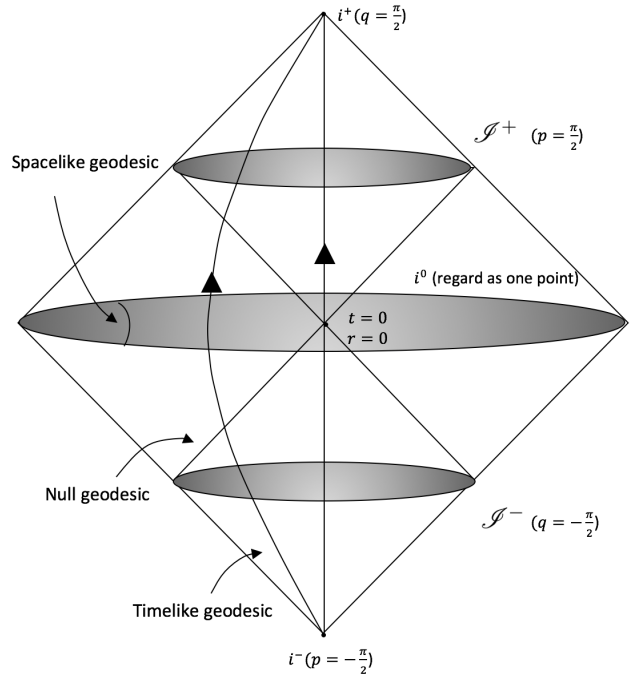}
\caption{This is another useful way of depicting $\mathscr{M}$ as the interior of two cones joined base to base. 
This picture however is not conformally accurate: in fact $i^0$ appears as an equatorial region 
whereas it should be a point.}
\label{fig:2.4}
\end{center}
\end{figure}
\\ \\We note here that for $\mathscr{M}$ the points $i^+$, $i^-$ and $i^0$ are regular and that $\mathscr{I^+}$ 
and $\mathscr{I^-}$ both have $S^2\times\mathbb{R}$ topology. Furthermore, the boundary of $\mathscr{M}$ 
is given by $\mathscr{I}=\mathscr{I^+}\cup\mathscr{I^-}\cup i^+\cup i^-\cup i^0$.
\\Consider now Schwarzschild space-time, with metric
\begin{equation}
\label{eqn:4}
\tilde{g}=dt\otimes dt\left(1-\frac{2m}{r}\right)-dr\otimes dr\left(1-\frac{2m}{r}\right)^{-1}-r^2\Sigma_2.
\end{equation}
Introducing $(u,w)$ coordinates as \begin{equation}
\label{eqn:5}
u=t-\left[r+2m\ln\left(\frac{r}{2m}-1\right)\right],\hspace{1cm}w=1/r,
\end{equation}
we have
\begin{equation}
\label{eqn:5.1}
\tilde{g}=du\otimes du\left(1-2mw\right)-(du\otimes dw+dw\otimes du)\frac{1}{w^2}-\frac{1}{w^2}\Sigma_2.
\end{equation}
The first of \eqref{eqn:5} is just the null retarded coordinate, corresponding to a null outgoing geodesic. 
Note that the coordinate $r^*=r+2m\ln\left(r/2m-1\right)$ in \eqref{eqn:5} is the usual Wheeler-Regge 
\lq tortoise coordinate\rq\hspace{0.1mm} introduced in \cite{Wheel}.
Consider now the unphysical metric
\begin{equation*}
g=\Omega^2d\tilde{g},\hspace{1cm}\Omega=w,
\end{equation*}
\begin{equation}
\label{eqn:6}
g=w^2(1-2mw)du\otimes du-(du\otimes dw+dw\otimes du)-\Sigma_2.
\end{equation}
Schwarzschild space-time, $\mathscr{\tilde M}$, is given by $0<w<1/2m$ because $2m<r<\infty$. We remark 
that the Schwarzschild solution can be easily extended beyond the event horizon, i.e. $0<r<\infty$ and 
$0<w<\infty$ because the apparent singular point $r=2m$ of the metric \eqref{eqn:4} is just a coordinate 
singularity and not a physical one, as can be noticed from \eqref{eqn:5.1}. The metric \eqref{eqn:6} is 
defined for $w=0$ (i.e. $r=\infty$) and hence for $\mathscr{M}$  we may take the range $0\leq w<1/2m$, 
such that the hypersurface $\mathscr{I^+}$ is given by $\Omega=w=0$. \\ Re-expressing \eqref{eqn:6} in 
terms of a null advanced coordinate
\begin{equation*}
v=u+2r+4m\ln\left(\frac{r}{2m}-1\right),
\end{equation*}
corresponding to a null ingoing geodesic we get
\begin{equation}
\label{eqn:7}
g=w^2(1-2mw)dv\otimes dv+(dv\otimes dw+dw\otimes dv)-\Sigma_2.
\end{equation}
By doing this it is now possible to introduce $\mathscr{I^-}$ as the hypersurface of $\mathscr{M}$ described 
by \eqref{eqn:7} for $w=0$. It is easy to check that the hypersurfaces $\mathscr{I}^+$ and $\mathscr{I}^-$, 
given by the equations $f^{\pm}(w)=w=0$ are again null hypersurfaces. 
\begin{figure}[h]
\begin{center}
\includegraphics[scale=0.51]{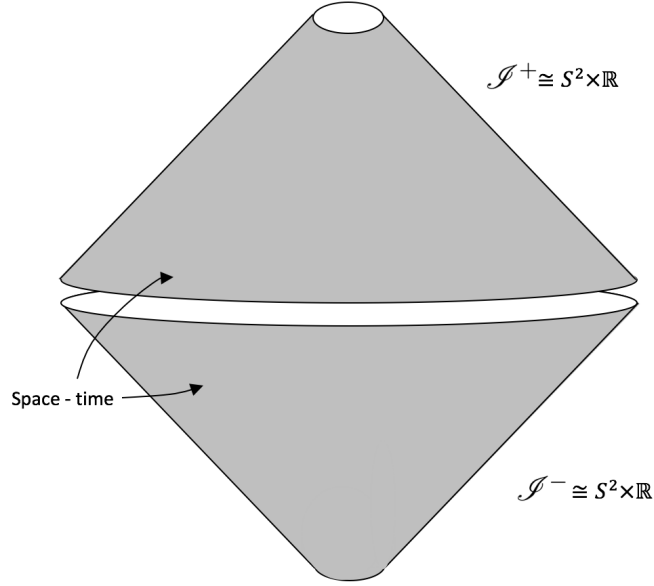}
\caption{Null infinity for Schwarzschild space-time. Note that $w=0$ corresponds both to $\mathscr{I^+}$ 
and $\mathscr{I^-}$. The points $i^{\pm}$ and $i^0$ are singular and have been deleted.}
\label{fig:2.5}
\end{center}
\end{figure}
\\The main difference between the Minkowski space-time case emerges from the fact that the points $i^+$, $i^-$ 
and $i^0$ in the Schwarzschild case are not regular, as could be deduced by the study of the eigenvalues of 
the Weyl tensor. However, it should not be surprising that $i^+$ and $i^-$ turn out to be singular, since the 
source generating the gravitational field becomes concentrated at these points, at the two ends of its history. 
Thus, we will omit $i^+$, $i^-$ and $i^0$ from the definition of $\mathscr{I}$, that will be just  
$\mathscr{I}=\mathscr{I^-}\cup\mathscr{I^+}$. We have two disjoint boundary null hypersurfaces $\mathscr{I^-}$ 
and $\mathscr{I^+}$ each of which is a cylinder with topology $S^2\times\mathbb{R}$. These null hypersurfaces 
are generated by rays (given by $\theta$,$\phi=$constant, $w=0$) whose tangents are normals to the hypersurfaces. 
These rays may be taken to be the $\mathbb{R}'\mathrm{s}$ of the topological product $S^2\times\mathbb{R}$. An useful representation of the Schwarzschild space-time is furnished by Figure \ref{fig:2.5}.\\
Take now into account a space-time $(\mathscr{\tilde M},\tilde g)$ with metric tensor \cite{Pen67,Penrin2}
\begin{equation}
\label{eqn:7.1}
\tilde g=r^{-2}Adr\otimes dr+B_i(dx^i\otimes dr+dr\otimes dx^i)+r^2C_{ij}dx^i\otimes dx^j,
\end{equation}
with $A$, $B_i$ and $C_{ij}$ sufficiently differentiable functions (say $C^3$) of $x^{\mu}$, with $x^0=r^{-1}$, 
on the hypersurface $\mathscr{I}$ defined by $x^0=0$ and in its neighbourhood. If the determinant 
\begin{equation*}
\mathrm{det}\left(\begin{matrix}A & B_i \\ B_j & C_{ij}\end{matrix}\right)
\end{equation*}
does not vanish, the space-time $(\mathscr{M},g)$ with metric $g=\Omega^2\tilde{g}$, being  $\Omega=r^{-1}$,
\begin{equation*}
g=Adx^0\otimes dx^0-B_i(dx^i\otimes dx^0+dx^0\otimes dx^i)+C_{ij}dx^i\otimes dx^j
\end{equation*}
is regular on $\mathscr{I}$. It is clear that Schwarzschild space-time is just a particular case of this 
more general situation described by \eqref{eqn:7.1}. Furthermore, this metric includes all
metrics of Bondi-Sachs type and describes a situation where there is an isolated source (with 
asymptotic flatness) and outgoing gravitational radiation. Hence a regularity assumption for $\mathscr{I}$ 
seems a not unreasonable one to impose if we wish to study asymptotically flat space-times and allow the 
possibility of gravitational radiation. In such situations, therefore, we expect a future-null conformal 
infinity $\mathscr{I}$ to exist. The choice made for $\Omega$ possesses the important property that its 
gradient at $\mathscr{I}$, $\partial\Omega/\partial x^{\mu}=(1,0,0,0)$, is not vanishing and hence defines 
a normal direction to $\mathscr{I}$ ($\mathscr{I}$ being described by the equation $\Omega=0$).\\
Roughly speaking, to say that a space-time is asymptotically flat means that its infinity is 
\lq similar\rq\hspace{0.1mm} in some way to the Minkowski one. As a consequence we may expect the 
conformal structure at infinity of an asymptotically flat space-time 
to be similar to the one found for the Minkowski case.\\
With those ideas in mind we may now proceed to a rigorous definition of asymptotic simplicity for a 
space-time. However we must also bear in mind that asymptotic flatness is, by itself, a mathematical idealization, and hence mathematical convenience and elegance constitute, by themselves, important criteria for selecting the appropriate idealization.
\begin{defn} $\\ $
\label{defn:AS}
A space-time $(\mathscr{\tilde M},\tilde g)$ is \textit{$k$-asymptotically simple} if some $C^{k+1}$ 
smooth manifold-with-boundary $\mathscr{M}$, with metric $g$ and smooth boundary 
$\mathscr{I}=\mathscr{\partial M}$ exists such that:
\begin{enumerate}
\item $\mathscr{\tilde M}$ is an open sub-manifold of $\mathscr{M}$; 
\item there exists a real-valued and positive function $\Omega>0$, that is $C^k$ throughout $\mathscr{M}$, 
such that $g_{ab}=\Omega^2\tilde g_{ab}$ on $\mathscr{\tilde{M}}$;
\item $\Omega=0$ and $\nabla_a\Omega\neq 0$ on $\mathscr{I}$;
\item every null geodesic on $\mathscr{M}$ has two endpoints on $\mathscr{I}$.
\end{enumerate}
The space-time $(\mathscr{\tilde M},\tilde g)$ is called \textit{physical space-time}, while 
$(\mathscr{M},g)$ is the \textit{unphysical space-time}.
\end{defn}
\begin{defn}{\cite{HawEll}}$\\ $
\label{defn:EAS}
A space-time $(\mathscr{\tilde M},\tilde g)$ is \textit{$k$-asymptotically empty and simple} if it is 
$k$-asymptotically simple and if satisfies the additional condition
\begin{enumerate}[start=5]
\item $\tilde{R}_{ab}=0$ on an open neighbourhood of $\mathscr{I}$ in $\mathscr{M}$ (this condition can 
be modified to allow for the existence of electromagnetic radiation near $\mathscr{I}$).
\end{enumerate}
\end{defn}
\begin{oss}$\\ $
Note that there are many different definitions of asymptotic simplicity. We used here the one which is 
due to \cite{Penrin2}, but others which slightly differ from this are conceivable \cite{Pen67,HawEll,Stew}.
\end{oss}
\begin{oss}$\\ $
Note that, although the extended manifold $\mathscr{M}$ and its metric are called \lq unphysical\rq, 
there is nothing unphysical in this construction. The boundary of $\mathscr{\tilde{M}}$ in $\mathscr{M}$ is 
uniquely determined by the conformal structure of $\mathscr{\tilde{M}}$ and, therefore, it is just 
as physical as $\mathscr{\tilde{M}}$. 
\end{oss}
Now we try to justify the previous assumptions.\\
Clearly with $1.$, $2.$ and $3.$ we mean to build $\mathscr{I}$ as the null infinity of 
$(\mathscr{\tilde M},\tilde{g})$, using the results obtained in the Minkowski case, with which it must share some 
properties. Condition $4.$ ensures that the whole of null infinity is included in $\mathscr{I}$. 
Furthermore, null geodesics in $\mathscr{\tilde M}$ correspond to null geodesics in $\mathscr{M}$ because 
conformal transformations map null vectors to null vectors: the concept of null geodesic is conformally invariant. 
Thus, we deduce that past and future infinity of any null geodesic in  
$\mathscr{\tilde M}$ are points of $\mathscr{I}$. Condition $5.$ ensures that the physical Ricci curvature 
$\tilde{R}_{ab}$ vanishes in the asymptotic region far away from the source of the gravitational field. 
Finally note how the points $i^+$, $i^-$ and $i^0$ are ruled out from the definition of $\mathscr{I}$, since 
$\mathscr{I}$ is not a smooth manifold at these points. 
Now we briefly summarize some of the properties of an asymptotically simple space-time, under the assumption that 
the vacuum Einstein equations hold and hence the cosmological constant equals zero.
\begin{itemize}
\item \textit{$\mathscr{I}$ is a null hypersurface}\\
This is because of condition $5.$ and condition $3.$. In fact it is easy to see that the Ricci scalar $R$ of 
the metric $g_{ab}$ is related to the Ricci scalar $\tilde{R}$ of the metric $\tilde{g}_{ab}$ by 
\begin{equation*}
\tilde{R}=\Omega^{-2}R -6\Omega^{-1}g^{cd}\nabla_c\nabla_d\Omega+3\Omega^{-2}g^{cd}\nabla_c\Omega\nabla_d\Omega,
\end{equation*}
and hence, by multiplying both members by $\Omega^2$, and by evaluating this equation on $\mathscr{I}$ where 
$\Omega=0$, it follows that $g^{cd}\nabla_c\Omega\nabla_d\Omega=0$. By condition $3.$, since 
$\nabla_c\Omega\neq 0$, it follows that 
$g^{cd}\nabla_c\Omega\nabla_d\Omega =0$ and thus $\nabla_c\Omega$, the normal vector to $\mathscr{I}$, is 
null and, by definition, $\mathscr{I}$ is a null hypersurface;\\
\item \textit{$\mathscr{I}$ is shear-free}\\
$R_{ab}$ is related to $\tilde{R}_{ab}$ by 
\begin{equation*}
\tilde{R}_{ab}=R_{ab}-2\Omega^{-1}\nabla_a\nabla_b\Omega-g_{ab}(\Omega^{-1}\nabla_c\nabla^c\Omega
-3\Omega^{-2}\nabla_c\Omega\nabla^c\Omega).
\end{equation*}
Since $\nabla_c\Omega$ is null and $R_{ab}$ is defined on $	\mathscr{I}^+$, if condition $5.$ holds, 
the previous equation on $\mathscr{I}^+$ leads to
\begin{equation*}
2\nabla_a\nabla_b\Omega+g_{ab}\nabla_c\nabla^c\Omega=0. 
\end{equation*}
Contracting with $g^{ab}$ it gives 
\begin{equation*}
\nabla_c\nabla^c\Omega=0\Rightarrow\nabla_a\nabla_b\Omega=0.
\end{equation*}
Hence the normal vector to $\mathscr{I}$ is divergence- and shear-free;\\
\item \textit{$\mathscr{I}^+$ has two connected components, $\mathscr{I}^+$ and $\mathscr{I}^-$, 
each of which has topology $S^2\times\mathbb{R}$}\\
The first proof of this theorem, involving sophisticated arguments, is due to \cite{Pen65}. However, 
as remarked in \cite{NewRP}, the arguments carried out by Penrose are incorrect, and a more rigorous 
proof can be found in \cite{Ger71} or in \cite{HawEll}. The significance of this property lies in the 
fact that the structure of the conformal infinity found for Minkowski space-time is that 
of any asymptotically simple space-time.
\end{itemize}
At this stage we must make a clarification. In fact we must point out that condition $4.$ is difficult to verify in practice and is not even satisfied by some space-times 
that we would like to classify as asymptotically flat. As an example, for Schwarzschild space-time, 
it is known that there exist null circular orbits with radius $3m$, and hence do not terminate on 
$\mathscr{I^+}$. For these reasons condition $4.$ is often too strong and gets replaced by a weaker 
one that brings to the notion of weakly asymptotically simple space-time. 
\begin{defn} $\\ $
\label{defn:WAS}
A space-time ($\mathscr{\tilde M},\tilde g)$ is \textit{weakly asymptotically simple} if there exists an 
asymptotically simple space-time $(\mathscr{\tilde M'},\tilde g')$ with associated unphysical space-time 
$(\mathscr{M'},g')$, such that for a neighbourhood  $\mathscr{H'}$ of $\mathscr{I'}$ in $\mathscr{M'}$, 
the region $\mathscr{\tilde M'}\cap\mathscr{H'}$ is isometric to a similar neighbourhood 
$\mathscr{\tilde H}$ of $\mathscr{\tilde M}$.
\end{defn}
In this way a weakly asymptotically simple space-time possesses the same properties of the conformal 
infinity of an asymptotically simple one, but the null geodesics do not necessary reach it because it 
may have other infinities as well. Such space-times are essentially required to be isometric to an 
asymptotically simple space-time in a neighbourhood of $\mathscr{I}$. 
\begin{oss}$\\ $
Note that the definition \ref{defn:AF1} of asymptotic flatness seems to be completely different from 
that of asymptotic simplicity \ref{defn:AS} and weak asymptotic simplcity \ref{defn:EAS}. However, 
the two approaches are equivalent, as shown in \cite{New62,NewUn}, since they lead to the same asymptotic 
properties, using two different ways. It is worth noting that the conformal method introduced by 
Penrose represents a \lq natural evolution\rq\hspace{0.1mm} of the previous one, being more geometrical.
\end{oss}
\section{Symmetries on $\mathscr{I}$}
\label{sect:5}
The geometrical approach to asymptotic flatness, discussed in the previous section, affords us a 
much more vivid picture of the significance of the BMS group.\\
The idea is that, by adjoining to the physical space-time $(\tilde{\mathscr{M}},\tilde{g})$ an appropriate 
conformal boundary $\mathscr{I}$, as done in Sect. \ref{sect:4}, we may obtain the asymptotic symmetries as 
conformal transformations of the boundary, the boundary having a much better chance of having a meaningful 
symmetry group than $\mathscr{\tilde{M}}$. \\ We start by making an example to better understand the nature 
of the problem, which is due to \cite{Pen72}. Consider Minkowski space-time with standard coordinates 
$(t,x,y,z)$, the metric being given by
\begin{equation*}
g=\eta_{ab}dx^a\otimes dx^b=dt\otimes dt-dx\otimes dx-dy\otimes dy-dz\otimes dz,
\end{equation*} 
and consider the null cone $\mathscr{N}$ through the origin, given by the equation 
\begin{equation}
\label{eqn:162}
t^2-x^2-y^2-z^2=0.
\end{equation}
The generators of $\mathscr{N}$ are the null rays through the origin, given by
\begin{equation*}
t:x:y:z=\mathrm{const},
\end{equation*}
with $t,x,y,z$ satisfying \eqref{eqn:162}. Let us consider $S^2$ to be the section of $\mathscr{N}$ by 
the spacelike 3-plane $t=1$. Then there exists a (1-1)-correspondence between the generators of $\mathscr{N}$ 
and the points of $S^2$ (i.e. that given by the intersections of the generators with $t=1$). We may regard 
$S^2$ as a realization of the space of generators of $\mathscr{N}$. However, we could have used any 
other cross-section $\hat{S}^2$ of $\mathscr{N}$ to represent this space. The important point is to 
realize that the map which carries any one such cross-section into another, with points on the same 
generator of $\mathscr{N}$ corresponding to one another, is a conformal map. The situation is 
reported in Figure \ref{fig:6.1}.:
\begin{figure}[h]
\begin{center}
\includegraphics[scale=0.3]{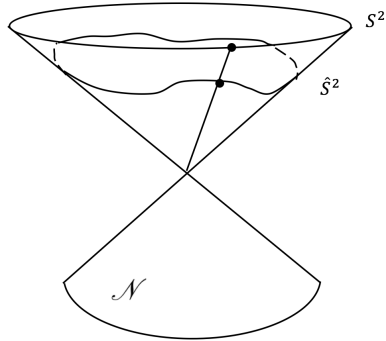}
\caption{The generators of the null cone $\mathscr{N}$ establish a 1-1 map between any two 
cross-sections of $\mathscr{N}$.}
\label{fig:6.1}
\end{center}
\end{figure}
\\The above mentioned map being conformal, the space of generators of $\mathscr{N}$ may itself be 
assigned a conformal structure, i.e. that of any of these sections. To see that the map is conformal 
we may re-express the induced metric on $\mathscr{N}$ in the form
\begin{equation}
\label{eqn:163}
g_{_{\mathscr{N}}}=-r^2\gamma_{\alpha\beta}(x^{\gamma})dx^{\alpha}\otimes dx^{\beta}+0\cdot dr\otimes dr,
\end{equation}
where $x^{\alpha}$ and $r$ are coordinates on $\mathscr{N}$, the generators being given by the coordinate 
lines $x^{\alpha}=\mathrm{const}$ (the term \lq$0$\rq\hspace{0.1mm} takes into account that, the surface 
$\mathscr{N}$ being null, its induced metric is degenerate, i.e. with vanishing determinant). There exist 
obviously many ways of attaining the form \eqref{eqn:163}. One is to use ordinary spherical coordinates 
for Minkowski space-time, giving $g_{_{\mathscr{N}}}=-r^2(d\theta\otimes d\theta+\sin^2\theta d\phi\otimes d\phi)
+0\cdot dr\otimes dr$. Since a cross-section of $\mathscr{N}$ is given by specifying $r$ as function of 
$x^{\alpha}$ it is clear that any two cross-sections give conformally related metrics, being mapped to one 
another by the generators of $\mathscr{N}$. It is now obvious that many other cone-like null surfaces will 
share this property of $\mathscr{N}$, provided their metrics can be put in the form \eqref{eqn:163}. 
Now if we suppose here to deal with an empty asymptotically simple space-time $(\mathscr{\tilde{M}},\tilde{g})$ 
(according to definition \ref{defn:EAS}, with associated unphysical space-time $(\mathscr{M},g)$) 
we know that, if $\mathscr{I}$ is null,  it has the important property to be shear-free, as discussed in 
Sect. \ref{sect:4}. Physically, the shear-free nature of the generators of $\mathscr{I}$ tells us that 
small shapes are preserved as we follow these generators along $\mathscr{I}$. Hence any diffeomorphism 
which maps each null generator of $\mathscr{I}^+$ into itself is a conformal transformation for any metric 
on $\mathscr{I}^+$. That is to say, if we take any two cross-sections $S_1$ and $S_2$ of $\mathscr{I}^+$ 
or $\mathscr{I}^-$, then the correspondence between $S_1$ and $S_2$ established by the generators is a 
conformal one. This is exactly the same situation we encountered in the example with $\mathscr{N}$. 
We have the following
\begin{prop}$\\ $
\label{prop:gen}
If $\mathscr{I}$ is null, then any two cross-sections of $\mathscr{I}^{\pm}$ are mapped to one another 
conformally by the generators of $\mathscr{I}^{\pm}$. 
\end{prop}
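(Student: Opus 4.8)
The plan is to reduce the statement to the claim that the degenerate induced metric on $\mathscr{I}^{\pm}$, restricted to the directions transverse to the generators, changes only by an overall conformal factor as one flows along the generators. This is exactly the behaviour exhibited by the null cone $\mathscr{N}$ in the model computation leading to \eqref{eqn:163}, where the angular part scales as $-r^2\gamma_{\alpha\beta}(x^\gamma)$; the task is to show that this scaling is forced \emph{in general} by the shear-free property of $\mathscr{I}$ established in Sect.~\ref{sect:4}, rather than by any special algebraic feature of Minkowski space.

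First I would introduce coordinates $(r,x^\alpha)$, $\alpha=1,2$, adapted to the null hypersurface $\mathscr{I}^{\pm}$, with the generators given by the integral curves of the null normal $\ell^a=g^{ab}\nabla_b\Omega$, parametrized so that $\ell=\partial/\partial r$, and with $x^\alpha$ held constant along each generator. Since $\mathscr{I}^{\pm}$ is null, its induced metric is degenerate and carries no $dr$-terms,
\begin{equation*}
g_{\mathscr{I}^{\pm}}=g_{\alpha\beta}(r,x^\gamma)\,dx^\alpha\otimes dx^\beta,
\end{equation*}
the two-dimensional block $g_{\alpha\beta}$ being the honest Riemannian metric induced on each cross-section $r=\mathrm{const}$. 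A cross-section $S$ is specified by prescribing $r$ as a function of the $x^\alpha$, and two cross-sections $S_1,S_2$ are identified point-by-point along the generators; the metric that $S_i$ inherits is $g_{\alpha\beta}$ evaluated at the relevant value of $r$. Thus the assertion is equivalent to the factorization
\begin{equation*}
g_{\alpha\beta}(r,x^\gamma)=\rho^2(r,x^\gamma)\,\gamma_{\alpha\beta}(x^\gamma)
\end{equation*}
for some positive $\rho$ and an $r$-independent two-metric $\gamma_{\alpha\beta}$.

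To obtain this I would compute the rate of change of $g_{\alpha\beta}$ along the generators, i.e. the Lie derivative $\mathcal{L}_\ell g_{\alpha\beta}=\partial_r g_{\alpha\beta}$ (the $x^\alpha$ being Lie-dragged), and split it into its trace part and its trace-free shear $\sigma_{\alpha\beta}$ with respect to $g_{\alpha\beta}$:
\begin{equation*}
\mathcal{L}_\ell g_{\alpha\beta}=\vartheta\,g_{\alpha\beta}+\sigma_{\alpha\beta},\qquad g^{\alpha\beta}\sigma_{\alpha\beta}=0.
\end{equation*}
The shear-free result of Sect.~\ref{sect:4}, which followed from $\nabla_a\nabla_b\Omega=0$ on $\mathscr{I}$, states precisely that the trace-free part of $\nabla_a\ell_b$ tangent to $\mathscr{I}^{\pm}$ vanishes, i.e. $\sigma_{\alpha\beta}=0$. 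Hence $\partial_r g_{\alpha\beta}=\vartheta\,g_{\alpha\beta}$, an ODE in $r$ at fixed $x^\gamma$ whose integrating factor is a scalar common to all components; integrating along each generator from a reference cross-section $r=r_0$ yields the factorized form above with $\partial_r\ln\rho^2=\vartheta$ and $\gamma_{\alpha\beta}(x^\gamma)=g_{\alpha\beta}(r_0,x^\gamma)$. Therefore the generator map $S_1\to S_2$ pulls back $g_{\alpha\beta}|_{S_2}$ to a pointwise positive multiple of $g_{\alpha\beta}|_{S_1}$, which is exactly the definition of a conformal map.

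The main obstacle is the careful identification of the geometric shear of the generating congruence with the trace-free part of $\mathcal{L}_\ell g_{\alpha\beta}$, together with the verification that the condition $\nabla_a\nabla_b\Omega=0$ of Sect.~\ref{sect:4} indeed controls exactly the components tangent to $\mathscr{I}^{\pm}$ (the purely transverse and mixed components playing no role). One must also check that the shear-free property is conformally invariant, so that the conclusion does not depend on the arbitrary choice of $\Omega$ — whereas the expansion $\vartheta$, and hence $\rho$, manifestly does — confirming that it is the conformal class of the cross-section metric, and not any particular representative, that is transported by the generators.
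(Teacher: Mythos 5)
Your proof is correct and rests on exactly the ingredient the paper itself uses, namely the shear-free property of the generators established in Sect.~\ref{sect:4}; the difference is that where the paper argues by analogy---exhibiting the factorized form \eqref{eqn:163} for the Minkowski null cone $\mathscr{N}$ and then asserting that shear-freeness (``small shapes are preserved'') puts a general $\mathscr{I}^{\pm}$ in exactly the same situation---you supply the deduction that the paper leaves heuristic. Your trace/trace-free splitting of $\mathcal{L}_{\ell}g_{\alpha\beta}$, the identification of the trace-free part with the shear of the generating congruence (legitimate here because $\ell_a\propto\nabla_a\Omega$ is hypersurface-orthogonal, so there is no twist), and the integration of the scalar ODE $\partial_r g_{\alpha\beta}=\vartheta\,g_{\alpha\beta}$ to the factorized form $g_{\alpha\beta}=\rho^{2}(r,x^{\gamma})\,\gamma_{\alpha\beta}(x^{\gamma})$ constitute precisely the missing derivation, so your version is a strengthening rather than a departure. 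One refinement worth recording: in the particular gauge of Sect.~\ref{sect:4} the conclusion $\nabla_a\nabla_b\Omega=0$ makes the generators divergence-free as well as shear-free, so there $\vartheta=0$ and $g_{\alpha\beta}$ is exactly $r$-independent, recovering the conformal statement trivially; it is your allowance of a general expansion $\vartheta$ that makes the argument survive the rescalings $\Omega\rightarrow\Theta\Omega$, under which (as the paper notes, citing Stewart) only shear-freeness persists---this is exactly the gauge-invariance check you flag in your closing paragraph, and it is the right one, since it is the conformal class of the cross-section metric, not a representative, that the generators transport.
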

In Sect. \ref{sect:4} we have discussed that the topology of $\mathscr{I}^{\pm}$ is $S^2\times\mathbb{R}$, 
where the $\mathbb{R}$ factor may be taken as the null-geodesic generator $\mathscr{I}^{\pm}$. Hence these 
generators, by proposition \ref{prop:gen}, establish a conformal mapping between any two $S^2$ cross-sections 
of $\mathscr{I}^{\pm}$, these sections being of course conformal spheres. It is a theorem that any conformal 
2-surface with the topology of a sphere $S^2$ is conformal to the unit 2-sphere in Euclidean 3-space. 
Thus we can assume without loss of generality, that the conformal factor $\Omega$ has been chosen so that 
some cross-section $S$ has unphysical squared line element $-ds^2$ of a unit 2-sphere. Given one choice of 
$\Omega$, we can always make a new choice $\Omega'=\Theta\Omega$ which again has the property of vanishing at 
$\mathscr{I}$ with non-zero gradient there. The factor $\Theta$ has to be an arbitrary smooth positive function 
on $\mathscr{I}$ and can be chosen to rescale the metric on $\mathscr{I}$ as we please. It is worth noting 
that the shear-free condition can be saved by the change $\Omega'=\Theta\Omega$, as discussed in \cite{Stew}. 
This property can be interpreted as a \lq gauge freedom\rq\hspace{0.1mm} in the choice of the conformal 
factor $\Omega$. We can use this freedom to set the metric of a continuous sequence of cross-sections 
along the generators equal to that of $S$. Hence, in spherical polar coordinates the induced metric 
on $\mathscr{I}^{+}$ is 
\begin{equation}
\label{eqn:164}
g_{_{\mathscr{I^+}}}=d\theta\otimes d\theta+\sin^2\theta d\phi\otimes d\phi+0\cdot du\otimes du,
\end{equation}
where $u$ is a retarded time coordinate, i.e. a parameter defined along each generator increasing monotonically 
with time from $-\infty$ to $+\infty$, the corresponding form with an advanced time coordinate $v$ in place 
of $u$ holding for $\mathscr{I}^{-}$. The surfaces $u=\mathrm{const}$ are cross-sections of $\mathscr{I}^+$, 
each of which has the metric of a unit 2-sphere, as is clear from \eqref{eqn:164}. \\
From the above discussion it follows that the metric on $\mathscr{I}^+$ belongs to an equivalence class of 
metrics, two elements being equivalent if they are conformally related one to the other. Hence, the form 
of the metric \eqref{eqn:164} is just one element of this equivalence class that we have chosen as 
representative. Let us consider the group of conformal transformations of $\mathscr{I}^{+}$, i.e. the 
group of transformations which conformally preserve the metric \eqref{eqn:164}. It is clear that any smooth 
transformation which maps each generator into itself will be allowable:
\begin{equation}
\label{eqn:165}
u\rightarrow u'=F(u,\theta,\phi),
\end{equation}
with $F$ smooth on the whole $\mathscr{I}^+$ and $\partial F/\partial u >0$, since it has to map the whole 
range for $u$ to itself, for any $\theta$ and $\phi$. In addition, we can allow conformal transformations 
of the $(\theta,\phi)$-sphere into itself. These transformations can be regarded as those of the 
compactified complex plane $\mathbb{C}\cup\{\zeta=\infty\}$ into itself. 
Introducing the complex stereographic coordinate
\begin{equation*}
\zeta=e^{i\phi}\cot\frac{\theta}{2},
\end{equation*}
we have that \eqref{eqn:164} may be written as 
\begin{equation}
\label{eqn:167}
g_{_{\mathscr{I^+}}}=\frac{2(d\zeta \otimes d\bar{\zeta}+d\bar{\zeta}\otimes d\zeta)}
{(1+\zeta\bar{\zeta})^2}+0\cdot du\otimes du.
\end{equation}
Then the most general conformal transformation of the compactified plane is given by
\begin{equation}
\label{eqn:166}
\zeta\rightarrow\zeta'=\frac{a\zeta+b}{c\zeta+d},
\end{equation}
where $a$,$b$,$c$,$d\in\mathbb{C}$, that can be normalized to satisfy $ad-bc=1$.
\begin{oss}$\\ $
Since conformal transformations can be equivalently expressed in terms of $x^A$ or $\zeta$ coordinates, 
in the remainder we will use both of them, depending on the convenience.
\end{oss}
The particular functional form of the transformations in \eqref{eqn:166} results from the request that 
they must be diffeomorphisms of the compactified plane $\mathbb{C}\cup\{\zeta=\infty\}$  into itself. 
Hence the transformations must have at least one pole, at $\zeta^*$ say, corresponding to the point 
that is mapped to the north pole $F(\zeta^*)=\infty$ and at least one zero, at $\zeta^{**}$ say, 
corresponding to the point that is mapped to the south pole $F(\zeta^{**})=0$. Thus, the transformations 
must be some rational complex function where the roots of the numerator and the denominator correspond 
to the points that are mapped to the south and the north pole, respectively. Since the transformation 
must be injective there must be one, and only one, point that is mapped to the south pole, and also 
exactly one other point that is mapped to the north pole. This requires that both numerator and denominator 
be linear functions of $\zeta$. Requiring this map to be surjective finally imposes that the complex 
numbers $a,b,c,d$ in \eqref{eqn:166} must satisfy $ad-bc\neq0$ (all of these parameters can be appropriately 
rescaled to get $ad-bc=1$ leaving the transformation unchanged). It is worth remarking that in pure 
mathematics these transformations were studied by Poincar\'e and other authors when they developed the 
theory of what are nowadays called automorphic functions, i.e. meromorphic functions such that 
$f(z)=f((az+b)/(cz+d))$ \cite{PO1912}. It is easy to see that these transformations contain:
\begin{itemize}
\item Translations $\zeta\rightarrow \zeta'=\zeta+b,\hspace{1cm}b\in\mathbb{C}$;
\item Rotations $\zeta\rightarrow \zeta'=e^{i\theta}\zeta,\hspace{1cm}\theta\in\mathbb{R}$;
\item Dilations $\zeta\rightarrow \zeta'=e^{-\chi}\zeta,\hspace{1cm}\chi\in\mathbb{R}$;
\item Special transformations $\zeta\rightarrow \zeta'=
-\displaystyle{\frac{b^2}{\zeta^2}},\hspace{1cm}b\in\mathbb{C}$;
\end{itemize}
Any transformation of the form \eqref{eqn:166} can be obtained as the composition of a special transformation, 
a translation, a rotation and a dilation.\\
Usually, transformations \eqref{eqn:166} are referred to as the \textit{conformal group} (in two dimensions), 
the \textit{projective linear group}, the \textit{M\"{o}bius transformations} or the \textit{fractional linear 
transformations}, and is denoted by PSL$(2,\mathbb{C})\cong\mathrm{SL}(2,\mathbb{C})/\mathbb{Z}_2$ 
(as will be discussed in the next section). Under these transformations we have
\begin{equation*}
\frac{2(d\zeta'\otimes d\bar{\zeta'}+d\bar{\zeta'}\otimes d\zeta')}{(1+\zeta'\bar{\zeta'})^2}
=K^2(\zeta,\bar{\zeta})\frac{2(d\zeta\otimes d\bar{\zeta}+d\bar{\zeta}\otimes d\zeta)}
{(1+\zeta\bar{\zeta})^2}\Rightarrow g'_{_{\mathscr{I}^+}}=K^2g_{_{\mathscr{I}^+}},
\end{equation*}
with 
\begin{equation}
\label{eqn:170}
K(\zeta,\bar{\zeta})=\frac{1+\zeta\bar{\zeta}}{(a\zeta+b)(\bar{a}\bar{\zeta}+\bar{b})
+(c\zeta+d)(\bar{c}\bar{\zeta}+\bar{d})}.
\end{equation}
It can be shown that transformations \eqref{eqn:166} are equivalent to \eqref{eqn:182} and \eqref{eqn:183}, 
and that the conformal factor $K$ in \eqref{eqn:170} is the same as one in \eqref{eqn:171}, 
expressed in terms of the $(\theta,\phi)$ variables.\\
\begin{defn}$\\ $
The group of transformations
\begin{subequations}
\begin{align}
\label{eqn:172}
&\zeta\rightarrow\zeta'=\frac{a\zeta+b}{c\zeta+d},\\
\label{eqn:172a}
&u\rightarrow u'=F(u,\theta,\phi),
\end{align}
\end{subequations}
with $ad-bc=1$ and with $F$ smooth and $\partial F/\partial u>0$ is the \textit{Newman-Unti (NU) group}.
\end{defn}
\begin{oss}$\\ $
Note that \eqref{eqn:172} are the non-reflective conformal transformations of the $S^2$-space of 
generators of $\mathscr{I}^+$ (the conformal structure being defined equivalently by one of its cross-sections), 
while \eqref{eqn:172a}, when \eqref{eqn:172} is the identity ($a=d=1$, $b=c=0$), give the general non-reflective 
smooth transformations of the generators to  themselves.
\end{oss}
The conformal metric \eqref{eqn:167} is considered to be part of the universal intrinsic structure of 
$\mathscr{I}^+$ (universal, in the sense that any space-time which is asymptotically simple and vacuum near 
$\mathscr{I}$ has a $\mathscr{I}^+$ metric and similarly a $\mathscr{I}^-$ metric which is conformal to 
\eqref{eqn:167}). Hence the NU group can be regarded as the group of non-reflective transformations of 
$\mathscr{I}^+$ preserving its intrinsic (degenerate) conformal metric \cite{Pen72,Pen82}.\\
However, the NU group is different from the BMS group, the former being larger than the latter. In fact the NU 
group allows a greater freedom in the function $F$, while in the BMS group $F$ is constrained to be of the 
form \eqref{eqn:184}. Thus, we want to be somehow able to reduce this freedom, assigning a further geometric 
structure to $\mathscr{I}^+$, the preservation of which will furnish the BMS group, restricting exactly the form of $F$ to be the one of 
\eqref{eqn:184}. This additional structure is referred to as the \textit{strong conformal geometry} 
\cite{Penrin2,Pen72}. The most direct way to specify this structure is the following. Consider a 
replacement of the conformal factor,
\begin{equation}
\label{eqn:173}
\Omega\rightarrow\Omega'=\Theta\Omega.
\end{equation}
We choose the function $\Theta$ to be smooth and positive on $\mathscr{M}$ and nowhere vanishing on 
$\mathscr{I}^+$. Under \eqref{eqn:173} the metric transforms as 
\begin{equation*}
g_{ab}\rightarrow g'_{ab}=\Theta^2g_{ab},\hspace{1cm}g^{ab}\rightarrow g'^{ab}=\Theta^{-2}g^{ab},
\end{equation*}
and the normal co-vector to $\mathscr{I}^{+}$ as
\begin{equation*}
N_a=-\nabla_a\Omega\rightarrow N'_a=-\nabla'_a\Omega'=-\nabla_a\Omega'=-\Omega\nabla_a\Theta
-\Theta\nabla_a\Omega\approx\Theta N_a,
\end{equation*}
while the vector
\begin{equation*}
N^a=g^{ab}\partial_b\Omega\rightarrow N'^a=g'^{ab}N'_b\approx\Theta^{-1}N^a,
\end{equation*}
where we introduced the \lq weak equality\rq\hspace{0.1mm} symbol $\approx$. Considering two fields 
$\psi^{...}_{...}$ and $\phi^{...}_{...}$, saying that 
\begin{equation}
\label{eqn:80}
\psi^{...}_{...}\approx\phi^{...}_{...}
\end{equation}
means that $\psi^{...}_{...}-\phi^{...}_{...}=0$ on $\mathscr{I}$. 
The line element $dl$ of $\mathscr{I}^+$ rescales according to 
\begin{equation}
\label{eqn:176}
dl\rightarrow dl'=\Theta dl.
\end{equation}
Having done any allowable choice of the conformal factor $\Omega$, through the function $\Theta$, and hence 
some specific choice of the metric $dl$ for cross-sections of $\mathscr{I}^+$, then it is defined, from 
$N_a=-\nabla_a\Omega$, a precise scaling for parameters $u$ on the generators of $\mathscr{I}^+$, fixed by
\begin{equation*}
\frac{\partial}{\partial u}=N^a\nabla_a,\hspace{0.7cm}\mathrm{i.e.}\hspace{0.7cm}N^a\nabla_a u=1.
\end{equation*}
Under \eqref{eqn:173} we see that to keep the scaling of the parameters $u$ along the generators fixed we must choose
\begin{equation}
\label{eqn:175}
du\rightarrow du'=\Theta du, 
\end{equation}
so that 
\begin{equation*}
N^a\nabla_a u\rightarrow N'^a\nabla'_au'=N'^a\nabla_au'=N'^a\frac{\partial u'}{\partial x^a}
=\Theta^{-1}\Theta N^a\nabla_a u=1.
\end{equation*}
All the parameters $u$, linked by \eqref{eqn:175}, scale in the same way along the generators of $\mathscr{I}^+$. 
From \eqref{eqn:176} and \eqref{eqn:175} we see that the ratio
\begin{equation}
\label{eqn:177}
dl:du
\end{equation}
remains invariant and it is independent of the choice of the conformal factor $\Omega$. It is the invariant 
structure provided by \eqref{eqn:177} that can be taken to define the strong conformal geometry. To better 
reformulate this invariance we introduce the concept of \textit{null angle} \cite{Pen63,Penrin2,Pen72,BMS75}. 
Consider two non-null tangent directions at a point $P$ of $\mathscr{I}^+$. Let $[X]$ and $[Y]$ be such 
directions. If no linear combination of $X\in [X]$ and $Y\in [Y]$ is the null tangent direction at $P$, 
then the angle between $[X]$ and $[Y]$ is defined by the metric \eqref{eqn:164}. However, if the null tangent 
direction at $P$ is contained in the plane spanned by $[X]$ and $[Y]$, then the angle between $[X]$ and $[Y]$ 
always vanishes. To see this choose $X\in [X]$, $Y\in [Y]$ and $N\in [N]$ ($[N]$ being the null direction 
tangent at $P$), such that $Y=X+N$. Then since $N$ is null we have, using the metric $g$ on $\mathscr{I}^+$ 
given in \eqref{eqn:164} (and hence any other one of its equivalence class)
\begin{equation*}
0=g(N,X)=g(Y-X,X)=g(Y,X)-g(X,X),
\end{equation*}
and 
\begin{equation*}
0=g(N,Y)=g(Y-X,Y)=g(Y,Y)-g(X,Y),
\end{equation*}
from which the angle $\theta$ between $[X]$ and $[Y]$, given by 
\begin{equation*}
\cos\theta=\frac{g(X,Y)}{\sqrt{g(X,X)g(Y,Y)}}=1,
\end{equation*}
vanishes. However, if we require the strong conformal geometry structure to hold and hence the invariance of 
the ratio \eqref{eqn:177} we can numerically define the null angle $\nu$ between two tangent directions at 
a point $P$ of $\mathscr{I}^{+}$ by 
\begin{equation}
\label{eqn:178}
\nu=\frac{\delta u}{\delta l},
\end{equation}
where the infinitesimal increments $\delta u$ and $\delta l$ are as indicated in Figure \ref{fig:6.3} \cite{Penrin2}.
\begin{figure}[h]
\begin{center}
\includegraphics[scale=0.4]{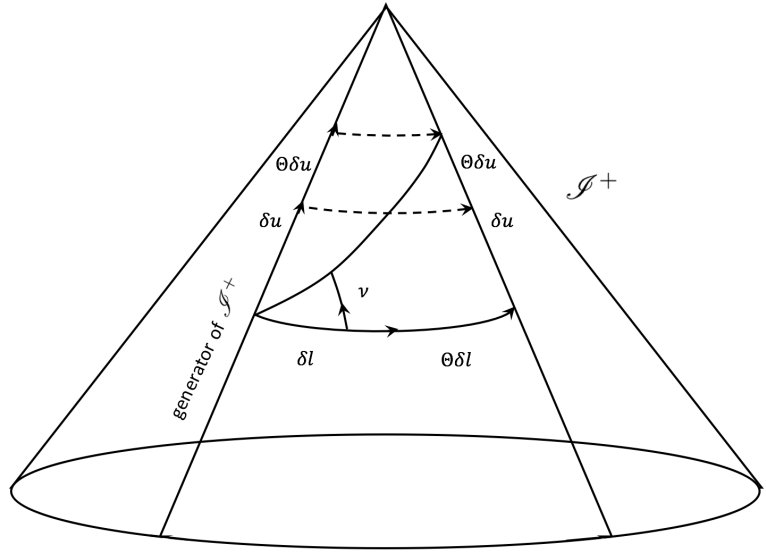}
\caption{A null angle $\nu$ on $\mathscr{I}^+$, given by $\nu=\delta u/\delta l$, is defined between a pair 
of directions on $\mathscr{I}^+$ whose span contains the null normal direction to $\mathscr{I}^+$.}
\label{fig:6.3}
\end{center}
\end{figure}
\\By virtue of the strong conformal geometry, under change of the conformal factor for the metric of $\mathscr{I}^+$, null angles 
remain invariant. For further insights about the strong conformal geometry and the interpretation of null angles we suggest to read \cite{Pen72} or \cite{BMS75}. \\A transformation of $\mathscr{I}^+$ to itself which 
preserves angles and null angles, i.e. that respects the strong conformal geometry structure, must have 
the effect that any expansion (or contraction) of the spatial distances $dl$ is accompanied by an equal 
expansion (or contraction) of the scaling of the special $u$ parameters. The allowed transformations
have the form \eqref{eqn:172}, where function $F$ must now have the  precise form that allows the ratio 
$du:dl$ to remain invariant. Under the transformation \eqref{eqn:173} we have, as seen, that the sphere 
of the cross-section of $\mathscr{I}^+$ undergoes a conformal mapping, i.e.
\begin{equation*}
dl\rightarrow dl'=\Theta dl.
\end{equation*}
Since $\Theta$ is the conformal factor of the transformation, it depends only on $\theta$ and $\phi$ or, 
equivalently, on $\zeta$ and $\bar{\zeta}$ and must have the form given in \eqref{eqn:170}. 
We must therefore also have 
\begin{equation*}
du\rightarrow du'=\Theta du, 
\end{equation*}
Integrating we get
\begin{equation*}
u\rightarrow u'=\Theta[u+\alpha(\zeta,\bar{\zeta})],
\end{equation*}
where $\Theta$ assumes the form
\begin{equation*}
\Theta(\zeta,\bar{\zeta})=\frac{1+\zeta\bar{\zeta}}{(a\zeta+b)(\bar{a}\bar{\zeta}+\bar{b})
+(c\zeta+d)(\bar{c}\bar{\zeta}+\bar{d})},
\end{equation*}
with $a,b,c,d\in\mathbb{C}$ and $ad-bc=1$. By virtue of definition \ref{defn:BMS} we have obtained the following
\begin{prop}$\\ $
The group of conformal transformations of $\mathscr{I}^+$ which preserve the strong conformal geometry, i.e. 
both angles and null angles, is the BMS group.
\end{prop}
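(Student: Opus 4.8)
The plan is to begin with the Newman--Unti group, already identified in \eqref{eqn:172}--\eqref{eqn:172a} as the group of all non-reflective conformal transformations of $\mathscr{I}^+$ that preserve its degenerate conformal metric \eqref{eqn:167}, and then to cut it down by additionally imposing invariance of the null angle \eqref{eqn:178}. The NU group leaves the function $F$ in \eqref{eqn:172a} essentially free, subject only to smoothness and $\partial F/\partial u>0$; the extra requirement of respecting the strong conformal geometry will pin down $F$ to the restricted form \eqref{eqn:184}, which by Definition \ref{defn:BMS} characterizes $\mathscr{B}$.

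First I would record how the line element $dl$ of a cross-section transforms under the sphere map \eqref{eqn:172}. Since \eqref{eqn:172} is a conformal transformation of the round metric \eqref{eqn:167}, one has $dl\rightarrow dl'=\Theta\,dl$ as in \eqref{eqn:176}, with the conformal factor $\Theta$ necessarily of the form \eqref{eqn:170}, depending only on $(\zeta,\bar{\zeta})$ and not on $u$. Invariance of the null angle \eqref{eqn:178}, i.e.\ invariance of the ratio $dl:du$ of \eqref{eqn:177}, then forces the retarded-time differential to rescale in exactly the same way, $du\rightarrow du'=\Theta\,du$, which is precisely \eqref{eqn:175}. This is the decisive constraint, and it is what distinguishes the BMS group from the larger NU group, in which $du'$ may be a general positive multiple of $du$ varying along each generator.

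The next step is to integrate this relation along each generator. Because $\Theta$ is independent of $u$, integrating $du'=\Theta\,du$ yields $u\rightarrow u'=\Theta\,[u+\alpha(\zeta,\bar{\zeta})]$, where the constant of integration $\alpha$ is an arbitrary smooth function of the angular coordinates alone. Identifying $\Theta$ with the factor $K$ of \eqref{eqn:171} --- the two coincide, as noted after \eqref{eqn:170} --- this is exactly the finite BMS transformation \eqref{eqn:184} of $u$, accompanied by the conformal motion \eqref{eqn:172} of the sphere. By Definition \ref{defn:BMS} the group so obtained is $\mathscr{B}$; conversely, every element of $\mathscr{B}$ preserves the ratio \eqref{eqn:177} and hence the null angles, so the two groups coincide.

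The conceptual crux of the argument is the equivalence between preservation of null angles and invariance of the ratio $dl:du$, together with the claim that this invariance pins $du'$ down to $\Theta\,du$ rather than to some more general rescaling. Everything after that --- the integration and the recognition of \eqref{eqn:184} --- is routine. I would therefore devote most of the care to justifying \eqref{eqn:178} as the correct numerical measure of the null angle and to checking that its invariance is genuinely equivalent to \eqref{eqn:175}, which is the point at which the strong conformal geometry does its essential work.
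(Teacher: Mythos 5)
Your proposal is correct and follows essentially the same route as the paper: starting from the NU group, you impose invariance of the ratio $dl:du$ (i.e.\ preservation of null angles), deduce $du\rightarrow du'=\Theta\,du$ with $\Theta$ of the $u$-independent form \eqref{eqn:170}, and integrate along the generators to obtain the BMS form \eqref{eqn:184}. Your explicit statement of the converse (that every element of $\mathscr{B}$ preserves the ratio \eqref{eqn:177}) is a small point the paper leaves implicit, but otherwise the two arguments coincide step for step.
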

\begin{oss}$\\ $
Conformal transformations, and hence the NU group, always preserve finite angles, but null angles, i.e. 
angles between tangent vectors of which $N^a$ is a linear combination, are preserved by the BMS group only. 
\end{oss}
The general form of a BMS transformation is thus
\begin{subequations}
\begin{equation}
\label{eqn:200}
\zeta\rightarrow\zeta'=\frac{a\zeta+b}{c\zeta+d},
\end{equation}
\begin{equation}
\label{eqn:201}
u\rightarrow u'=\frac{(1+\zeta\bar{\zeta})[u+\alpha(\zeta,\bar{\zeta})]}{(a\zeta+b)(\bar{a}\bar{\zeta}
+\bar{b})+(c\zeta+d)(\bar{c}\bar{\zeta}+\bar{d})},
\end{equation}
\end{subequations}
\\with $a,b,c,d\in\mathbb{C}$ and $ad-bc=1$. Clearly the BMS group is a subgroup of the NU group, the 
function $F$ having its form fixed. However it is still an infinite-dimensional function-space group.
\section{Structure of the BMS group}
\label{sect:6}
We discuss first the BMS transformations obtained by setting $\alpha=0$,
\begin{equation}
\label{eqn:220}
u\rightarrow u'=Ku,\hspace{1cm}\zeta\rightarrow\zeta'=\frac{a\zeta+b}{c\zeta+d},
\end{equation}
with 
\begin{equation*}
K(\zeta,\bar{\zeta})=\frac{1+\zeta\bar{\zeta}}{(a\zeta+b)(\bar{a}\bar{\zeta}+\bar{b})
+(c\zeta+d)(\bar{c}\bar{\zeta}+\bar{d})},
\end{equation*}
\\i.e. a rescaling for $u$ and a transformation of the group $\mathrm{PSL}(2,\mathbb{C})$ for $\zeta$. 
Any of these transformations is specified by the 4 constants $a,b,c,d\in\mathbb{C}$, satisfying $ad-bc=1$. 
Hence there are only 3 independent complex parameters, i.e. 6 independent real parameters. 
Any element $f\in \mathrm{PSL}(2,\mathbb{C})$ reads as
\begin{equation*}
f=\frac{a\zeta+b}{c\zeta+d}\equiv\{a,b,c,d\}.
\end{equation*}
Note that under simultaneous change $a\rightarrow-a$, $b\rightarrow-b$, $c\rightarrow-c$, 
$d\rightarrow-d$ any element $f\in\mathrm{PSL}(2,\mathbb{C})$ remains unaffected, i.e.
\begin{equation}
\label{eqn:203}
f=\{a,b,c,d\}=\{-a,-b,-c,-d\}.
\end{equation}
Now take into account the group $\mathrm{SL}(2,\mathbb{C})$ of $(2\times 2)$ complex matrices 
$Q$ with $\mathrm{det}[Q]=1$:
\begin{equation*}
Q=\left(\begin{matrix}A& B\\ C& D
\end{matrix}
\right),\hspace{1cm}\mathrm{det}[Q]=AD-BC=1,\hspace{1cm}A,B,C,D\in\mathbb{C}.
\end{equation*} 
Clearly the dimension of $\mathrm{SL}(2,\mathbb{C})$ is 6. Hence we can consider a map $\tilde{\varphi}$, 
between $\mathrm{PSL}(2,\mathbb{C})$ and $\mathrm{SL}(2,\mathbb{C})$ defined by
\begin{equation}
\label{eqn:202}
\tilde{\varphi}:f=\{a,b,c,d\}\in \mathrm{PSL}(2,\mathbb{C})\longrightarrow\tilde{\varphi}(f)
=\left(\begin{matrix}a & b\\
c & d\end{matrix}\right)\in\mathrm{SL}(2,\mathbb{C}).
\end{equation}
It is easy to show that, since the group operation of $\mathrm{PSL}(2,\mathbb{C})$ is the function 
composition $\circ$, given $f=\{a,b,c,d\}$ and $g=\{a',b',c',d'\}\in\mathrm{PSL}(2,\mathbb{C})$ we have
\begin{equation*}
f\circ g=\{aa'+bc',ab'+bd',ca'+dc',cb'+dd'\}\in\mathrm{PSL}(2,\mathbb{C}).
\end{equation*}
Then taking the images of $f$ and $g$ through $\tilde{\varphi}$,
\begin{equation*}
\tilde{\varphi}(f)=\left(\begin{matrix}a & b\\
c & d\end{matrix}\right),\hspace{1cm}\tilde{\varphi}(g)=\left(\begin{matrix}a' & b'\\
c' & d'\end{matrix}\right),
\end{equation*}
we have, since the operation in $\mathrm{SL}(2,\mathbb{C})$ is the ordinary matrix product,
\begin{equation}
\label{eqn:204}
\tilde{\varphi}(f)\cdot\tilde{\varphi}(g)=\left(\begin{matrix}aa'+bc' &ab'+bd'\\
ca'+dc' & cb'+dd'\end{matrix}\right)=\tilde{\varphi}(g\circ f).
\end{equation}
Note that, by virtue of \eqref{eqn:203}, to the same element $f$ there correspond, through $\tilde{\varphi}$, 
two different elements, $\tilde{\varphi}(f)$ and $-\tilde{\varphi}(f)$. If we consider now the map $\varphi$:
\begin{equation*}
\varphi:f=\{a,b,c,d\}\in \mathrm{PSL}(2,\mathbb{C})\longrightarrow\varphi(f)=\left(\begin{matrix}a & b\\
c & d\end{matrix}\right)\in\mathrm{SL}(2,\mathbb{C})/\mathbb{Z}_2.
\end{equation*}
it is clear that property \eqref{eqn:204} holds for $\varphi$ as well. This map is a group isomorphism, 
$\varphi(f)$ and $-\varphi(f)$ being now identified in $\mathrm{SL}(2,\mathbb{C})/\mathbb{Z}_2$. We can state
\begin{equation}
\label{eqn:205}
\mathrm{PSL}(2,\mathbb{C})\cong\mathrm{SL}(2,\mathbb{C})/\mathbb{Z}_2.
\end{equation}
The group $\mathrm{SL}(2,\mathbb{C})$ is the double covering of $\mathrm{PSL}(2,\mathbb{C})$. 
Furthermore, it is a well known result that
\begin{equation}
\label{eqn:210}
\mathscr{L}\cong\mathrm{SL}(2,\mathbb{C})/\mathbb{Z}_2,
\end{equation}
where $\mathscr{L}$ is the connected component of the Lorentz group. Thus
\begin{equation}
\label{eqn:211}
\mathrm{PSL}(2,\mathbb{C})\cong \mathscr{L}.
\end{equation}
We have the following
\begin{prop}$\\ $
The connected component of the Lorentz group is isomorphic with the subgroup $\mathrm{PSL}(2,\mathbb{C})$ of the BMS group.
\end{prop}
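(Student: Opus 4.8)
The plan is to obtain the proposition by transitivity from the two isomorphisms already in hand. Equation \eqref{eqn:205} establishes $\mathrm{PSL}(2,\mathbb{C})\cong\mathrm{SL}(2,\mathbb{C})/\mathbb{Z}_2$ through the map $\varphi$, while \eqref{eqn:210} asserts $\mathscr{L}\cong\mathrm{SL}(2,\mathbb{C})/\mathbb{Z}_2$. Composing the former with the inverse of the latter yields $\mathrm{PSL}(2,\mathbb{C})\cong\mathscr{L}$, which is precisely \eqref{eqn:211}. Thus the only substantive step is to justify \eqref{eqn:210}, the classical identification of the restricted Lorentz group with $\mathrm{SL}(2,\mathbb{C})/\mathbb{Z}_2$; everything else is formal, so I would spend the proof on that identification.

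To prove \eqref{eqn:210} I would construct the spinor map explicitly. First I would identify each real four-vector $x=(x^0,x^1,x^2,x^3)$ with the Hermitian matrix $X=x^\mu\sigma_\mu$, where $\sigma_0$ is the identity and $\sigma_1,\sigma_2,\sigma_3$ are the Pauli matrices, so that
\begin{equation*}
X=\begin{pmatrix} x^0+x^3 & x^1-ix^2 \\ x^1+ix^2 & x^0-x^3 \end{pmatrix},\qquad \det X=(x^0)^2-(x^1)^2-(x^2)^2-(x^3)^2.
\end{equation*}
Crucially $\det X$ equals the Minkowski square of $x$. For $A\in\mathrm{SL}(2,\mathbb{C})$ the assignment $X\mapsto AXA^\dagger$ sends Hermitian matrices to Hermitian matrices and, since $\det A=1$, preserves the determinant and hence the Minkowski norm; it therefore defines a real linear Lorentz transformation $\Lambda(A)$. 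A short check that $(AB)X(AB)^\dagger=A(BXB^\dagger)A^\dagger$ shows that $A\mapsto\Lambda(A)$ is a group homomorphism into the Lorentz group.

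Next I would compute the kernel and the image. Setting $\Lambda(A)=\mathrm{id}$ forces $AXA^\dagger=X$ for every Hermitian $X$; taking $X=I$ gives $AA^\dagger=I$, so $A$ is unitary, whence $AXA^{-1}=X$ and $A$ commutes with every Hermitian matrix. Since the identity and the Pauli matrices span all $2\times2$ complex matrices, $A$ is central, hence $A=\lambda I$ with $|\lambda|=1$, and $\det A=1$ leaves only $A=\pm I$. Thus the kernel is $\{\pm I\}\cong\mathbb{Z}_2$, matching the quotient in \eqref{eqn:210}. For the image I would argue that, since $\mathrm{SL}(2,\mathbb{C})$ is connected, $\Lambda$ lands in the identity component $\mathscr{L}$; because the kernel is discrete the homomorphism is a local diffeomorphism, and a real-dimension count (both $\mathrm{SL}(2,\mathbb{C})$ and $\mathscr{L}$ being six-dimensional) shows the image is open, while as the image of a group it is a subgroup, hence open and closed in the connected group $\mathscr{L}$ and therefore all of it. The first isomorphism theorem then gives $\mathrm{SL}(2,\mathbb{C})/\mathbb{Z}_2\cong\mathscr{L}$, which is \eqref{eqn:210}.

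The main obstacle I anticipate is the surjectivity onto the full connected component $\mathscr{L}$: the topological argument above is clean but relies on knowing that $\mathscr{L}$ is connected and six-dimensional, and if one prefers an elementary route one must instead exhibit explicit preimages of a generating set of boosts and rotations, for example $A=\mathrm{diag}(e^{\chi/2},e^{-\chi/2})$ for a boost along $x^3$ and $A=\mathrm{diag}(e^{i\theta/2},e^{-i\theta/2})$ for a rotation about $x^3$, and verify that these sweep out all of $\mathscr{L}$, which is more computational. Everything else, namely the homomorphism property and the kernel computation, is routine.
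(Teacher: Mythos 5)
Your proposal is correct and follows the paper's own route: the proposition is obtained exactly as in the text, by composing the isomorphism $\mathrm{PSL}(2,\mathbb{C})\cong\mathrm{SL}(2,\mathbb{C})/\mathbb{Z}_2$ of \eqref{eqn:205} with the classical identification \eqref{eqn:210} to get \eqref{eqn:211}. The only difference is that the paper simply cites \eqref{eqn:210} as a well-known result (pointing to the explicit matrix \eqref{eqn:iso} for the correspondence), whereas you supply the standard spinor-map proof of it --- the map $X=x^\mu\sigma_\mu$, $X\mapsto AXA^\dagger$, kernel $\{\pm I\}$, surjectivity by connectedness and dimension count --- which is a correct and self-contained filling-in of the step the paper leaves to the literature.
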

To make this isomorphism explicit take an element $f=\{a,b,c,d\}\in\mathrm{PSL}(2,\mathbb{C})$ and through 
$\varphi$ assign to it an element of $\mathrm{SL}(2,\mathbb{C})$,
\begin{equation*}
\varphi(f)=\left(\begin{matrix}a & b\\ c & d
\end{matrix}\right)\in\mathrm{SL}(2,\mathbb{C}),\hspace{1cm}ad-bc=1.
\end{equation*}
Then, using the isomorphism of \eqref{eqn:210} it can be shown with lengthy calculations \cite{Oblak16} 
that to $\varphi(f)$ there corresponds an element of $\mathscr{L}$ given by the matrix
\begin{equation}
\label{eqn:iso}
\Lambda(f)=
\end{equation}
\begin{equation*}
\left(\begin{matrix}\frac{1}{2}\left(|a|^2+|b|^2+|c|^2+|d|^2\right) & -\mathrm{Re}\left\{a\bar{b}+c\bar{d}\right\} 
& \mathrm{Im}\left\{a\bar{b}+c\bar{d}\right\} & \frac{1}{2}\left(|a|^2-|b|^2+|c|^2-|d|^2\right) \\ 
-\mathrm{Re}\left\{\bar{a}c+\bar{b}d\right\} & \mathrm{Re}\left\{\bar{a}d+\bar{b}c\right\} 
& -\mathrm{Im}\left\{a\bar{d}-b\bar{c}\right\} & - \mathrm{Re}\left\{\bar{a}c+\bar{b}d\right\}\\
\mathrm{Im}\left\{\bar{a}c+\bar{b}d\right\} & -\mathrm{Im}\left\{\bar{a}d+\bar{b}c\right\} 
& \mathrm{Re}\left\{a\bar{d}-b\bar{c}\right\} &  \mathrm{Im}\left\{\bar{a}c+\bar{b}d\right\}\\
\frac{1}{2}\left(|a|^2+|b|^2-|c|^2-|d|^2\right) & -\mathrm{Re}\left\{a\bar{b}-c\bar{d}\right\} 
& \mathrm{Im}\left\{a\bar{b}-c\bar{d}\right\} & \frac{1}{2}\left(|a|^2-|b|^2-|c|^2+|d|^2\right)
\end{matrix}\right).
\end{equation*}\\
At this stage, the relation between the Lorentz group and the sphere appears as a mere coincidence. 
In particular, since the original Lorentz group is defined by its linear action on a four-dimensional space, 
there is no reason for it to have anything to do with certain non-linear transformations of a two-dimensional 
manifold such as the sphere. However, it can be shown that this is not accidental. Following \cite{Oblak16}, 
we can suppose to perform a Lorentz transformation in Minkowski space-time 
equipped with standard coordinates $(t,x,y,z)$, i.e.
\begin{equation*}
x'^{\mu}=\Lambda^{\mu}_{\nu}x^{\nu},\hspace{1cm}\Lambda\in \mathscr{L}.
\end{equation*} 
We may introduce Bondi coordinates $(u,r,x^A)$ for Minkowski space-time as done in \eqref{eqn:135}. 
Then if we evaluate the limit for large values of the radial coordinate $r=\sqrt{x^2+y^2+z^2}$ 
keeping the value of $u=t-r$ fixed (i.e. on $\mathscr{I}$) and use the isomorphism \eqref{eqn:210} 
and hence \eqref{eqn:iso} we obtain the following behaviour:
\begin{equation*}
\zeta'=\frac{a\zeta+b}{c\zeta+d}+O(r^{-1}).
\end{equation*}
where
$\zeta=e^{i\phi}\cot\frac{\theta}{2}$. Furthermore it can be checked that both $u$ and $r$, under 
the effect of a Lorentz transformation on $\mathscr{I}$, undergo an angle-dependent rescaling. Hence 
we have obtained a fundamental result: Lorentz transformations acting on $\mathscr{I}$, expressed in 
terms of the parameters $a,b,c,d$ coincide with conformal transformations of $\mathrm{PSL}(2,\mathbb{C})$. 
Since asymptotically flat space-times have the same structure of a Minkowski space-time at infinity, 
this argument can be extended to all of them too. In the remainder we will use $\mathscr{L}$ to describe 
the group structure of $\mathscr{B}$, the isomorphism with $\mathrm{PSL}(2,\mathbb{C})$ being implicit.\\
Now we turn to analyse the transformations which involve a non-vanishing $\alpha(\theta,\phi)$. 
\begin{defn}$\\ $
\label{defn:suptra}
The Abelian subgroup of BMS transformations for which
\begin{equation}
\label{eqn:225}
\theta'=\theta,\hspace{0.5cm}\phi'=\phi,\hspace{0.5cm}u'=u+\alpha(\theta,\phi),
\end{equation}
is called \textit{supertranslation subgroup} and is denoted by $\mathscr{S}$.
\end{defn}
Under such a transformation the system of null hypersurfaces $u=\mathrm{const}$ is transformed into another 
system of null hypersurfaces $u'=\mathrm{const}$.  \\
To proceed further in the analysis of the structure of the BMS group we need to recall the concepts of 
\textit{right} and \textit{left} \textit{cosets} and, hence, that of \textit{normal subgroup} \cite{Algebra}. 
Consider a group $G$ and a subgroup $H$ of $G$. Introduce in $G$ the equivalence relation 
$\sim$ defined, for $g$, $a\in G$, as
\begin{equation*}
g\sim a \Longleftrightarrow ag^{-1}\in H\Longleftrightarrow a\in Hg.
\end{equation*}
It is easy to verify that the previous relation is reflexive, symmetric and transitive. 
\begin{defn}$\\ $
The equivalence class with respect to $\sim$ is called \textit{right coset of $H$ in $G$ with respect to $g$} 
and is denoted by $Hg$:
\begin{equation*}
[g]=\{hg : h\in H\}= Hg.
\end{equation*}
\end{defn}
Similarly, the \textit{left coset of $H$ in $G$ with respect to $g$} can be introduced as
\begin{equation*}
[g]^*=\{gh : h\in H\}=gH.
\end{equation*}
In general, right and left cosets are different sets.
\begin{defn}$\\ $
A subgroup  $N$ of $G$ which defines a unique partition,
\begin{equation*}
[g]=[g]^*\Longleftrightarrow gN=Ng\hspace{0.5cm}\forall g \in G
\end{equation*}
is called \textit{normal subgroup of} $G$. 
\end{defn}
Clearly it follows that for every $n\in N$ and $g\in G$ the product $gNg^{-1}\subseteq N$. Note that every group 
$G$ possesses normal subgroups, since $G$ and the identity are normal subgroups. \\
Now consider for a general subgroup $H$ of $G$ the \textit{quotient group} (or \textit{factor group}) $G/H$, defined as
\begin{equation*}
G/H=\{[g] : g\in G\}.
\end{equation*}
If $H$ is normal the elements of $G/H$ are, indistinctly, the right and left cosets. Furthermore, 
under this hypothesis, the set $G/H$ can be equipped with a group structure in a natural way by 
defining the product $\ast$:
\begin{align*}
\ast : &G/H\times G/H\longrightarrow G/H \\
&gH\ast g'H \equiv gg'H,
\end{align*}
i.e., 
\begin{equation*}
[g]\ast [g']\equiv [gg'].
\end{equation*}
It can be shown that $G/H$ equipped with the product $\ast$ satisfies the group axioms. We are now ready to 
discuss further the BMS properties.\\
Any element $b$ of $\mathscr{B}$ can be written as
\begin{equation*}
b=(\Lambda,\alpha).
\end{equation*}
Note that with this nomenclature any element $\Lambda$ of $\mathscr{L}$ (or, equivalently, of 
$\mathrm{PSL}(2,\mathbb{C})$) can be written as $\Lambda=(\Lambda,0)$ and any element $s$ of $\mathscr{S}$ 
as $s=(\mathbb{I},\alpha)$, where $\mathbb{I}$ denotes the identity in $\mathscr{L}$.\\ 
The action of $b$ on the variables $(\zeta,u)$ is 
\begin{equation*}
b(\zeta,u)=(f(\zeta),K[u+\alpha(\zeta,\bar{\zeta})]),
\end{equation*}
where $f$ is the element of $\mathrm{PSL}(2,\mathbb{C})$ which corresponds to $\Lambda$ through the above 
discussed isomorphism and $K$ is its conformal factor.\\
It is easy to show that, with this notation, we have for the inverse of $b$:
\begin{equation}
\label{eqn:208}
b^{-1}=(\Lambda^{-1},-K\alpha).
\end{equation}
On considering an element $s=(\mathbb{I},\beta)$ of $\mathscr{S}$ we have
\begin{equation*}
bsb^{-1}(\zeta,u)=(\zeta,u+K\beta(\zeta,\bar{\zeta}))=s'(\zeta,u),
\end{equation*}
with 
\begin{equation*}
s'=(\mathbb{I},K\beta)\in\mathscr{S}.
\end{equation*}
From the above discussion we have the following
\begin{prop}$\\ $
\label{prop:BMS1}
The supertranslations $\mathscr{S}$ form an Abelian normal, infinite-parameter, subgroup of the BMS group:
\begin{equation*}
b\mathscr{S}b^{-1}=\mathscr{S}\hspace{0.5cm}\mathrm{for}\hspace{1.2mm} \mathrm{all}\hspace{1.2mm} b\in\mathscr{B}.
\end{equation*}
\end{prop}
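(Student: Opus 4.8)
The plan is to split the statement into its two logically independent assertions—that $\mathscr{S}$ is an Abelian, infinite-parameter subgroup, and that it is normal—and to discharge each by organizing the composition and conjugation formulae already established above rather than by fresh computation. The group-theoretic scaffolding (cosets, normal subgroups, the conjugation characterization $b N b^{-1}=N$) has been set up precisely so that this proposition becomes a short consequence of the explicit action of $\mathscr{B}$ on $(\zeta,u)$.

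First I would verify the subgroup and Abelian properties directly from the composition law. Writing a supertranslation as $s=(\mathbb{I},\alpha)$ with $\alpha$ a suitably differentiable function on the sphere, composing two of them sends $u\mapsto u+\alpha\mapsto u+\alpha+\beta$, so that $(\mathbb{I},\alpha)(\mathbb{I},\beta)=(\mathbb{I},\alpha+\beta)$. This single identity displays closure, identifies the identity element as $(\mathbb{I},0)$ and the inverse of $(\mathbb{I},\alpha)$ as $(\mathbb{I},-\alpha)$, and, since addition of functions is commutative, exhibits $\mathscr{S}$ as Abelian. That the subgroup is infinite-parameter is then immediate, because $\alpha$ ranges over an infinite-dimensional space of functions on the $2$-sphere.

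Next I would establish normality. The forward inclusion $b\mathscr{S}b^{-1}\subseteq\mathscr{S}$ is exactly the content of the computation preceding the statement: for $b=(\Lambda,\alpha)$ and $s=(\mathbb{I},\beta)$ one finds $bsb^{-1}(\zeta,u)=(\zeta,u+K\beta(\zeta,\bar{\zeta}))$, whose angular part is the identity, so that $bsb^{-1}=(\mathbb{I},K\beta)\in\mathscr{S}$. To upgrade this to the equality $b\mathscr{S}b^{-1}=\mathscr{S}$ I would invoke the reverse inclusion obtained by applying the same argument to $b^{-1}$: since $b^{-1}\mathscr{S}b\subseteq\mathscr{S}$ holds for every $b\in\mathscr{B}$, conjugating by $b$ yields $\mathscr{S}\subseteq b\mathscr{S}b^{-1}$. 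Concretely, given $s'\in\mathscr{S}$ the element $s=b^{-1}s'b$ lies in $\mathscr{S}$ and satisfies $bsb^{-1}=s'$. Combining the two inclusions gives the claimed equality for all $b$, which is the definition of a normal subgroup.

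The main obstacle I anticipate is not the bookkeeping but confirming that the conjugation formula is internally consistent—that the composite $bsb^{-1}$ genuinely returns the angular coordinate to itself and produces a transformation of pure supertranslation type with parameter $K\beta$. This requires tracking carefully where $K$ and $\beta$ are evaluated under the successive maps, using $b^{-1}=(\Lambda^{-1},-K\alpha)$ from \eqref{eqn:208} and the fact that the $\mathrm{PSL}(2,\mathbb{C})$ parts compose to the identity in $bsb^{-1}$. Once this is secured, the final step is routine: because $K$ is a strictly positive, nowhere-vanishing conformal factor, the map $\beta\mapsto K\beta$ is a bijection of the admissible function space, so $K\beta$ is again a legitimate supertranslation parameter and no elements of $\mathscr{S}$ are lost under conjugation.
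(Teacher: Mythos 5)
Your proposal is correct and takes essentially the same route as the paper: the heart of the matter is the conjugation computation $bsb^{-1}(\zeta,u)=(\zeta,u+K\beta(\zeta,\bar{\zeta}))$, carried out with $b^{-1}=(\Lambda^{-1},-K\alpha)$ from \eqref{eqn:208}, which is exactly the calculation the paper performs immediately before stating the proposition. Your additional steps---verifying the Abelian subgroup axioms from $(\mathbb{I},\alpha)(\mathbb{I},\beta)=(\mathbb{I},\alpha+\beta)$ and upgrading the inclusion $b\mathscr{S}b^{-1}\subseteq\mathscr{S}$ to equality by conjugating with $b^{-1}$ (equivalently, by noting that $\beta\mapsto K\beta$ is a bijection since $K>0$)---only make explicit routine points that the paper leaves implicit.
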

Under the assumption that the function $\alpha$ is twice differentiable, we can expand it into spherical harmonics as
\begin{equation}
\label{eqn:206}
\alpha(\theta,\phi)=\sum_{l=0}^{\infty}\sum_{m=-l}^{l}\alpha_{l,m}Y_{l,m}(\theta,\phi),\hspace{1cm}\alpha_{l,-m}
=(-1)^m\bar{\alpha}_{l,m}.
\end{equation}
\begin{defn}$\\ $
If in decomposition \eqref{eqn:206} $\alpha_{l,m}=0$ for $l>2$, i.e.
\begin{equation}
\label{eqn:207}
\alpha\equiv\alpha_t=\epsilon_0+\epsilon_1\sin\theta\cos\phi+\epsilon_2\sin\theta\sin\phi+\epsilon_3\cos\theta,
\end{equation}
then the supertranslations reduce to a special case, called \textit{translation} subgroup, denoted by 
$\mathscr{T}$, with just four independent parameters $\epsilon_0,...,\epsilon_3$. 
\end{defn}
It is easy to show that $\zeta=e^{i\phi}\cot\frac{\theta}{2}$ implies
\begin{equation*}
\cos\phi=\frac{\zeta+\bar{\zeta}}{2\sqrt{\zeta\bar{\zeta}}},\hspace{1cm}\sin\phi
=\frac{i(\bar{\zeta}-\zeta)}{2\sqrt{\zeta\bar{\zeta}}},
\end{equation*}
\begin{equation*}
\cos\theta=\frac{\zeta\bar{\zeta}-1}{1+\zeta\bar{\zeta}},\hspace{1cm}\sin\theta
=\frac{2\sqrt{\zeta\bar{\zeta}}}{1+\zeta\bar{\zeta}}.
\end{equation*}
Then equation \eqref{eqn:207} becomes
\begin{equation*}
\alpha_t=\epsilon_0+\epsilon_1\frac{\zeta+\bar{\zeta}}{1+\zeta\bar{\zeta}}+\epsilon_2
\frac{(i\zeta-i\bar{\zeta})}{1+\zeta\bar{\zeta}}+\epsilon_3\frac{\zeta\bar{\zeta}-1}{1+\zeta\bar{\zeta}}
\end{equation*}
\begin{equation*}
=\frac{A+B\zeta+\bar{B}\bar{\zeta}+C\zeta\bar{\zeta}}{1+\zeta\bar{\zeta}},
\end{equation*}
with $A$ and $C$ real. 
Hence in terms of $\zeta$ and $\bar{\zeta}$ a translation is
\begin{equation*}
u=t-r\rightarrow u'=u+\frac{A+B\zeta+\bar{B}\bar{\zeta}+C\zeta\bar{\zeta}}{1+\zeta\bar{\zeta}},
\end{equation*}
\begin{equation*}
\zeta\rightarrow\zeta'=\zeta.
\end{equation*}
If we let $t,x,y,z$ be Cartesian coordinates in Minkowski space-time, it is easy to see that
\begin{equation*}
Z^2\zeta=\frac{(x+iy)(1-z/r)}{4r},\hspace{1cm}x=r(\zeta+\bar{\zeta})Z,
\end{equation*}
\begin{equation*}
y=-ir(\zeta-\bar{\zeta})Z,\hspace{1cm}z=r(\zeta\bar{\zeta}-1)Z,
\end{equation*}
where $Z=1/(1+\zeta\bar{\zeta})$. Now if we perform a translation
\begin{equation*}
t\rightarrow t'=t+a,\hspace{0.5cm}x\rightarrow x'=x+b,\hspace{0.5cm}y\rightarrow y'=y+c,
\hspace{0.5cm}z\rightarrow z'=z+d,
\end{equation*}
it is easy to get 
\begin{align*}
&u=t-r\rightarrow u'=u+Z(A+B\zeta+\bar{B}\bar{\zeta}+C\zeta\bar{\zeta})+O(r^{-1}),\\
&\zeta\rightarrow\zeta'=\zeta+O(r^{-1}).
\end{align*}
with $A=a+d$, $B=b-ic$ and $C=a-d$. Thus, the nomenclature \lq translation\rq\hspace{0.1mm} is consistent 
with that for the space-time translations in Minkowski space-time. In fact we have just shown that any 
translation in the ordinary sense induces a translation (i.e. an element of $\mathscr{T}$) on $\mathscr{I}^+$.\\
It is easy to verify that for any $b=(\Lambda,\alpha)\in\mathscr{B}$ and for any 
$t=(\mathbb{I},\alpha_t)\in\mathscr{T}$ we have 
\begin{equation*}
btb^{-1}(\zeta,u)=(\zeta,u+K\alpha_t)=t'(\zeta,u),
\end{equation*}
with 
\begin{equation*}
t'=(\mathbb{I},K\alpha_t(\zeta,\bar{\zeta}))\in\mathscr{T}.
\end{equation*}
Note that it is not obvious that $K\alpha_t$ is still a function of $\theta$ and $\phi$ containing only 
zeroth- and first-order harmonics. A proof of this result will be given in Sect. \ref{sect:8}. 
On taking for the moment this result for true, the following proposition holds:
\begin{prop}$\\ $
The translations $\mathscr{T}$ form a normal four-dimensional subgroup of $\mathscr{B}$:
\begin{equation*}
b\mathscr{T}b^{-1}=\mathscr{T}\hspace{0.5cm}\mathrm{for}\hspace{1.2mm} \mathrm{all}\hspace{1.2mm} b\in\mathscr{B},
\end{equation*}
and clearly
\begin{equation*}
s\mathscr{T}s^{-1}=\mathscr{T}\hspace{0.5cm}\mathrm{for}\hspace{1.2mm} \mathrm{all}\hspace{1.2mm} s\in\mathscr{S}.
\end{equation*}
\end{prop}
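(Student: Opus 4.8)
The plan is to verify the three assertions bundled into the statement—that $\mathscr{T}$ is a subgroup, that it is four-dimensional, and that it is normal in $\mathscr{B}$—in turn, leaning on the conjugation formula $btb^{-1}=(\mathbb{I},K\alpha_t)$ already established just above, and isolating its one genuinely non-trivial ingredient, which I would defer to Sect.~\ref{sect:8}.

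First I would record that $\mathscr{T}$ is a subgroup and count its parameters at the same time. Every element has the form $t=(\mathbb{I},\alpha_t)$ with $\alpha_t$ lying in the real four-dimensional space $V$ spanned by $\{1,\sin\theta\cos\phi,\sin\theta\sin\phi,\cos\theta\}$, i.e. the $l=0$ and $l=1$ real spherical harmonics of \eqref{eqn:207}. Since the Lorentz part is the identity for every such element, composition inside $\mathscr{S}$ reduces to addition of the supertranslation functions, $(\mathbb{I},\alpha_t)(\mathbb{I},\alpha_t')=(\mathbb{I},\alpha_t+\alpha_t')$, and inversion to $(\mathbb{I},\alpha_t)^{-1}=(\mathbb{I},-\alpha_t)$. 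Because $V$ is a linear space it is closed under these operations and contains $0$, so $\mathscr{T}\cong(V,+)\cong\mathbb{R}^4$ is an Abelian subgroup with exactly the four real parameters $\epsilon_0,\dots,\epsilon_3$; this settles the subgroup property and the dimension count together.

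Next I would turn to normality under the full group. The computation $btb^{-1}(\zeta,u)=(\zeta,u+K\alpha_t)$ displayed above shows that conjugation of $t=(\mathbb{I},\alpha_t)$ by an arbitrary $b=(\Lambda,\alpha)\in\mathscr{B}$ produces $t'=(\mathbb{I},K\alpha_t)$, whose Lorentz part is again trivial. Hence $b\mathscr{T}b^{-1}\subseteq\mathscr{T}$ holds \emph{provided} $K\alpha_t$ again belongs to $V$; applying the same inclusion to $b^{-1}$ then gives $\mathscr{T}\subseteq b\mathscr{T}b^{-1}$ and therefore the equality $b\mathscr{T}b^{-1}=\mathscr{T}$. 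The hard part is precisely this proviso: that multiplication by the conformal factor $K$ of \eqref{eqn:170} maps the four-dimensional space $V$ into itself. This is exactly the point flagged as non-obvious above, and I would defer its verification to Sect.~\ref{sect:8}, the idea being to rewrite $\alpha_t=(A+B\zeta+\bar{B}\bar{\zeta}+C\zeta\bar{\zeta})/(1+\zeta\bar{\zeta})$ and to exploit the common denominator $1+\zeta\bar{\zeta}$ of $\alpha_t$ and $K$ so that the product $K\alpha_t$ retains the same $l\le 1$ form.

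Finally, the statement for $s\in\mathscr{S}$ is immediate and needs no new input. Since $\mathscr{T}\subset\mathscr{S}$ and $\mathscr{S}$ is Abelian by Proposition~\ref{prop:BMS1}, every $s\in\mathscr{S}$ commutes with every $t\in\mathscr{T}$, so $sts^{-1}=t$ and $s\mathscr{T}s^{-1}=\mathscr{T}$. Equivalently, setting $\Lambda=\mathbb{I}$ and hence $K=1$ in the conjugation formula collapses $t'=(\mathbb{I},K\alpha_t)$ back to $t$ itself, so the only substantive content of the whole proposition rests on the deferred invariance of $V$ under multiplication by $K$.
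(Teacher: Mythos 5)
Your proof is correct and takes essentially the same approach as the paper: it rests on the conjugation formula $btb^{-1}=(\mathbb{I},K\alpha_t)$, defers the one non-trivial ingredient---that multiplication by the conformal factor $K$ preserves the four-dimensional span of the $l\leq 1$ spherical harmonics---to Sect.~\ref{sect:8} exactly as the paper does, and disposes of the $s\in\mathscr{S}$ case by the commutativity of supertranslations. Your additions (the explicit verification that $\mathscr{T}\cong\mathbb{R}^4$ is a subgroup, and the two-sided inclusion argument upgrading $b\mathscr{T}b^{-1}\subseteq\mathscr{T}$ to equality) merely make explicit what the paper leaves implicit.
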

We have the following inclusion relations:
\begin{equation*}
\mathscr{T}\subset\mathscr{S}\subset\mathscr{B}.
\end{equation*}
The next step will be to investigate the group structure of $\mathscr{B}$. It is easy to show 
that for any $b\in\mathscr{B}$ there exists a unique $\Lambda\in\mathscr{L}$ and $s\in\mathscr{S}$ 
such that $b=\Lambda s$. In fact given 
\begin{equation*}
\Lambda=(\Lambda,0)\in\mathscr{L},\hspace{1cm}s=(\mathbb{I},\alpha)\in\mathscr{S},
\end{equation*}
we have that
\begin{equation*}
\Lambda s(\zeta,u)=\Lambda(\zeta,u+\alpha)=(f(\zeta),K[u+\alpha(\zeta,\bar{\zeta})])=b(\zeta,u)
\end{equation*}
with $b=(\Lambda,\alpha)$. The uniqueness results from the observation that $\mathscr{L}\cap\mathscr{S}=\{e\}$ where 
$e=(\mathbb{I},0)$ is the identity in $\mathscr{B}$, since if $fs=f's'$, then $f'^{-1}f=s's^{-1}\in\mathscr{L}\cap\mathscr{S}$ 
implying $f'=f$ and $s'=s$. Hence we have
\begin{equation}
\label{eqn:233}
\mathscr{B}=\mathscr{L}\mathscr{S}.
\end{equation}
Furthermore, the supertranslations $\mathscr{S}$ form an (Abelian) normal subgroup of $\mathscr{B}$, according 
to \ref{prop:BMS1}. Thus we can already state that, by definition of semi-direct product, 
\begin{prop}$\\ $
The BMS group is a semi-direct product of the conformal group of the unit $2$-sphere with the 
supertranslations group, i.e.
\begin{equation*}
\mathscr{B}=\mathscr{L}\rtimes\mathscr{S}.
\end{equation*}
\end{prop}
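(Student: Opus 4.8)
The plan is to verify directly the four defining conditions of an internal semi-direct product, almost all of which have already been assembled in the preceding discussion. Recall that a group $G$ is the internal semi-direct product of a subgroup $H$ with a normal subgroup $N$ provided (i) $N$ is normal in $G$, (ii) $H$ is a subgroup of $G$, (iii) $G=HN$, and (iv) $N\cap H=\{e\}$. Here I would take $N=\mathscr{S}$ and $H=\mathscr{L}$. Condition (i) is precisely Proposition \ref{prop:BMS1}, namely $b\mathscr{S}b^{-1}=\mathscr{S}$ for all $b\in\mathscr{B}$. Condition (ii) holds because $\mathscr{L}\cong\mathrm{PSL}(2,\mathbb{C})$ is the conformal group of the sphere, already shown to be a subgroup realised by the transformations with $\alpha=0$. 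Condition (iii) is exactly \eqref{eqn:233}, $\mathscr{B}=\mathscr{L}\mathscr{S}$, and condition (iv) is the observation $\mathscr{L}\cap\mathscr{S}=\{e\}$ established immediately above \eqref{eqn:233} from the uniqueness of the factorisation $b=\Lambda s$.

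Since the set-theoretic content is thus already in place, the substance of the proof is to exhibit explicitly the automorphism action that turns the product $\mathscr{L}\mathscr{S}$ into a genuine semi-direct product. For this I would read off, from the conjugation computation performed for Proposition \ref{prop:BMS1}, the map $\Phi:\mathscr{L}\to\mathrm{Aut}(\mathscr{S})$ sending $\Lambda$ to the automorphism $\beta\mapsto K\beta$, where $K=K(\zeta,\bar{\zeta})$ is the conformal factor \eqref{eqn:170} attached to $\Lambda$. One checks that $\beta\mapsto K\beta$ is an automorphism of the abelian group $\mathscr{S}$ (it is additive in $\beta$ and invertible, its inverse being multiplication by $K^{-1}$), and that $\Phi$ is a group homomorphism because the conformal factors multiply under composition, as recorded in the cocycle relation $K_{13}=K_{12}K_{23}$ appearing in the proof that $\mathscr{B}$ is a group.

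Finally, I would confirm that the group law of $\mathscr{B}$ reproduces the standard semi-direct product multiplication. Writing $b_i=\Lambda_i s_i$ and using normality of $\mathscr{S}$,
\begin{equation*}
b_1 b_2=\Lambda_1 s_1\Lambda_2 s_2=(\Lambda_1\Lambda_2)\bigl(\Lambda_2^{-1}s_1\Lambda_2\bigr)s_2,
\end{equation*}
where $\Lambda_2^{-1}s_1\Lambda_2=\Phi(\Lambda_2^{-1})(s_1)\in\mathscr{S}$, so the $\mathscr{L}$-part composes by the group law of $\mathscr{L}$ while the $\mathscr{S}$-part composes after twisting by $\Phi(\Lambda_2^{-1})$; this is precisely the multiplication rule of the semi-direct product $\mathscr{L}\rtimes\mathscr{S}$. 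The only point requiring genuine care — and the one I would treat as the main obstacle — is the well-definedness of $\Phi$, i.e. verifying that $K\beta$ is again an admissible supertranslation function (a suitably differentiable function on the sphere) and that the assignment $\Lambda\mapsto(\beta\mapsto K\beta)$ is consistent with the identification $\mathscr{L}\cong\mathrm{PSL}(2,\mathbb{C})$; everything else is a direct transcription of \eqref{eqn:233} and Proposition \ref{prop:BMS1}.
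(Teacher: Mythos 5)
Your proposal is correct and follows essentially the same route as the paper: the decomposition $\mathscr{B}=\mathscr{L}\mathscr{S}$ of \eqref{eqn:233}, the trivial intersection $\mathscr{L}\cap\mathscr{S}=\{e\}$, and the normality of $\mathscr{S}$ from Proposition \ref{prop:BMS1} are exactly the ingredients the paper assembles before invoking the definition of semi-direct product, and your explicit check of the twisted multiplication law matches the action $\sigma$ the paper records afterwards in \eqref{eqn:212}. One notational caution: the automorphism attached to $\Lambda$ is not bare multiplication $\beta\mapsto K\beta$ but $\beta\mapsto K\cdot(\beta\circ f^{-1})$ — the argument on the sphere must also be moved by the M\"obius map, as in the paper's right action $\sigma_{\Lambda}(\alpha)=K^{-1}\,\alpha\circ f$ — since the conformal factors satisfy the cocycle relation $K_{13}(\zeta)=K_{12}(\zeta)\,K_{23}(\zeta')$ with $K_{23}$ evaluated at the transformed point; your appeal to $K_{13}=K_{12}K_{23}$ silently uses this, so the fix is notational rather than structural (and the paper's own schematic writing $s'=(\mathbb{I},K\beta)$ has the same imprecision).
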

We can say more by specifying an action of $\mathscr{L}$ on $\mathscr{S}$ and, hence, by specifying a product rule 
for two elements of $\mathscr{B}$. Let $\mathscr{S}$ be the vector space of real functions on the Riemann sphere. 
Let $\sigma$ be a smooth right action of $\mathscr{L}$ on $\mathscr{S}$ defined as 
\begin{subequations}
\label{eqn:212}
\begin{equation}
\sigma:(\Lambda,\alpha)\in\mathscr{L}\times\mathscr{S}\longrightarrow\sigma_{\Lambda}(\alpha)
\equiv \alpha\Lambda\in\mathscr{S}
\end{equation}
such that 
\begin{equation}
\alpha(\zeta,\bar{\zeta})\Lambda=K^{-1}\alpha(f(\zeta),\bar{f}(\bar{\zeta})),
\end{equation}
\end{subequations}
where $K$ is the conformal factor associated with $f$, the element of $\mathrm{PSL}(2,\mathbb{C})$ that 
corresponds to $\Lambda$. Then it is easy to verify that the composition law for the elements of $\mathscr{B}$ is 
\begin{equation*}
b_1\cdot b_2=(\Lambda_1,\alpha_1)\cdot(\Lambda_2,\alpha_2)=(\Lambda_1\cdot\Lambda_2,\alpha_2+\alpha_1f_2).
\end{equation*}
Note that the inverse of $b$ in \eqref{eqn:208} may be written as
\begin{equation*}
b^{-1}=(\Lambda^{-1},\left[\alpha \Lambda^{-1}\right]^{-1}).
\end{equation*}
Thus the BMS group is the \textit{right} semi-direct product \cite{Col14} of $\mathscr{L}$ with $\mathscr{S}$ 
under the action $\sigma$, i.e.
\begin{equation}
\label{eqn:209}
\mathscr{B}=\mathscr{L}\rtimes_{\sigma}\mathscr{S}.
\end{equation}
Historically, this semi-direct product structure was realized by \cite{Cant}. Succesively this idea was developed 
by \cite{GerBMS} who gave an incorrect formula for the action \eqref{eqn:212}. Eventually the mistake was amended 
by \cite{MC2}, who defined a good action to describe the semi-direct product structure of the BMS group. 
However, the idea used here of giving a right action and hence of describing the BMS group as a \textit{right} 
semi-direct product was not developed by any of these authors and it is an original contribution of our work.\\
Furthermore, it can be shown that from the above discussion it follows that, by virtue 
of the first isomorphism theorem,
\begin{equation}
\label{eqn:234}
\mathscr{L}\cong\mathscr{B}/\mathscr{S},
\end{equation}
i.e. $\mathscr{L}$ is the factor group of $\mathscr{B}$ with respect to its normal subgroup $\mathscr{S}$. 
\begin{oss}$\\ $
Note that the structure of the BMS group is similar to that of the Poincar\'e group, denoted by $\mathscr{P}$. 
In fact the Poincar\'e group can be expressed as the semi-direct product of the connected component 
of the Lorentz group $\mathscr{L}$ and the translations group $\mathscr{T}$, the former being the factor group of 
$\mathscr{P}$ with respect to the latter, i.e. $\mathscr{L}\cong\mathscr{P}/\mathscr{T}$. The action of 
$\mathscr{L}$ on $\mathscr{T}$ is the \lq natural\rq\hspace{0.1mm} one, i.e. the usual multiplication of 
an element $\Lambda\in \mathscr{L}$ with a vector $b\in\mathscr{T}$.
\end{oss}
\begin{thm}$\\ $
If $N'$ is a 4-dimensional normal subgroup of $\mathscr{B}$, then $N'$ is contained in $\mathscr{S}$.
\end{thm}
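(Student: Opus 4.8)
The plan is to exploit the semi-direct product structure $\mathscr{B}=\mathscr{L}\rtimes_{\sigma}\mathscr{S}$ established in \eqref{eqn:209}, together with the simplicity of the connected Lorentz group, and to reduce the whole statement to a comparison of dimensions. The central tool is the canonical projection
\[
\pi:\mathscr{B}\longrightarrow\mathscr{B}/\mathscr{S}\cong\mathscr{L},\qquad \pi(\Lambda,\alpha)=\Lambda,
\]
which is a surjective group homomorphism with kernel exactly $\mathscr{S}$; its existence and the identification of the quotient with $\mathscr{L}$ are precisely the content of \eqref{eqn:234}, and rest on the fact, proved earlier, that $\mathscr{S}$ is a normal subgroup of $\mathscr{B}$ (Proposition \ref{prop:BMS1}).

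First I would observe that, since $N'$ is normal in $\mathscr{B}$ and $\pi$ is a surjective homomorphism, the image $\pi(N')$ is a normal subgroup of $\mathscr{L}$; this is the standard fact that surjective homomorphisms carry normal subgroups to normal subgroups. The key step is then a dimension count: the restriction $\pi|_{N'}:N'\to\mathscr{L}$ is a smooth homomorphism of Lie groups, so its image is a subgroup of $\mathscr{L}$ whose dimension cannot exceed $\dim N'=4$. Since $\dim\mathscr{L}=\dim\mathrm{PSL}(2,\mathbb{C})=6$ by the discussion following \eqref{eqn:205}, we obtain $\dim\pi(N')\leq 4<6$, so that $\pi(N')$ is a \emph{proper} subgroup of $\mathscr{L}$.

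The final ingredient is simplicity. The group $\mathscr{L}\cong\mathrm{PSL}(2,\mathbb{C})$ is simple, its only normal subgroups being $\{e\}$ and the whole of $\mathscr{L}$ (equivalently, it is centreless and has simple real Lie algebra $\mathfrak{sl}(2,\mathbb{C})$). A proper normal subgroup must therefore be trivial, so $\pi(N')=\{e\}$, and this forces $N'\subseteq\ker\pi=\mathscr{S}$, which is exactly the assertion.

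The main obstacle is really conceptual rather than computational: simplicity leaves open only the two extremes $\pi(N')=\{e\}$ and $\pi(N')=\mathscr{L}$, and it is the dimension bound $4<6$ that rules out the second, surjective, alternative. The one technical point I would treat with care is the possibility that $N'$ be disconnected. In that case $\pi(N')$ need not be connected, and the Lie-group dimension estimate controls only its identity component; however, a discrete normal subgroup of a connected group is central, and $\mathrm{PSL}(2,\mathbb{C})$ has trivial centre, so any residual discrete part of $\pi(N')$ also collapses to $\{e\}$ and the conclusion $N'\subseteq\mathscr{S}$ stands unchanged.
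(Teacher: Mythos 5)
Your proof is correct and follows essentially the same route as the paper's: both project $N'$ through the homomorphism $\mathscr{B}\rightarrow\mathscr{B}/\mathscr{S}\cong\mathscr{L}$, observe that the image is normal in $\mathscr{L}$, invoke the simplicity of the connected Lorentz group so that only the trivial and full alternatives survive, and eliminate the full alternative by the dimension count $4<6$, concluding $N'\subseteq\ker\pi=\mathscr{S}$. Your explicit treatment of a possibly disconnected $N'$ (a discrete normal subgroup of a connected group is central, and $\mathrm{PSL}(2,\mathbb{C})$ is centreless) is a minor refinement of a point the paper leaves implicit.
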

\begin{proof}
Consider the image $N'/\mathscr{S}$ of $N'$ under the homomorphism $\mathscr{B}\rightarrow\mathscr{B}/\mathscr{S}$. 
Since $N'$ by hypothesis is a normal subgroup of $\mathscr{B}$, $N'/\mathscr{S}$ is a normal subgroup of 
$\mathscr{B}/\mathscr{S}$ and hence, by proposition \ref{prop:BMS1}, a subgroup of the connected component 
of the Lorentz group $\mathscr{L}$. However, the only normal subgroups of $\mathscr{L}$ are $\mathscr{L}$ 
itself and the identity $e$ of $\mathscr{L}$. Then $N'$ must be $6$-dimensional, contrary to hypothesis. 
Therefore $N'/\mathscr{S}=e$; $N'$ is thus contained in $\mathscr{S}$. 
\end{proof}
\section{BMS Lie Algebra}
\label{sect:7}
In this section we are going to investigate the Lie Algebra of the BMS group. At first we consider the 
generators of $\mathrm{PSL}(2,\mathbb{C})$. For an infinitesimal conformal transformation we know 
that the $x^A$ coordinates change as
\begin{equation*}
x^A\rightarrow x'^A=x^A+f^A,
\end{equation*}
where $f^A$ is a conformal Killing vector of the unit 2-sphere. Furthermore, from 
\eqref{eqn:220} and taking into account \eqref{eqn:214} an infinitesimal transformation for $u$ reads as
\begin{equation*}
u\rightarrow u'=Ku=e^{\frac{1}{2}D_Af^A}u\simeq u+\frac{u}{2}D_Af^A.
\end{equation*}
Thus, the generator of transformation \eqref{eqn:220} is
\begin{equation}
\label{eqn:221}
\xi_R=f^A\partial_A+\frac{u}{2}D_Af^A\partial_u.
\end{equation}
To see how their Lie algebra closes, consider the Lie bracket of two of them, $\xi_{R_1}$ and $\xi_{R_2}$:
\begin{equation*}
[\xi_{R_1},\xi_{R_2}]=\left[f_1^A\partial_A+\frac{u}{2}D_Bf_1^B\partial_u,f_2^C\partial_C
+\frac{u}{2}D_Cf_2^C\partial_u\right]
\end{equation*}
\begin{equation*}
=(f_1^A\partial_Af_2^C-f_2^A\partial_Af_1^C)\partial_C+\frac{u}{2}(f_1^A\partial_AD_Cf_2^C
-f_2^C\partial_CD_Bf_1^B)\partial_u.
\end{equation*}
After some calculation the term proportional to $\partial_u$ becomes
\begin{equation*}
\frac{u}{2}D_C(f_1^A\partial_Af_2^C-f_2^A\partial_Af_1^C)+\frac{u}{2}(\partial_A\Gamma^{C}{}_{CB})
(f_1^Af_2^B-f_1^Bf_2^A).
\end{equation*}
The last term in the previous equation vanishes since it can be shown by direct calculation 
that for the metric $q_{AB}$
\begin{equation*}
\partial_A\Gamma^{C}{}_{CB}=-\frac{1}{\sin^2\theta}\delta^{\theta}_{A}\delta_B^{\theta},
\end{equation*}
and hence
\begin{equation*}
\left(\partial_A\Gamma^{C}{}_{CB}\right)(f_1^Af_2^B-f_1^Bf_2^A)=-\frac{1}{\sin^2\theta}(f_1^{\theta}f_2^{\theta}
-f_1^{\theta}f_2^{\theta})=0.
\end{equation*}
Finally we get 
\begin{subequations}
\label{eqn:222}
\begin{equation}
\label{eqn:223}
[\xi_{R_1},\xi_{R_2}]=\xi_{\hat{R}}=\hat{f}^A\partial_A+\frac{u}{2}D_A\hat{f}^A\partial_u
\end{equation}
where
\begin{equation}
\label{eqn:224}
\hat{f}^A=f_1^B\partial_Bf_2^A-f_2^B\partial_Bf_1^A.
\end{equation}
\end{subequations}
We take now into account the generators of supertranslations. It is clear from \eqref{eqn:225} that these are
\begin{equation*}
\xi_T=\alpha\partial_u.
\end{equation*}
where $\alpha$ is an arbitrary function of $\theta$ and $\phi$. It follows that the Lie bracket of two 
generators, $\xi_{T_1}$ and $\xi_{T_2}$ vanish, i.e.
\begin{equation}
\label{eqn:228}
[\xi_{T_1},\xi_{T_2}]=0,
\end{equation}
that is just a restatement that the supertranslation group is Abelian. The only thing left to do is to 
calculate the Lie bracket of $\xi_{R}$ and $\xi_{T}$. It is easy to see that 
\begin{subequations}
\label{eqn:226}
\begin{equation}
\label{eqn:227}
[\xi_R,\xi_T]=[f^A\partial_A+\frac{u}{2}D_Bf^B\partial_u,\alpha\partial_u]
=\xi_{\hat{T}}=\hat{\alpha}\partial_u,
\end{equation}
where
\begin{equation}
\label{eqn:228}
\hat{\alpha}=f^A\partial_A\alpha-\frac{\alpha}{2}D_Bf^B.
\end{equation}
\end{subequations}
If we consider now $\xi$ as defined in \eqref{eqn:158} it turns out that $\xi=\xi_R+\xi_T$. 
From the above discussions one obtains that
\begin{eqnarray}
\label{eqn:229}
[\xi_1,\xi_2]&=& [\xi_{R_1},\xi_{R_2}]+[\xi_{R_1},\xi_{T_2}]+[\xi_{T_1},\xi_{R_2}]
\nonumber \\
&=& \hat{f}^A\partial_A+\frac{u}{2}D_A\hat{f}^A\partial_u+(\hat{\alpha}_{2}-\hat{\alpha}_1)\partial_u,
\end{eqnarray}
with 
\begin{equation*}
\hat{\alpha}_{2}=f^A_1\partial_A\alpha_2-\frac{\alpha_2}{2}D_Bf^B_1,\hspace{1cm}\hat{\alpha}_{1}
=f^A_2\partial_A\alpha_1-\frac{\alpha_1}{2}D_Bf^B_2.
\end{equation*}
To sum up, the Lie algebra of the BMS group, $\mathfrak{bms_4}$, is
\begin{align*}
&[\xi_{R_1},\xi_{R_2}]=\xi_{\hat{R}},\hspace{0.5cm}\mathrm{with}\hspace{0.5cm}\hat{f}^A
=f_1^B\partial_Bf_2^A-f_2^B\partial_Bf_1^A; \\
&[\xi_{T_1},\xi_{T_2}]=0;\\
&[\xi_R,\xi_T]=\xi_{\hat{T}},\hspace{0.5cm}\mathrm{with}\hspace{0.5cm}\hat{\alpha}
=f^A\partial_A\alpha-\frac{\alpha}{2}D_Bf^B.
\end{align*}
Since, as shown in the previous section, the BMS group is a semi-direct product, it follows that the 
BMS Lie algebra, $\mathfrak{bms_4}$, should be taken to be the semi-direct sum of the Lie algebra of 
conformal Killing vectors $X=f^A\partial_A$ of the Riemann sphere, denoted by $\mathfrak{so(3,1)}$ (since it 
can be taken to be the algebra of $\mathscr{L}$) with that of the functions $\alpha(x^A)$ on the Riemann 
sphere, which we denote by $\mathscr{S}$, the supertranslation group being Abelian. Given an element 
$X=f^A\partial_A\in\mathfrak{so(3,1)}$ ($f^A$ being a generator of conformal transformations in \eqref{eqn:157}) 
we know that the exponential map associated to it, $e^{X}$, is an element of $\mathscr{L}$. 
Then consider the 1-parameter group of transformations in $\mathscr{S}$ defined as
\begin{equation}
\label{eqn:219}
\sigma_{e^{tX}}(\alpha)=\alpha e^{tX},
\end{equation}
where $\sigma$ is that of \eqref{eqn:212}. Consider the map
\begin{equation*}
\Sigma:f^A\partial_A\in\mathfrak{so(3,1)}\longrightarrow\Sigma_{f^A\partial_A}\in\mathrm{End}\mathscr{S},
\end{equation*}
such that 
\begin{equation*}
\Sigma_{f^A\partial_A}:\alpha\in\mathscr{S}\longrightarrow\Sigma_{f^A\partial_A}(\alpha)=\frac{d}{dt}	
\left.(\sigma_{e^{tf^A\partial_A}}(\alpha))\right|_{t=0}\in\mathscr{S}.
\end{equation*}
\\
Note that $\Sigma_{f^A\partial_A}(\alpha)$ is the infinitesimal generator of \eqref{eqn:219}. 
Hence \cite{OblakBMS} we have 
\begin{equation}
\label{eqn:213}
\mathfrak{bms_4}=\mathfrak{so(3,1)}\oplus_{\Sigma}\mathscr{S}.
\end{equation}
The Lie algebra $\mathfrak{bms_4}$ is determined by three arbitrary functions $f^A$ and $\alpha$ on the circle. 
Thus, defining $X=f^A\partial_A$ and labelling elements of \eqref{eqn:213} as pairs $(X,\alpha)$, 
we know that the Lie bracket in $\mathfrak{so(3,1)}\oplus_{\Sigma}\mathscr{S}$ are
\begin{equation}
\label{eqn:215}
[(X_1,\alpha_1),(X_2,\alpha_2)]=([X_1,X_2],\Sigma_{f_1^A\partial_A}(\alpha_2)-\Sigma_{f^B_2\partial_B}(\alpha_1)).
\end{equation}
Equation \eqref{eqn:215} follows from the fact that $\mathscr{S}$ is Abelian, otherwise there would be an extra 
term involving the commutator of the two elements $\alpha_1$ and $\alpha_2$.
Since we have, using \eqref{eqn:212} and \eqref{eqn:214}
\begin{equation*}
\Sigma_{f^A\partial_A}(\alpha)(x^B)=\frac{d}{dt}\left.(K^{-1}_{e^{tf^A\partial_A}}\alpha
(e^{tf^C\partial_C}x^B))\right|_{t=0}=\frac{d}{dt}\left.(e^{-\frac{1}{2}tD_Af^A}
\alpha(e^{tf^C\partial_C}x^B))\right|_{t=0}
\end{equation*}
\begin{equation*}
=-\frac{\alpha}{2}D_Af^A+ f^B\partial_B\alpha,
\end{equation*}
then \eqref{eqn:215} may be written as
\begin{subequations}
\label{eqn:230}
\begin{equation}
\label{eqn:231}
[(X_1,\alpha_1),(X_2,\alpha_2)]=(\hat{X},\hat{\alpha}),
\end{equation}
with 
\begin{equation}
\label{eqn:232}
\hat{f}^A=f_1^B\partial_Bf_2^A-f_2^B\partial_Bf_1^A,\hspace{1cm}\hat{\alpha}=f_1^B\partial_B\alpha_2
-\frac{\alpha_2}{2}D_Af^A_1-(1\leftrightarrow2),
\end{equation}
\end{subequations}
as remarked in \cite{Barn2010,Barn2010a}.
\begin{oss}$\\ $
Note that this result, obtained from the theory of semi-direct product of groups and their Lie algebra, 
is in complete agreement with that obtained just by looking at the generators, expressed in \eqref{eqn:229}. 
Note also that $\hat{f}^A$ of \eqref{eqn:232} coincides with that of \eqref{eqn:224} and that 
$\hat{\alpha}=\hat{\alpha}_2-\hat{\alpha}_1$.
\end{oss}
Depending on the space of functions under consideration, there are many options which define what is actually 
meant by $\mathfrak{bms_4}$. The approach that will be followed in this work is originally due to 
\cite{Sachs1} and successively amended by \cite{Ant91}. Another approach, based on the Virasoro algebra, 
can be found in \cite{Barn2010}. \\
In general, we consider any $S$-dimensional Lie transformation group of a $R$-dimensional space. Let 
the coordinates of the space be $y^{\alpha}$ $(\alpha=1,..,R)$ and the parameters of the group be 
$z^{\mu}$ $(\mu=1,...,S)$, where $z^{\mu}=0$ is the identity of the group. Then the transformations have the form 
\begin{equation*}
y'^{\alpha}=f^{\alpha}(y^{\beta};z^{\mu}),\hspace{1cm}\mathrm{where}\hspace{1cm}f^{\alpha}(y^{\beta};0)=y^{\alpha}.
\end{equation*}
The functions $f^{\alpha}$ are assumed to be twice differentiable. The $S$ generators of the group are 
the vector fields given by
\begin{equation}
\label{eqn:217}
P_{\mu}=\left.\frac{\partial f^{\alpha}}{\partial z^{\mu}}\right|_{z^{\mu}=0}\frac{\partial}{\partial y^{\alpha}}.
\end{equation}
Applying these ideas to the BMS group,with the Sachs notation \cite{Sachs1}, one finds for the supertranslations, using the expansion \eqref{eqn:206}:
\begin{equation*}
P_{l,m}=Y_{l,m}(\theta,\phi)\frac{\partial}{\partial u},\hspace{1cm}P_{l,m}=(-1)^m\bar{P}_{l,-m},
\end{equation*}
and hence
\begin{equation*}
[P_{l,m},P_{n,r}]=0,
\end{equation*}
i.e. two supertranslations commute. \\
To find the generators of conformal transformations we have to be careful. We know that any conformal 
transformation has the form 
\begin{equation*}
\zeta'=\frac{a\zeta+b}{c\zeta+d},\hspace{1cm}\zeta=e^{i\phi}\cot\frac{\theta}{2}.
\end{equation*}
By direct calculations one obtains that the following equations hold for $\theta'$, $\phi'$ and $u'$:
\begin{subequations}
\label{eqn:216}
\begin{align}
&\theta'=2\arctan\left[\left(\frac{|c\zeta+d|^2}{|a\zeta+b|^2}\right)^{1/2}\right],\\ \nonumber\\
&\phi'=\arctan\left[\frac{\mathrm{Im}\{(a\zeta+b)(\bar{c}\bar{\zeta}+\bar{d})\}}
{\mathrm{Re}\{(a\zeta+b)(\bar{c}\bar{\zeta}+\bar{d})\}}\right],\\ \nonumber\\
&u'=\frac{1+|\zeta|^2}{|a\zeta+b|^2+|c\zeta+d|^2}u.
\end{align}
\end{subequations}
On denoting by $x$ the parameter of the transformation it is clear that $a,b,c,d$ are functions of $x$ such 
that $a(0)=d(0)=1$ and $c(0)=b(0)=0$. We have to apply \eqref{eqn:217} to \eqref{eqn:216}. 
It is easy to verify that 
\begin{align*}
\left.\frac{d\theta'}{dx}\right|_{x=0}&=\frac{\cos^3\frac{\theta}{2}}{\sin\frac{\theta}{2}}
\frac{d}{dx}\left.\left[\frac{|c|^2|\zeta|^2+|d|^2+c\bar{d}\zeta+\bar{c}d\bar{\zeta}}
{|a|^2|\zeta|^2+|b|^2+a\bar{b}\zeta+\bar{a}b\bar{\zeta}}\right]\right|_{x=0},\\ \\
\left.\frac{d\phi'}{dx}\right|_{x=0}&=\cos^2\phi\frac{d}{dx}\left.\left[\frac
{\mathrm{Im}\{a\bar{c}|\zeta|^2+a\bar{d}\zeta+b\bar{c}\bar{\zeta}+b\bar{d}\}}
{\mathrm{Re}\{a\bar{c}|\zeta|^2+a\bar{d}\zeta+b\bar{c}\bar{\zeta}+b\bar{d}\}}\right]\right|_{x=0},\\ \\
\left.\frac{du'}{dx}\right|_{x=0}&=\sin^2\frac{\theta}{2}\frac{d}{dx}\left.\left[(|a|^2+|c|^2)|\zeta|^2
+(a\bar{b}+c\bar{d})\zeta+(b\bar{a}+d\bar{c})\bar{\zeta}+|b|^2+|d|^2\right]\right|_{x=0}u.
\end{align*}
These equations hold in general for any conformal transformation. We choose now to work with Lorentz 
transformations, and thus to use Lorentz generators $L_i$ and $R_i$ of rotations and boosts, respectively. 
To know the coefficients $a,b,c,d$ corresponding to a Lorentz transformation we need to use the isomorphism 
\eqref{eqn:210}. Any rotation of an angle $\varphi$ about an axis $\hat{n}$ and any boost of rapidity 
$\chi$ about an axis $\hat{e}$ can be performed by using a $\mathrm{SL}(2,\mathbb{C})$ matrix given by
\begin{subequations}
\label{eqn:236}
\begin{align}
&U_{\hat{n}}(\varphi)=e^{\frac{i}{2}\varphi\hat{n}\cdot\vec{\sigma}}=\mathbb{I}\cos\frac{\varphi}{2}
+i\hat{n}\cdot\vec{\sigma}\sin\frac{\varphi}{2},\\
&H_{\hat{e}}(\chi)=e^{\frac{1}{2}\chi\hat{e}\cdot\vec{\sigma}}=\mathbb{I}\cosh\frac{\chi}{2}
+\hat{e}\cdot\vec{\sigma}\sinh\frac{\chi}{2},
\end{align}
\end{subequations}
respectively, where $\vec{\sigma}=(\sigma_x,\sigma_y,\sigma_z)$ are the Pauli matrices. The parameter 
$x$ of the two transformations is $\varphi$ and $\chi$, respectively. After some calculations we find that 
the vector fields that generate the transformations are 
\begin{align}
L^{23}=L_x&=-\sin\phi\frac{\partial}{\partial\theta}-\cot\theta\cos\phi\frac{\partial}{\partial \phi},\\
L^{13}=L_y&=-\cos\phi\frac{\partial}{\partial\theta}+\cot\theta\sin\phi\frac{\partial}{\partial \phi},\\
L^{12}=L_z&=\frac{\partial}{\partial \phi},\\
L^{10}=R_x&=\cos\theta\cos\phi\frac{\partial}{\partial\theta}-\frac{\sin\phi}{\sin\theta}\frac{\partial}{\partial\phi}
-u\sin\theta\cos\phi\frac{\partial}{\partial u},\\
L^{20}=R_y&=-\cos\theta\sin\phi\frac{\partial}{\partial\theta}-\frac{\cos\phi}{\sin\theta}\frac{\partial}{\partial\phi}
+u\sin\theta\sin\phi\frac{\partial}{\partial u},\\
L^{30}=R_z&=-\sin\theta\frac{\partial}{\partial \theta}-u\cos\theta\frac{\partial}{\partial u}.
\end{align}
Note that rotations are characterized by $K=1$. The $\{P_{l,m}\}$
and $\{L^{ab}\}$ form a complete set of linearly independent vector fields for the Lie algebra 
$\mathfrak{bms}_4$. We can find now the commutators
\begin{equation*}
[L^{ab},L^{cd}]=\eta^{ac}L^{bd}+\eta^{bd}L^{ac}-\eta^{ad}L^{bc}-\eta^{bc}L^{ad},
\end{equation*}
\begin{equation*}
[L_i,L_j]=\epsilon_{ijk}L_k,\hspace{1cm}[R_i,R_j]=-\epsilon_{ijk}L_k,\hspace{1cm}[L_i,R_j]=-\epsilon_{ijk}R_k,
\end{equation*}
where $\eta^{ab}=\mathrm{diag}(1,-1,-1,-1)$ and $\epsilon_{ijk}$ is the Levi-Civita symbol. Note that we have 
just obtained the classical Lorentz algebra. Furthermore, it is easy to derive the following commutator:
\begin{equation}
\label{eqn:218}
\left[L^{ab},\alpha(\theta,\phi)\frac{\partial}{\partial u}\right]=\left[L^{ab}\alpha(\theta,\phi)
-\alpha(\theta,\phi)W(L^{ab})\right]\frac{\partial}{\partial u},
\end{equation}
where $W(L^{ab})$ is defined by the relation 
\begin{equation*}
\frac{\partial}{\partial u}(L^{ab}f)=L^{ab}\frac{\partial f}{\partial u}+W(L^{ab})\frac{\partial f}{\partial u},
\end{equation*}
for arbitrary $f(u)$. \\
For convenience we introduce the raising and lowering operators,
\begin{align*}
L^{\pm}&=L_y\pm iL_x=-e^{\pm i\phi}\left(\frac{\partial}{\partial \theta}
\pm i\cot\theta\frac{\partial}{\partial \phi}\right),\\
R^{\pm}&=R_x\mp i R_y=e^{\pm i\phi}\left(\cos\theta\frac{\partial}{\partial \theta}\pm\frac{i}{\sin\theta}
\frac{\partial}{\partial \phi}-u\sin\theta\frac{\partial}{\partial u}\right), 
\end{align*}
in terms of which, using equation \eqref{eqn:218}, we give the commutation relations 
with the generators of supertranslations:
\begin{align*}
&[L_z,P_{l,m}]=imP_{l,m},\\ \nonumber\\
&[L^+,P_{l,m}]=-\sqrt{(l-m)(l+m+1)}P_{l,m+1},\\ \nonumber\\
&[L^-,P_{l,m}]=\sqrt{(l+m)(l-m+1)}P_{l,m-1},
\end{align*}
\begin{align*}
&[R_z,P_{l,m}]=-(l-1)\sqrt{\frac{(l+m+1)(l-m+2)}{(2l+1)(2l+3)}}P_{l+1,m}\\
&+(l+2)\sqrt{\frac{(l+m)(l-m)}{4l^2-1}}P_{l-1,m},\\ \\
&[R^+,P_{l,m}]=(l-1)\sqrt{\frac{(l+m+1)(l+m+2)}{(2l+1)(2l+3)}}P_{l+1,m+1}\\
&+(l+2)\sqrt{\frac{(l-m-1)(l-m)}{4l^2-1}}P_{l-1,m+1},\\ \\
&[R^-,P_{l,m}]=-(l-1)\sqrt{\frac{(l-m+1)(l-m+2)}{(2l+1)(2l+3)}}P_{l+1,m-1}\\
&-(l+2)\sqrt{\frac{(l+m-1)(l+m)}{4l^2-1}}P_{l-1,m-1}.
\end{align*}
The form of the commutation relations shows that the BMS algebra is the semi-direct sum of the Lorentz algebra 
$\mathfrak{so(3,1)}$ with the infinite Lie algebra $\mathscr{T}$, as remarked before.
\section{Good and bad cuts}
\label{sect:8}
We begin this section by citing a remarkable result obtained by Sachs.
\begin{thm}{\cite{Sachs1}}$\\ $
\label{thm:transl}
The only $4$-dimensional normal subgroup of the BMS group is the translation group.
\end{thm}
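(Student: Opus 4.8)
The plan is to reduce the statement to the classification proved at the end of Section~\ref{sect:6}, where it is shown that any $4$-dimensional normal subgroup $N'$ of $\mathscr{B}$ must be contained in the supertranslation subgroup $\mathscr{S}$. Taking this for granted, it remains only to identify the $4$-dimensional normal subgroups that lie inside $\mathscr{S}$. Since $\mathscr{S}$ is abelian and may be identified with the vector space of (twice differentiable) functions on the sphere under addition, a connected $4$-dimensional subgroup $N'\subseteq\mathscr{S}$ is nothing but a $4$-dimensional linear subspace $V\subseteq\mathscr{S}$. Using the factorization $\mathscr{B}=\mathscr{L}\mathscr{S}$ of \eqref{eqn:233} together with the fact that conjugation within the abelian $\mathscr{S}$ is trivial, the normality condition $b N' b^{-1}=N'$ for all $b\in\mathscr{B}$ collapses to the single requirement that $V$ be invariant under the conjugation action of the Lorentz factor $\mathscr{L}$, i.e. under the weighted action $\sigma$ of \eqref{eqn:212}. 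Equivalently, at the infinitesimal level, $V$ must be stable under the adjoint action of the Lorentz generators recorded by the commutators $[L^{ab},\alpha\,\partial_u]$ and the ladder relations for $[L_z,P_{l,m}]$, $[L^{\pm},P_{l,m}]$, $[R_z,P_{l,m}]$, $[R^{\pm},P_{l,m}]$ computed in Section~\ref{sect:7}.

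For the uniqueness I would use only the rotation subalgebra, which acts with conformal factor $K=1$ and therefore acts on $\mathscr{S}$ by the ordinary action of $\mathrm{SO}(3)$ on functions. The relations $[L_z,P_{l,m}]=imP_{l,m}$ and $[L^{\pm},P_{l,m}]\propto P_{l,m\pm1}$ are precisely the standard angular-momentum ladder relations, so the supertranslation space decomposes under rotations as $\bigoplus_{l\ge0}H_l$, with $H_l=\mathrm{span}\{P_{l,m}\}$ an irreducible rotation module of dimension $2l+1$ appearing with multiplicity one. Being a finite-dimensional rotation-invariant subspace, $V$ is completely reducible, and by multiplicity-freeness each irreducible constituent of type $l$ must fill out the whole of $H_l$; hence $V$ is a direct sum of a subcollection of the $H_l$. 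As the available dimensions are $1,3,5,\dots$, the only way to reach $\dim V=4$ is $4=1+3$, which forces $V=H_0\oplus H_1$. This is exactly the span $\{1,\sin\theta\cos\phi,\sin\theta\sin\phi,\cos\theta\}$ defining the translation subgroup $\mathscr{T}$ in \eqref{eqn:207}, so no $4$-dimensional normal subgroup other than $\mathscr{T}$ can exist.

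To complete the argument I would confirm that $\mathscr{T}$ really is invariant under all of $\mathscr{L}$ and not merely under rotations --- this is the statement deferred in Section~\ref{sect:6}, namely that $K\alpha_t$ again contains only $l\le1$ harmonics. I would read it off from the boost commutators of Section~\ref{sect:7}: in $[R_z,P_{l,m}]$ the coefficient of the raising term $P_{l+1,m}$ is proportional to $(l-1)$, which vanishes at $l=1$, and the same factor appears in $[R^{\pm},P_{l,m}]$. Consequently a boost maps $P_{1,m}$ into the $l=0$ level only and $P_{0,0}$ into the $l=1$ level, so the boosts preserve $H_0\oplus H_1=\mathscr{T}$. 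Combined with manifest rotation invariance, this shows that $\mathscr{T}$ is genuinely a $4$-dimensional normal subgroup, and by the previous paragraph it is the only one, so that, together with the semi-direct structure \eqref{eqn:209}, the proof is complete.

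I expect the delicate point to be not the representation-theoretic counting, which is clean, but the bookkeeping that makes it applicable: one must justify carefully that normality in the full group is equivalent to invariance of $V$ under the \emph{weighted} Lorentz action $\sigma$ of \eqref{eqn:212}, keeping track of the conformal factor $K$, and that restricting to the connected identity component (a genuine linear subspace) entails no loss of generality. The $K=1$ property of rotations is what makes the uniqueness half effortless, while the precise structure constants at $l=1$ --- the vanishing of the $(l-1)$ coefficient --- are exactly what allow the conformal weight of the boosts to keep $\mathscr{T}$ closed; this is the substantive content hiding behind the deferred claim of Section~\ref{sect:6}.
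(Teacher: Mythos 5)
Your proposal is correct in substance, and it is in fact considerably more complete than what the paper itself offers: the paper states Theorem \ref{thm:transl} with a bare citation to Sachs and never proves the uniqueness half anywhere. What the paper does supply are two ingredients, both of which you correctly reuse: the theorem at the end of Sect.~\ref{sect:6}, that any $4$-dimensional normal subgroup of $\mathscr{B}$ lies in $\mathscr{S}$ (proved via the homomorphism $\mathscr{B}\rightarrow\mathscr{B}/\mathscr{S}\cong\mathscr{L}$ and the simplicity of $\mathscr{L}$), and the deferred claim that conjugation preserves $\mathscr{T}$, which the paper settles in Sect.~\ref{sect:8} not infinitesimally but at the group level, by the explicit computation showing that \eqref{eqn:237} maps $u=(A+B\zeta+\bar{B}\bar{\zeta}+C\zeta\bar{\zeta})/(1+\zeta\bar{\zeta})$ to a function of the same form with the displayed $A'$, $B'$, $C'$. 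Your representation-theoretic uniqueness argument --- complete reducibility under the compact rotation subgroup, multiplicity-freeness of the decomposition $\mathscr{S}=\bigoplus_{l\geq 0}H_l$ with $\dim H_l=2l+1$, and the counting $4=1+3$ forcing $V=H_0\oplus H_1=\mathscr{T}$ --- cleanly fills the gap the paper delegates to Sachs, and your observation that rotations act with $K=1$ is exactly what makes the ordinary $\mathrm{SO}(3)$ theory applicable. Two cautions you should make explicit rather than merely flag: first, your boost-invariance step is a Lie-algebra statement, and passing from invariance under the generators $\Sigma_X$ (the vanishing $(l-1)$ coefficients in $[R_z,P_{l,m}]$ and $[R^{\pm},P_{l,m}]$) to invariance under the connected group $\mathscr{L}$ needs either a standard exponentiation argument on the finite-dimensional invariant subspace or, more economically, a direct appeal to the paper's group-level $A'$, $B'$, $C'$ computation, which disposes of the issue entirely; second, your reduction of ``$4$-dimensional subgroup of $\mathscr{S}$'' to ``$4$-dimensional linear subspace'' tacitly assumes connectedness --- it suffices to note that the identity component of a normal subgroup is itself normal and apply your argument to it, but that half-sentence belongs in the proof. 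With those two points spelled out, your argument stands as a self-contained proof of a statement for which the paper provides only a citation plus scattered partial ingredients.
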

Theorem \ref{thm:transl} characterizes translations uniquely: the translation normal subgroup of the BMS group 
is singled out by its group-theoretic properties. Since we have shown that the translations $\mathscr{T}$ are 
the BMS transformations induced on $\mathscr{I}^+$ by translations in Minkowski space-time, theorem 
\ref{thm:transl} makes it possible for  us to define the asymptotic translations 
of a general asymptotically flat space-time 
as the BMS elements belonging to this normal subgroup. However a similar procedure for $\mathscr{L}$, 
i.e. rotations and boosts, fails. Thus, as we will discuss in this section, there are several problems in 
identifying the Poincar\'e group as a subgroup of $\mathscr{B}$.\\
The Poincar\'e group is the symmetry group of flat space-time, hence it might have been thought that a suitably 
asymptotically flat space-time should, in some appropriate sense, have the Poincar\'e group as an asymptotic 
symmetry group. Instead, it turns out that in general we seem only to obtain the BMS group (which has  
the unpleasant feature of being an infinite-dimensional group) as the asymptotic symmetry group of an 
asymptotically flat space-time. \\
To better understand the nature of this problem we revert to Minkowski space-time and see how the Poincar\'e 
group arises in that case as a subgroup of the BMS group. The BMS group was defined as the group of 
transformations which conformally preserves the induced metric on $\mathscr{I}^+$ and the strong conformal 
geometry. However, the BMS group is much larger than the Poincar\'e group and thus the former must preserve 
less structure on $\mathscr{I}^+$ than does the latter. The preservation of this additional structure, 
in the case of Minkowski space-time, should allow us to restrict the BMS transformations to Poincar\'e 
transformations, since we know that $\mathscr{P}$ in that case \textit{is} a subgroup of $\mathscr{B}$.\\
In Minkowski space-time a null hypersurface is said to be a \textit{good cone} if it is the future light cone 
of some point, and a \textit{bad cone} if its generators do not meet at a point. Consequently we define a 
\textit{good cross-section}, often called a \textit{good cut}, a cross-section of $\mathscr{I^+}$ which is 
the intersection of a future light cone of some point and the null hypersurface $\mathscr{I}^+$. 
A \textit{bad cut} is, on the other hand, the intersection of $\mathscr{I}^+$ with some null hypersurface 
which does not come together cleanly at a single vertex. The situation is represented in Figure \ref{fig:6.4}.
\begin{figure}[h]
\begin{center}
\includegraphics[scale=0.4]{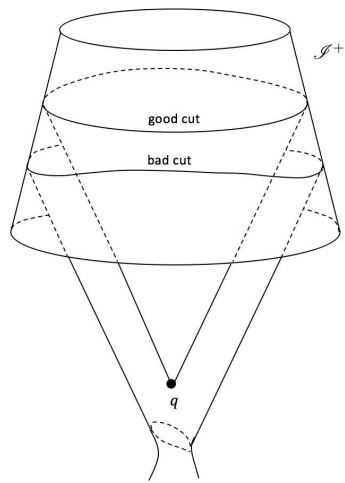}
\caption{A good cross-section of $\mathscr{I}^+$ for Minkowski space-time is one arising as the intersection 
of $\mathscr{I}^+$ with the future light cone of a point.}
\label{fig:6.4}
\end{center}
\end{figure}
\medskip \medskip \\ Using Bondi-Sachs coordinates the Minkowski metric tensor takes the form
\begin{equation*}
g=du\otimes du+du\otimes dr+dr\otimes du-r^2(d\theta\otimes d\theta+\sin^2\theta d\phi\otimes d\phi).
\end{equation*}
We see that each cut of $\mathscr{I}^+$ given by $u=t-r=\mathrm{const}$ is a good cut, since it arises from 
the future light cone of a point on the origin-axis $r=0$. In particular, all the good cuts can be obtained 
from the one given by $u=0$ by means of a space-time translation. Hence, as discussed in Sect.\ref{sect:6} we 
obtain that every good cut can be expressed in the form
\begin{equation*}
u=\frac{A+B\zeta+\bar{B}\bar{\zeta}+C\zeta\bar{\zeta}}{1+\zeta\bar{\zeta}}=\left(\frac{A+C}{2}\right)+\left(\frac{C-A}{2}\right)\cos\theta
\end{equation*}
\begin{equation}
\label{eqn:235}
+\left(\frac{B+\bar{B}}{2}\right)
\sin\theta\cos\phi+i\left(\frac{B-\bar{B}}{2}\right)\sin\theta\sin\phi,
\end{equation}
$A,C$ being real and $B$ being complex. Hence the equations describing good cuts are given, generally, 
by setting $u$ equal to a function of $\theta$ and $\phi$ which consists only of zeroth- 
and first-order spherical harmonics.\\
The effect of a transformation of the connected component of the Lorentz group is to leave invariant the 
particular good cut $u=0$. Such a transformation is
\begin{subequations}
\label{eqn:237}
\begin{align}
&\zeta\rightarrow\zeta'=\frac{a\zeta+b}{c\zeta+d},\\
&u\rightarrow u'=\frac{1+\zeta\bar{\zeta}}{|a\zeta+b|^2+|c\zeta+d|^2}u,
\end{align}
\end{subequations}
with $a,b,c,d\in\mathbb{C}$ such that $ad-bc=1$. Note that transformations \eqref{eqn:237} preserve the 
functional form of good cuts given in \eqref{eqn:235}. In fact we have, applying \eqref{eqn:237}, that
\begin{equation*}
u'=\left(\frac{A+B\zeta+\bar{B}\bar{\zeta}+C\zeta\bar{\zeta}}{1+\zeta\bar{\zeta}}\right)
\frac{1+\zeta\bar{\zeta}}{|a\zeta+b|^2+|c\zeta+d|^2},
\end{equation*}
where $\zeta$ and $\bar{\zeta}$ have now to be expressed as functions of $\zeta'$ and $\bar{\zeta}'$. 
It is straightforward to show that
\begin{equation*}
u'=\frac{A'+B'\zeta'+\bar{B}'\bar{\zeta}'+C'\zeta'\bar{\zeta}'}{1+\zeta'\bar{\zeta}'},
\end{equation*}
where 
\begin{align*}
&A'=A|a|^2-Bb\bar{a}-\bar{B}\bar{b}a+C|b|^2,\\
&B'=Bd\bar{a}+\bar{B}\bar{b}c-A\bar{a}c-Cd\bar{b}	,\\
&C'=A|c|^2-B\bar{c}d-\bar{B}c\bar{d}+C|d|^2.
\end{align*}
For example, if we perform a boost in the $z$ direction we have from \eqref{eqn:236} 
$a=e^{\chi/2},d=e^{-\chi/2},c=b=0$. Hence we get $A'=e^{\chi}A,B'=B,C'=e^{-\chi}C$. \\
Now it is clear that the general BMS transformation which maps good cuts into good cuts must obtain the particular 
good cut $u=0$ from some other good cut. We can therefore express the BMS transformation as the composition 
of a translation which maps this other good cut into $u=0$, with a Lorentz transformation which leaves 
$u=0$ invariant. Thus, the BMS transformation is
\begin{subequations}
\label{eqn:238}
\begin{align}
&\zeta\rightarrow\zeta'=\frac{a\zeta+b}{c\zeta+d},\\
&u\rightarrow u'=\left(\frac{1+\zeta\bar{\zeta}}{|a\zeta+b|^2+|c\zeta+d|^2}\right)\left(u+\frac{A+B\zeta
+\bar{B}\bar{\zeta}+C}{1+\zeta\bar{\zeta}}\right).
\end{align}
\end{subequations}
These BMS transformations form a $10$-real-parameters group. This is exactly the Poincar\'e group of Minkowski 
space-time, being a composition of a Lorentz transformation and a translation. We have obtained the following 
\begin{prop}$\\ $
The Poincar\'e group $\mathscr{P}$ is the group of transformations which maps good cuts into good 
cuts in Minkowski space-time.
\end{prop}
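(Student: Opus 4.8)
The plan is to prove the two inclusions separately, writing $G$ for the set of BMS transformations that carry good cuts of $\mathscr{I}^+$ into good cuts. Recall from \eqref{eqn:235} that a good cut is a cross-section $u=\sigma(\zeta,\bar\zeta)$ whose profile $\sigma$ is a combination of zeroth- and first-order spherical harmonics, i.e. of the form $(A+B\zeta+\bar B\bar\zeta+C\zeta\bar\zeta)/(1+\zeta\bar\zeta)$ with $A,C$ real and $B$ complex; these four real parameters match the four parameters of $\mathscr{T}$. Throughout I will use the factorization $\mathscr{B}=\mathscr{L}\mathscr{S}$ of \eqref{eqn:233}, so that every $b\in\mathscr{B}$ is uniquely $b=\Lambda s$ with $\Lambda=(\Lambda,0)\in\mathscr{L}$ and $s=(\mathbb{I},\alpha)\in\mathscr{S}$, together with the fact, already verified through the transformation laws for $A',B',C'$ below \eqref{eqn:237}, that every element of $\mathscr{L}$ carries the good-cut form into a good-cut form.

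For the inclusion $\mathscr{P}\subseteq G$ it suffices to check the two factors. An element of $\mathscr{L}$ fixes the cut $u=0$ and sends a general good cut into a good cut, which is precisely the content of the computation of $A',B',C'$. A translation $t=(\mathbb{I},\alpha_t)\in\mathscr{T}$ has identity Lorentz part, so its conformal factor is $K=1$ and it acts by $\zeta'=\zeta$, $u'=u+\alpha_t$; hence it maps the cut $u=\sigma$ to the cut $u'=\sigma+\alpha_t$, and since $\sigma$ and $\alpha_t$ are both combinations of $l=0,1$ harmonics, so is their sum. Consequently any $\Lambda t$ in the underlying set $\mathscr{L}\mathscr{T}$ of $\mathscr{P}=\mathscr{L}\rtimes\mathscr{T}$, being a composite of two good-cut-preserving maps, lies in $G$; this also reproduces the explicit Poincar\'e form \eqref{eqn:238}.

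For the converse $G\subseteq\mathscr{P}$, take $b=(\Lambda,\alpha)\in G$ and factor it as $b=\Lambda s$ with $s=(\mathbb{I},\alpha)=\Lambda^{-1}b$. Now apply $s$ to the good cut $u=0$: since $b$ sends $u=0$ to a good cut and $\Lambda^{-1}\in\mathscr{L}$ again preserves good cuts, the image $s(\{u=0\})$ is a good cut. But $s$ acts by $\zeta'=\zeta$ and $u'=u+\alpha$, so it sends $u=0$ to the cut $u'=\alpha(\zeta',\bar\zeta')$. This is a good cut only if $\alpha$ consists of zeroth- and first-order harmonics, i.e. only if $s\in\mathscr{T}$. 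Therefore $b=\Lambda s\in\mathscr{L}\mathscr{T}=\mathscr{P}$, which closes the argument.

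The one genuinely nontrivial input, and hence the main obstacle, is the Lorentz-invariance of the good-cut form: that the profile $(A,B,C)$ transforms into another such triple $(A',B',C')$ under \eqref{eqn:237}. It is this fact that lets one strip off the Lorentz factor $\Lambda$ and reduce the entire question to a pure supertranslation acting on the single cut $u=0$, where the characterization of good cuts as exactly the $l\leq 1$ cross-sections immediately forces $\alpha$ to carry only $l=0,1$ harmonics. One should also note, as a consistency check, that the identification $\mathscr{P}=\mathscr{L}\rtimes\mathscr{T}$ as a \emph{group} relies on the earlier statement that $\mathscr{T}$ is normal in $\mathscr{B}$, equivalently that $K\alpha_t$ remains a combination of $l=0,1$ harmonics; with that in hand the set equality $G=\mathscr{P}$ established above is an equality of groups.
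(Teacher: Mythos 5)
Your proposal is correct and takes essentially the same route as the paper: both arguments rest on the characterization \eqref{eqn:235} of good cuts as $l\leq 1$ spherical-harmonic profiles, on the Lorentz covariance of that form under \eqref{eqn:237} (the explicit $A',B',C'$ computation), and on stripping off the Lorentz factor via $\mathscr{B}=\mathscr{L}\mathscr{S}$ so that everything reduces to a supertranslation acting on the single cut $u=0$, yielding the form \eqref{eqn:238}. Your two-inclusion write-up simply makes rigorous the step the paper states informally (that a good-cut-preserving BMS map is a translation composed with a Lorentz transformation fixing $u=0$), and your closing remark correctly ties the group structure of $\mathscr{L}\mathscr{T}$ to the fact that $K\alpha_t$ stays within the $l=0,1$ harmonics.
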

However, there are many other subgroups of the BMS group which can be expressed in the form \eqref{eqn:238} and 
which are therefore isomorphic with the Poincar\'e group. In fact $\mathscr{L}$ is not a normal subgroup of the 
BMS group since for any $b=(\Lambda,\alpha)\in\mathscr{B}$ and for any $\Lambda'=(\Lambda',0)\in\mathscr{L}$ 
the product $b\Lambda'b^{-1}$ is not necessarily an element of $\mathscr{L}$, as can be easily verified, and 
hence $\mathscr{L}$ does not get canonically singled out, occurring only as a factor group of $\mathscr{B}$ 
by the infinite-parameter Abelian group of supertranslations $\mathscr{S}$. In particular, Lorentz transformations 
do not commute with supertranslations and if we take any supertranslation $s$ and consider the group 
$\mathscr{L}'=s\mathscr{L}s^{-1}$ then it is a subgroup of the BMS group which is distinct from $\mathscr{L}$ 
but still isomorphic, and thus equivalent, to it. Explicitly, having fixed a supertranslation $s=(\mathbb{I},\alpha)$, 
a transformation of $\mathscr{L}'$ reads as
\begin{subequations}
\label{eqn:239}
\begin{align}
&\zeta\rightarrow\zeta'=\frac{a\zeta+b}{c\zeta+d},\\
&u\rightarrow u'=\left(\frac{1+\zeta\bar{\zeta}}{|a\zeta+b|^2+|c\zeta+d|^2}\right)\left(u-\alpha\right)+\alpha.
\end{align}
\end{subequations}
If we start with the good cut described by the equation $u=0$, which is left invariant by $\mathscr{L}$, and 
perform the supertranslation $s$, we obtain a new (bad) cut given by $u=\alpha$. This is the cut which is left 
invariant by \eqref{eqn:239} and hence by $\mathscr{L}'$. Hence $\mathscr{L}'$ maps bad cuts into bad cuts.
It follows that if we conjugate the whole Poincar\'e group $\mathscr{P}$ of \eqref{eqn:238} with respect to 
any supertranslation $s$ which is not a translation obtaining $\mathscr{P}'=s\mathscr{P}s^{-1}$ we get a 
distinct subgroup of $\mathscr{B}$, but isomorphic and completely equivalent to $\mathscr{P}$, which maps 
bad cuts into bad cuts. Of course, for a general $s$, $\mathscr{P'}$ and $\mathscr{P}$ have only the 
translations $\mathscr{T}$ in common. There exist \textit{many} subgroups of $\mathscr{B}$ which are 
isomorphic with $\mathscr{P}$ and hence the Poincar\'e group \textit{is not} a subgroup of the BMS group 
in a canonical way. However we have just seen that in Minkowski space-time, if we require the group of 
transformations to preserve the conformal nature of $\mathscr{I}^+$ and the strong conformal geometry, together 
with the property of mapping good cuts into good cuts, just one of the several copies of the 
Poincar\'e group gets singled out.
\begin{oss}$\\ $
Note that, again, the situation is similar to what happens for $\mathscr{L}$ within $\mathscr{P}$. In fact 
$\mathscr{L}$ does not arise naturally as a subgroup of $\mathscr{P}$, since if we form the group 
$\mathscr{L}'=t\mathscr{L}t^{-1}$, where $t$ is a translation, it is a different subgroup of $\mathscr{P}$ 
but isomorphic to $\mathscr{L}$. We can say, since the commutator of a Lorentz transformation and a 
translation is a translation, that $\mathscr{L}$, as a subgroup of $\mathscr{P}$, depends on the choice 
of an arbitrary origin in Minkowski space-time. 
\end{oss}
We turn now to the case when the space-time is asymptotically flat. The difficulty here is that there seems 
to be no suitable family of cuts that can properly take over the role of Minkowskian good cuts. This means 
that, although the translation elements of $\mathscr{B}$ are canonically singled out, there is no canonical 
concept of a \lq supertranslation-free\rq\hspace{0.1mm} Lorentz transformation. Hence the notion of a 
\lq pure translation\rq\hspace{0.1mm} still makes sense, but that of \lq pure rotation\rq\hspace{0.1mm} 
or \lq pure boost\rq\hspace{0.1mm} does not. However, as remarked by \cite{Lusanna2015,Lusanna2001} it can be shown that the use of appropriate \lq post-Minkowskian\rq\hspace{0.1mm} boundary conditions in the 3+1 makes it possible to get rid of the supertranslations and to single out the asymptotic Poincar\'e ADM group.\\
The most obvious generalization, for an asymptotically flat space-time, of the Minkowskian definition 
of a good cut, i.e. the intersection of future light cone of a point with $\mathscr{I}^+$, is totally 
inappropriate. One first reason is that there are many perfectly reasonable asymptotically flat space-times 
in which no cuts of $\mathscr{I}^+$ at all would arise in this way, e.g. \cite{Pen72}. Even if we 
restrict attention only to asymptotically flat space-times which do contain a reasonable number of good cuts 
of this kind, we are not likely to obtain any of the BMS transformations (apart from the identity) which 
maps this system of cuts into itself. The difficulty lies in the fact that the detailed irregularities of 
the interior of the space-time would be reflected in the definition of \lq goodness\rq\hspace{0.1mm} 
of a cut. In other words, the light-cone cuts are far more complicated than those \eqref{eqn:235} of flat space. \\
However, there is a more satisfactory way to characterize good cuts of $\mathscr{I}^+$, based on the shear 
of null hypersurfaces intersecting $\mathscr{I}^+$. Suppose now that the (physical) space-time under consideration 
contains a null curve $\mu$ of a null geodesic congruence $\mathscr{C}$ affinely parametrized by $\tilde{r}$. 
It can be shown \cite{Penrin2} that the physical shear of the null hypersurfaces generated by $\mathscr{C}$ 
has the following asymptotic behaviour for large values of $\tilde{r}$:
\begin{equation*}
\tilde{\sigma}=\frac{\tilde{\sigma}^0}{\tilde{r}^2}+O(\tilde{r}^{-4}),
\end{equation*}
where $\tilde{\sigma}^0$ is called \textit{asymptotic shear}. In the case of a flat space-time the vanishing of 
the asymptotic shear implies the vanishing of the whole shear, but if we turn to the case of asymptotically 
flat space-times that may contain matter, although the leading term of the asymptotic behaviour of 
$\tilde{\sigma}$ does not change, the vanishing of $\tilde{\sigma}^0$ does not imply the vanishing of 
$\tilde{\sigma}$ \cite{New2012}.\\
It can be shown \cite{GRG} that if we consider the unphysical space-time obtained with conformal factor 
$\Omega=\tilde{r}^{-1}$, the shear transforms as
\begin{equation*}
\sigma=\Omega^{-2}\tilde{\sigma},
\end{equation*}
and hence
\begin{equation*}
\left.\sigma\right|_{_{\mathscr{I}^+}}=\tilde{\sigma}^0.
\end{equation*}
In Minkowski space-time the good cones are characterized locally by the fact that the null rays generating them 
possess no shear and it can be shown that the cuts of $\mathscr{I}^+$ which we defined earlier are precisely the 
ones arising from the intersection of $\mathscr{I}^+$ with null hypersurfaces characterized by $\tilde{\sigma}^0=0$. 
Thus, a definition of \lq goodness\rq\hspace{0.1mm} is provided, for Minkowski space-time, which refers only 
to quantities defined asymptotically. \\
We would like to extend this definition of good cut to asymptotically flat space-times too. We could say that 
some cut is a good cut if its complex shear equals zero (since on $\mathscr{I}^+$ we have $\sigma=\tilde{\sigma}^0$). We cannot, however, define good cones simply by requiring $\tilde{\sigma}^0=0$. In many cases it is not possible 
to arrange $\tilde{\sigma}^0=0$ for all values of $\theta$ and $\phi$. But even in cases where it is possible 
we have another problem, which is due to the presence of gravitational radiation. To make this point clear, 
we cite now some important results regarding the relation between asymptotic shear and gravitational radiation 
which are basically due to \cite{Bondi62,Sachs62,NewUn}. A first result is that $\tilde{\sigma}^0$ forms part 
of the initial data on $u=0$ used to determine the space-time asymptotically. Furthermore, it turns out that
\begin{equation*}
\frac{\partial \tilde{\sigma}^0}{\partial u}=-\bar{N}
\end{equation*}
where $N$ is the \textit{Bondi news function} and that the rate of energy-momentum loss due to gravitational 
radiation through a hypersurface $\mathscr{S}$ which spans some two-dimensional cross-section 
$S$ of $\mathscr{I}^+$ is
\begin{equation}
\label{eqn:244}
\frac{dP^a}{du}=-\frac{1}{4\pi}\int W^a|N|^2dS,
\end{equation}
where 
\begin{equation*}
W^0=1,\hspace{0.5cm}W^1=\sin\theta\cos\phi,\hspace{0.5cm}W^2=\sin\theta\sin\phi,\hspace{0.5cm}W^3=\cos\theta.
\end{equation*}
\begin{figure}[h]
\begin{center}
\includegraphics[scale=0.37]{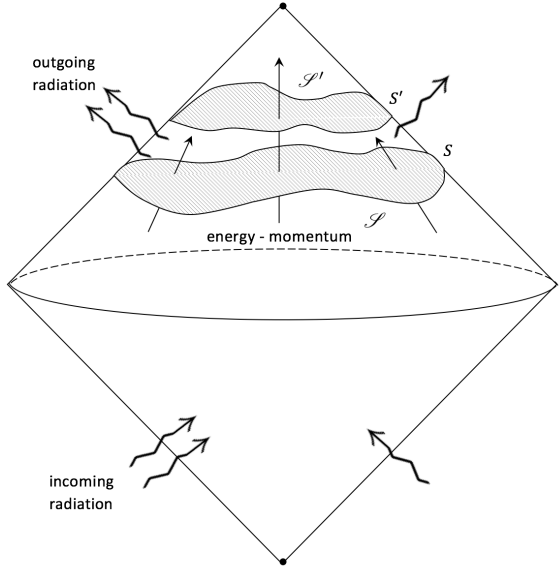}
\caption{The outgoing gravitational radiation through two hypersurfaces $\mathscr{S}$ and $\mathscr{S}'$ with 
associated cross-sections $S$ and $S'$ respectively.}
\label{fig:6.5}
\end{center}
\end{figure}
\\Note that $S$ is topologically a sphere $S^2$ and can always be transformed, by the introduction of a suitable 
conformal factor, into a metric sphere of unit radius. Hence $dS$ can be taken to be	
\begin{equation*}
dS=\sin\theta d\theta d\phi.
\end{equation*}
Thus, the squared modulus of $N$ represents the flux of energy-momentum of the outgoing gravitational radiation. 
\\The time component of \eqref{eqn:244} gives the famous \textit{Bondi-Sachs mass-loss formula}:
\begin{equation}
\label{eqn:243}
\frac{dm_{_{B}}}{du}=-\frac{1}{4\pi}\int |N|^2dS\leq 0.
\end{equation}
An useful representation of the situation is furnished by Figure \ref{fig:6.5}. The positivity of the integrand in \eqref{eqn:243} shows that if a system emits gravitational waves, i.e. if 
there is news, then its Bondi mass must decrease. If there is no news, i.e. $N=0$, the Bondi mass is constant. 
The reason we get a mass loss rather than a mass gain is simply that all we have said has been applied 
to $\mathscr{I}^+$ instead of $\mathscr{I}^-$.\\
From the above discussion it follows that, if the cut $S$ is given by $u=0$ and is shear-free, the cross-sections 
$u=\mathrm{const}$, which are translations of $\mathscr{I}^+$, will not be shear-free in the presence of 
gravitational radiation. In other words, if $\tilde{\sigma}^0=0$ for one value of $u$, we will generally 
have $\tilde{\sigma}^0\neq 0$ for a later value of $u$, i.e. \lq goodness\rq\hspace{0.1mm} would not be 
invariant under translation. However, a difficulty arises even more serious than this. Since to specify 
a cut we just need to specify the value of $u$ on each generator of $\mathscr{I}^+$, the freedom in choosing 
a cut is one real number per point of the cut. On the other hand the quantity $\tilde{\sigma}^0$ is 
complex, its vanishing therefore, representing two real numbers per point of the section. We briefly 
discuss, without going into details \cite{NewPen66} how to solve this problem. The first step is to define 
the magnetic and the electric part of $\tilde{\sigma}^0$:
\begin{equation*}
\tilde{\sigma}^0=\tilde{\sigma}^0_e+\tilde{\sigma}^0_m
\end{equation*}
It can be shown that for each hypersurface $u=\mathrm{const}$ the splitting of $\tilde{\sigma}^0$ into 
its electric and magnetic part is invariant under BMS conformal transformation, i.e. Lorentz transformations 
which leave the surface $u=\mathrm{0}$ invariant. Furthermore, in the Minkowski case, the asymptotic 
shear $\tilde{\sigma}^0=\tilde{\sigma}^0_e+\tilde{\sigma}^0_m$ behaves, when $u\rightarrow-\infty$ as
\begin{subequations}
\label{eqn:250}
\begin{align}
&\tilde{\sigma}^{0}_e(u,\theta,\phi)\xrightarrow[u \rightarrow -\infty]{}S_e(\theta,\phi),\\
&\tilde{\sigma}^{0}_m(u,\theta,\phi)\xrightarrow[u \rightarrow -\infty]{}0,
\end{align}
\end{subequations}
with $S_e(\theta,\phi)$ purely electric and independent of $u$. The magnetic part $\tilde{\sigma}^0_m(u,\theta,\phi)$, as 
$u\rightarrow-\infty$ vanishes, i.e. $S_m(\theta,\phi)=0$.\\
On the basis of what happens in the Minkowski case we wish to impose a physical restriction on the behaviour of 
$\tilde{\sigma}^0$ as $u\rightarrow-\infty$ for a generic space-time. Although no actual cuts of $\mathscr{I}^+$ 
may be shear-free, it is reasonable to expect that in the limit $u\rightarrow-\infty$ on $\mathscr{I}^+$, 
such cuts will exist. Requiring that this limiting shear-free cuts be mapped into one another, we can actually 
restrict the BMS transformations to obtain a canonically defined subgroup of the BMS group, which is isomorphic 
to the Poincar\'e group. The Poincar\'e group which emerges in this way, by virtue of the considerations we have 
developed on the gravitational radiation, may be thought of as that which has relevance to the remote past, 
\textit{before} all the gravitational radiation has been emitted. In analogy with \eqref{eqn:250} we require that
\begin{equation}
\label{eqn:248}
\tilde{\sigma}^0(u,\theta,\phi)\longrightarrow S(\theta,\phi).
\end{equation}
If the analogy with the Minkowski theory can be trusted, we would expect $S(\theta,\phi)$ to be purely electric. 
However, it is not essential since it will be possible to extract the Poincar\'e group only on the basis of 
\eqref{eqn:248}. Hence we will treat the case in which $S(\theta,\phi)$ could have a magnetic part too, i.e. 
$S(\theta,\phi)=S_e(\theta,\phi)+S_m(\theta,\phi)$ with $S_m(\theta,\phi)\neq 0$. \\
It can be shown \cite{Sachs62} that, under a conformal BMS transformation, the asymptotic shear transforms as
\begin{equation}	
\label{eqn:249}
\tilde{\sigma}^0{}'(u,\theta,\phi)=K^{-1}\left[\tilde{\sigma}^0(u,\theta,\phi)+\frac{1}{2}\eth^2\alpha(\theta,\phi)\right],
\end{equation}
where $\eth$ is a differential operator, whose properties can be found in \cite{New67}. It has to be remarked 
that $\tilde{\sigma}^0{}'(u,\theta,\phi)$ refers to the asymptotic shear of the hypersurface $u'=\mathrm{const}$ 
of the transformed coordinate system evaluated at $(u,\theta,\phi)$. The complete transformation 
$\tilde{\sigma}^0{}'(u',\theta',\phi')$ is more complicated. Applying \eqref{eqn:248} to \eqref{eqn:249} gives
\begin{align*}
&S'_e(\theta,\phi)=K^{-1}\left[S_e(\theta,\phi)-\frac{1}{2}\eth^2\alpha(\theta,\phi)\right],\\
&S'_m(\theta,\phi)=K^{-1}S_m(\theta,\phi),
\end{align*}
since $\alpha(\theta,\phi)$ is real and the magnetic part is imaginary. It is always possible to set
\begin{equation*}
S_e(\theta,\phi)=\eth^2G(\theta,\phi),
\end{equation*}
for some real $G(\theta,\phi)$. Since $\alpha(\theta,\phi)$ can be chosen arbitrarily on the sphere, it follows 
that a BMS transformations for which $\alpha(\theta,\phi)=2G(\theta,\phi)$ imposes $S'_e(\theta,\phi)=0$. 
Thus, we have introduced coordinate conditions for which $S_e(\theta,\phi)=0$ at $u=-\infty$. Now the BMS 
transformations which preserve the condition $S_e(\theta,\phi)=0$ are those for which
\begin{equation*}
\eth^2\alpha(\theta,\phi)=0.
\end{equation*}
It can be easily shown that this condition restricts $\alpha(\theta,\phi)$ to be of the form \eqref{eqn:207} 
and thus the allowed supertranslations are simply the translations. The Lorentz transformations, given by 
$\alpha(\theta,\phi)=0$ do not spoil the coordinate conventions. We have finally obtained the following: 
\begin{thm}$\\ $
The group of asymptotic isometries of an asymptotically flat space-time which preserves the condition 
$S_e(\theta,\phi)=0$ at $u=-\infty$ is isomorphic with the Poincar\'e group $\mathscr{P}$.
\end{thm}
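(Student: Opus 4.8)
The plan is to identify the subgroup $\mathscr{H}\subseteq\mathscr{B}$ that stabilizes the coordinate condition $S_e(\theta,\phi)=0$ at $u=-\infty$, and to show that it coincides with $\mathscr{L}\rtimes\mathscr{T}$. First I would write an arbitrary BMS element as $b=(\Lambda,\alpha)$ and apply the electric/magnetic split of the transformation law \eqref{eqn:249}. Starting from a configuration in which the gauge freedom $\alpha\to 2G$ has already been used to arrange $S_e=0$, the image satisfies $S'_e=-\tfrac{1}{2}K^{-1}\eth^2\alpha$, so membership in $\mathscr{H}$ is equivalent to the single condition $\eth^2\alpha=0$. Note that the Lorentz factor $\Lambda$ is left entirely unconstrained: a pure Lorentz transformation ($\alpha=0$) sends $S_e$ to $K^{-1}S_e$, which vanishes whenever $S_e$ does, and the magnetic part $S_m$ (being imaginary while $\alpha$ is real) never feeds into the condition.

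The core analytic step is to solve $\eth^2\alpha=0$ for a real function $\alpha$ on the sphere. Here I would use the action of the spin-raising operator $\eth$ on spin-weighted spherical harmonics \cite{New67}, namely $\eth\,{}_0Y_{l,m}\propto\sqrt{l(l+1)}\,{}_1Y_{l,m}$, whence $\eth^2\,{}_0Y_{l,m}\propto\sqrt{l(l+1)(l-1)(l+2)}\,{}_2Y_{l,m}$; the coefficient vanishes precisely for $l=0$ and $l=1$. Expanding $\alpha$ as in \eqref{eqn:206} and applying $\eth^2$ term by term then forces $\alpha_{l,m}=0$ for every $l\ge2$, so that $\alpha$ retains only zeroth- and first-order harmonics. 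By the definition \eqref{eqn:207} this says exactly that $\alpha\equiv\alpha_t$ is a translation, so the supertranslation content of any element of $\mathscr{H}$ lies in $\mathscr{T}$.

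Combining the two steps gives $\mathscr{H}=\{(\Lambda,\alpha_t):\Lambda\in\mathscr{L},\ \alpha_t\in\mathscr{T}\}=\mathscr{L}\rtimes\mathscr{T}$, which is the Poincar\'e group $\mathscr{P}$ with its natural action, as recorded in the Remark following \eqref{eqn:234}. The step I expect to be most delicate is checking that $\mathscr{H}$ is genuinely a subgroup, i.e. that the set $\{\eth^2\alpha=0\}$ is closed under the BMS composition $(\Lambda_1,\alpha_1)\cdot(\Lambda_2,\alpha_2)=(\Lambda_1\Lambda_2,\alpha_2+\alpha_1 f_2)$. Since $\eth^2\alpha_2=0$, this reduces to verifying $\eth^2(\alpha_1 f_2)=\eth^2\bigl(K_2^{-1}\alpha_1(f_2(\zeta))\bigr)=0$ whenever $\eth^2\alpha_1=0$, which is precisely the statement, deferred from Sect.~\ref{sect:6}, that a Lorentz transformation maps a translation to a translation; equivalently it is the conformal covariance of the condition $\eth^2\alpha=0$ already encoded in \eqref{eqn:249}. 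Granting this closure, the isomorphism $\mathscr{H}\cong\mathscr{P}$ is immediate, and the resulting copy of $\mathscr{P}$ is canonical because the limiting shear-free condition $S_e=0$ is defined intrinsically on $\mathscr{I}^+$ rather than through a choice of origin.
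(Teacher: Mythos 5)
Your proof follows essentially the same route as the paper: apply the shear transformation law \eqref{eqn:249} after the gauge choice $\alpha=2G$ has arranged $S_e=0$, reduce membership in the stabilizer to the single condition $\eth^2\alpha=0$ with the Lorentz part $\Lambda$ unconstrained, and identify the solutions with the zeroth- and first-order harmonics, i.e.\ the translations \eqref{eqn:207}, yielding $\mathscr{L}\rtimes\mathscr{T}\cong\mathscr{P}$. Your two additions are details the paper leaves implicit rather than a different argument: the spin-weighted-harmonic computation showing $\eth^{2}\,{}_{0}Y_{l,m}=0$ precisely for $l\leq 1$ (the paper asserts this as ``easily shown''), and the closure check $\eth^{2}(\alpha_1 f_2)=0$, which is in any case automatic for the stabilizer of a condition and which the paper settles separately through the explicit good-cut transformation formulas for $A'$, $B'$, $C'$ preceding the theorem.
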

\begin{oss}
We could have carried out the same arguments by taking the limit $u\rightarrow +\infty$, and it would 
have been an independent choice. Thus, in a similar way, we could have extracted another Poincar\'e 
group which has relevance to the remote future, i.e. \textit{after} all the gravitational radiation has 
been emitted. There seems to be no reason to believe that these two Poincar\'e groups will be the same, in general. 
\end{oss}
\section{BMS transformations and gravitational scattering}
\label{sect:9}
The ground is now ready for considering a very recent application of BMS transformations,
i.e. the discovery by Strominger \cite{ST2014} that there exist BMS transformations acting non-trivially on
outgoing gravitational scattering data while preserving the intrinsic structure at future null
infinity. His analysis begins with the local expression of a generic Lorentzian metric in
retarded Bondi coordinates, that we know from Sect. \ref{sect:2}, and in particular
with the asymptotic expansion of the metric about future null infinity (where $r=\infty$),
reading as (cf. \eqref{eqn:140})
\begin{eqnarray}
g &= &du \otimes du +(du \otimes dr + dr \otimes du)
-r^{2}\gamma_{z {\bar z}}(dz \otimes d{\bar z}+d {\bar z} \otimes dz)
 \nonumber \\
&- &2{m_{B}^+\over r}du \otimes du - r C_{zz}dz \otimes dz 
-r C_{{\bar z}{\bar z}}d{\bar z}\otimes d{\bar z} 
\nonumber \\ 
&+ &U_{z}(dz \otimes d{\bar z}+d{\bar z} \otimes dz)
+U_{{\bar z}}(du \otimes d{\bar z}+d{\bar z} \otimes du)
+...,
\label{(9.1)}
\end{eqnarray}
where 
\begin{equation}
\gamma_{z{\bar z}} \equiv {2 \over (1+z{\bar z})}, \;
U_{z} \equiv -{1 \over 2}D^{z}C_{zz},
\label{(9.2)}
\end{equation}
$D^{z}$ denoting covariant differentiation with respect to the $2$-sphere metric 
$\gamma_{z{\bar z}}$. The vector fields that 
generate BMS transformations are of two types: there are $6$ with an
asymptotic $\mathrm{SL}(2,\mathbb{C})$ Lie bracket algebra, and an infinite number of commuting supertranslations.
Global $\mathrm{SL}(2,\mathbb{C})$ conformal transformations are generated on $\mathscr{I}^{+}$ by the real part of
the complex vector fields (cf. Sect. \ref{sect:7})
\begin{eqnarray}
{}^{(+)}\zeta^{a}\partial_{a}&=&\left(1-{u \over 2r}\right){}^{(+)}\zeta^{z}\partial_{z}
-\left(1+{u \over r}\right){D_{z}{}^{(+)}\zeta^{z}r \over 2}\partial_{r}
\nonumber \\&-&{u \over 2r}\gamma^{z {\bar z}}D_{z}^{2}{}^{(+)}\zeta^{z}\partial_{{\bar z}}
+{u \over 2}D_{z}{}^{(+)}\zeta^{z}\partial_{u}
+{\rm O}(r^{-2}),
\label{(9.3)}
\end{eqnarray}
where ${}^{(+)}\zeta^{z}=(1,z,z^{2},i,iz,iz^{2})$ \cite{ST2014}. The action
of the Lie derivative operator yields \cite{ST2014}
\begin{align}
&{\cal L}_{{}^{(+)}\zeta}C_{zz}={}^{(+)}\zeta^{z}\partial_{z}C_{zz}
+2 \partial_{z}{}^{(+)}\zeta^{z}C_{zz}+{D_{z}{}^{(+)}\zeta^{z}\over 2}
(u \partial_{u}-1)C_{zz},
\label{(9.4)} \\
&{\cal L}_{{{}^{(+)}\bar \zeta}}C_{zz}={{}^{(+)}\bar \zeta}^{{\bar z}}\partial_{{\bar z}}C_{zz}
+{D_{{\bar z}}{{}^{(+)}\bar \zeta}^{{\bar z}}\over 2}(u \partial_{u}-1)C_{zz},
\label{(9.5)} \\
&{\cal L}_{{}^{(+)}\zeta}m_B^+=\left({}^{(+)}\zeta^{z}\partial_{z}+{u \over 2}D_{z}{}^{(+)}\zeta^{z}\partial_{u}
+{3 \over 2}D_{z}{}^{(+)}\zeta^{z}\right)m_{B}^+ \nonumber \\+&{u \over 2}\partial_{u}
(U_{z}{}^{(+)}\zeta^{z}-U^{z}D_{z}^{2}{}^{(+)}\zeta^{z}).
\label{(9.6)}
\end{align}

The $\mathrm{SL}(2,\mathbb{C})$ transformations on $\mathscr{I}^{-}$ are generated by the complex vector 
fields (cf. (9.3) and replace $u$ by $-v$ therein)
\begin{eqnarray}
{}^{(-)}\zeta^{a}\partial_{a} &=&\left(1+{v \over 2r}\right){}^{(-)}\zeta^{z}\partial_{z}
-\left(1-{v \over r}\right){D_{z}{}^{(-)}\zeta^{z}r \over 2}\partial_{r}
\nonumber \\&+&{v \over 2r}\gamma^{z {\bar z}}D_{z}^{2}{}^{(-)}\zeta^{z}\partial_{{\bar z}}
+{v \over 2}D_{z}{}^{(-)}\zeta^{z}\partial_{v}
+{\rm O}(r^{-2}),
\label{(9.7)}
\end{eqnarray}
and the associated Lie derivative acts according to
\begin{align}
&{\cal L}_{{}^{(-)}\zeta}D_{zz}={}^{(-)}\zeta^{z}\partial_{z}D_{zz}
+2 \partial_{z}{}^{(-)}\zeta^{z}D_{zz}+{D_{z}{}^{(-)}\zeta^{z}\over 2}
(v \partial_{v}-1)D_{zz},
\label{(9.8)} \\
&{\cal L}_{{{}^{(-)}\bar \zeta}}D_{zz}={{{}^{(-)}\bar \zeta}}^{{\bar z}}\partial_{{\bar z}}D_{zz}
+{D_{{\bar z}}{{}^{(-)}\bar \zeta}^{{\bar z}}\over 2}(v \partial_{v}-1)D_{zz},
\label{(9.9)} \\
&{\cal L}_{{}^{(-)}\zeta}m_{B}^{-}=\left({}^{(-)}\zeta^{z}\partial_{z}+{v \over 2}D_{z}{}^{(-)}\zeta^{z}\partial_{v}
+{3 \over 2}D_{z}{}^{(-)}\zeta^{z}\right)m_{B}^{-} \nonumber \\&-{v \over 2}\partial_{v}
(V_{z}{}^{(-)}\zeta^{z}-V^{z}D_{z}^{2}{}^{(-)}\zeta^{z}).
\label{(9.10)}
\end{align}
The supertranslations on $\mathscr{I}^{+}$ are generated by the vector fields
\begin{equation}
{}^{(+)}f \partial_{u}-{1 \over r}\Bigr(D^{{\bar z}}{}^{(+)}f \partial_{{\bar z}}
+D^{z}{}^{(+)}f \partial_{z}\Bigr)+D^{z}D_{z}{}^{(+)}f \partial_{r}, \;
{}^{(+)}f={}^{(+)}f(z,{\bar z}),
\label{(9.11)}
\end{equation}
and one finds the Lie derivatives
\begin{align}
&{\cal L}_{{}^{(+)}f}C_{zz}={}^{(+)}f \partial_{u}C_{zz}-2D_{z}^{2}{}^{(+)}f,
\label{(9.12)}\\
&{\cal L}_{{}^{(+)}f}U_{z}={}^{(+)}f \partial_{u}U_{z}
-{1 \over 2}D^{z}{}^{(+)}f \partial_{u}C_{zz}+D^{z}D_{z}^{2}{}^{(+)}f.
\label{(9.13)}
\end{align}

Supertranslations on past null infinity $\mathscr{I}^{-}$ are also of interest, and are
generated by the vector fields
\begin{equation}
{}^{(-)}f\partial_{v}+{1 \over r}\Bigr(D^{{\bar z}}{}^{(-)}f\partial_{{\bar z}}
+D^{z}{}^{(-)}f\partial_{z} \Bigr)-D^{z}D_{z}{}^{(-)}f\partial_{r},
\label{(9.14)}
\end{equation}
the Lie derivatives along which read as \cite{ST2014}
\begin{align}
&{\cal L}_{{}^{(-)}f}D_{zz}={}^{(-)}f\partial_{v}D_{zz}+2D_{z}^{2}{}^{(-)}f,
\label{(9.15)}\\
&{\cal L}_{{}^{(-)}f}V_{z}={}^{(-)}f\partial_{u}V_{z}+{1 \over 2}D^{z}{}^{(-)}f\partial_{u} D_{zz}
+D^{z}D_{z}^{2}{}^{(-)}f.
\label{(9.16)}
\end{align}
We are now going to outline the connection between BMS transformations on
$\mathscr{I}^{+}$ and $\mathscr{I}^{-}$, denoted by ${\rm BMS}^{+}$ and
${\rm BMS}^{-}$, respectively.

\subsection{The Christodoulou-Klainerman space-times}

After the singularity theorems of Penrose, Hawking and Geroch 
\cite{CS1,CS2,CS3,CS4,CS5,CS6,CS7,CS8,CS9,CS10,CS11}, it was thought for a long
time that singularities are a generic property of general relativity, if the energy-momentum 
tensor obeys suitable conditions. However, an outstanding piece of work of Christodoulou and
Klainerman \cite{CK1993} proved that there exist asymptotically flat space-times that are
geodesically complete and hence singularity-free in classical theory.

Christodoulou and Klainerman studied asymptotically flat initial data in the center-of-mass 
frame on a maximal spacelike slice for which the Bach tensor
$$
\varepsilon^{ijk} \; { }^{(3)}D_{j} \; { }^{(3)}G_{kl}
$$
of the induced $3$-metric decays at least as $r^{-{7 \over 2}}$ at spatial infinity,
while the extrinsic curvature decays like $r^{-{5 \over 2}}$. This implies in particular
that, in normal coordinates about infinity, the leading part of the $3$-metric takes the
Schwarzschild form, while the correction decays like $r^{-{3 \over 2}}$. Remarkably, all such
initial data, when supplemented by a global smallness condition, give rise to a geodesically
complete solution of the Einstein equations. Since such a smallness condition holds in a
finite neighbourhood of Minkowski space-time, the result obtained is said to be the global
non-linear stability of Minkowski space-time \cite{CK1993}.
 
As we know from Sect. \ref{sect:8}, the gravitational radiation flux is proportional to the square of
the Bondi news $N_{zz}$ which, for any finite energy data, vanishes on the boundaries
$\partial_{+}\mathscr{I}^{+}$ and $\partial_{-}\mathscr{I}^{+}$ of future null infinity, i.e.
\begin{equation}
\left . N_{zz} \right|_{\partial_{\pm} \mathscr{I}^{+}}=0.
\label{(9.17)}
\end{equation}
For Christodoulou-Klainerman space-times the falloff property for $u \rightarrow \pm \infty$ is
\begin{equation}
N_{zz}(u) \sim |u|^{-{3 \over 2}},
\label{(9.18)}
\end{equation}
or even faster than this. With the coordinates used in \eqref{(9.1)}, the Weyl curvature component
$\Psi_{2}^{0}$ reads as
\begin{eqnarray}
\Psi_{2}^{0}(u,z,{\bar z})&=&-\lim_{r \to \infty} \Bigr(r C_{uzr{\bar z}} \; \gamma^{z{\bar z}}\Bigr) \nonumber \\
&=&-m_{B}^++{1 \over 4}C^{zz}N_{zz}
-{1 \over 2}\gamma^{z {\bar z}}\Bigr(\partial_{\bar z}U_{z}
-\partial_{z}U_{\bar z}\Bigr).
\label{(9.19)}
\end{eqnarray}
For center-of-mass Christodoulou-Klainerman spaces at $r=\infty,u=\infty$ one finds
\begin{equation}
\left . \Psi_{2}^{0}\right |_{\partial_{+}\mathscr{I}^{+}}=0,
\label{(9.20)}
\end{equation}
while at $r=\infty,u=-\infty$
\begin{equation}
\left . \Psi_{2}^{0}\right |_{\partial_{-}\mathscr{I}^{+}}=-M,
\label{(9.21)}
\end{equation}
the ratio ${M \over G}$ being the Arnowitt-Deser-Misner mass. The real part of these formulae, 
jointly with \eqref{(9.17)}, imply for the mass aspect the following boundary conditions:
\begin{align}
&\left . m_{B}^+ \right |_{\partial_{+}\mathscr{I}^{+}}=0,
\label{(9.22)}\\
&\left . m_{B}^+ \right |_{\partial_{-}\mathscr{I}^{+}}=M.
\label{(9.23)}
\end{align}
From the imaginary part of \eqref{(9.19)} one obtains 
\begin{equation}
\Bigr[\partial{{\bar z}}U_{z}-\partial_{z}U_{{\bar z}}\Bigr]_{\partial_{\pm}\mathscr{I}^{+}}=0.
\label{(9.24)}
\end{equation}
A similar procedure on past null infinity yields the boundary conditions
\begin{equation}
\left . m_{B}^{-} \right |_{\partial_{-}\mathscr{I}^{-}}
=\Bigr[\partial_{{\bar z}}V_{z}-\partial_{z}V_{{\bar z}}\Bigr]_{\partial_{\pm}\mathscr{I}^{-}}
= \left . \partial_{v}D_{zz} \right|_{\partial_{\pm}\mathscr{I}^{-}}=0,
\label{(9.25)}
\end{equation}
and
\begin{equation}
\left . m_{B}^{-} \right |_{\partial_{+}\mathscr{I}^{-}}=M.
\label{(9.26)}
\end{equation}
The falloff of the Bondi news, here denoted by $M_{zz}$, is
\begin{equation}
M_{zz} \equiv \partial_{v}D_{zz} \sim |v|^{-{3 \over 2}},
\label{(9.27)}
\end{equation}
or even faster.

Note that, while it seems reasonable that the total integrated mass should decay to zero
at the future boundary of future null infinity for a sufficiently weak gravitational disturbance, 
it is not a priori clear that the unintegrated mass aspect function $m_{B}$ should itself
approach $0$. Furthermore, the news tensor might decay more slowly near the future and past
boundaries of null infinity and still provides a finite value for the total energy flux
\cite{ST2014}. Strominger has also considered the possible coupling to any kind of massless
matter that dissipates at late times on $\mathscr{I}^{+}$ or early times on
$\mathscr{I}^{-}$. He imposes the boundary conditions \eqref{(9.20)}-\eqref{(9.26)} as restriction on
the desired solutions of Einstein's equations.

\subsection{Link among future and past null infinity near spacelike infinity}

The future null infinity data $m_{B}^+$ and $C_{zz}$ are related by the constraint \cite{ST2014}
\begin{equation}
\partial_{u}m_{B}^+=-{1 \over 2}\partial_{u}\Bigr[D^{z}U_{z}+D^{\bar z}U_{\bar z}\Bigr]-T_{uu},
\label{(9.28)}
\end{equation}
where $T_{uu}$ is the total outgoing radiation energy flux for gravity plus matter at 
future null infinity, whose explicit form is \cite{ST2014}
\begin{equation}
T_{uu}={1 \over 4}N_{zz}N^{zz}+4 \pi G \lim_{r \to \infty}
\Bigr[r^{2}T_{uu}^{M}\Bigr].
\label{(9.29)}
\end{equation}

A classical gravitational scattering problem consists of a set of final data on future
null infinity which evolve from some initial data on past null infinity. On going from a
Cristodoulou-Klainerman space-time to these data one encounters a physical ambiguity
under ${\rm BMS}^{+} \times {\rm BMS}^{-}$ transformations. In particular one has to
evaluate, once the Bondi news and matter radiation flux are given, the leading terms $m_{B}$
and $C_{zz}$ of the metric in the coordinates used in \eqref{(9.1)} everywhere on future null infinity.
In order to achieve this one integrates equation \eqref{(9.28)} and the equation
\begin{equation}
\partial_{u}C_{zz}=N_{zz}
\label{(9.30)}
\end{equation}
along a null generator of future null infinity, but initial conditions at the past
boundary of $\mathscr{I}^{+}$ are required. For Christodoulou-Klainerman space-times one
has indeed the initial condition \eqref{(9.23)}. Under a $\mathrm{SL}(2,\mathbb{C})$ transformation on future null
infinity, which forces us to abandon the center-of-mass frame, the restriction of $m_{B}$
to $\partial_{-}\mathscr{I}^{+}$ transforms according to \cite{ST2014}
\begin{equation}
{\cal L}_{{}^{(+)}{\zeta}} \left . m_{B}^+ \right |_{\partial_{-}\mathscr{I}^{+}}
=\left({}^{(+)}\zeta^{z}\partial_{z}+{3 \over 2}D_{z}{}^{(+)}\zeta^{z}\right)
\left . m_{B}^+ \right |_{\partial_{-}\mathscr{I}^{+}},
\label{(9.31)}
\end{equation}
and hence is no longer constant.

As far as the initial value of $C_{zz}$ is concerned, this may be any solution of
equation \eqref{(9.24)}, which can be re-expressed in the form
\begin{equation}
\Bigr[D_{\bar z}^{2}C_{zz}-D_{z}^{2}C_{{\bar z}{\bar z}}\Bigr]_{\partial_{-}\mathscr{I}^{+}}=0.
\label{(9.32)}
\end{equation}
If $C$ denotes some real-valued function, the general integral of equation \eqref{(9.32)} reads as
\begin{equation}
C_{zz}(-\infty,z,{\bar z})=D_{z}^{2}C.
\label{(9.33)}
\end{equation}
This implies that
\begin{equation}
U_{z}=-{1 \over 2}\partial_{z}(C+D_{z}D^{z}C).
\label{(9.34)}
\end{equation}

By proceeding along similar lines on past null infinity, the mass aspect $m_{B}^{-}$ is
determined by the constraint \cite{ST2014} 
\begin{equation}
\partial_{v}m_{B}^{-}={1 \over 2}\partial_{v}
\Bigr[D^{z}V_{z}+D^{\bar z}V_{\bar z}\Bigr]+T_{vv},
\label{(9.35)}
\end{equation}
where $T_{vv}$ denotes the total incoming radiation flux at past null infinity. This can be solved
by imposing data at $\partial_{+}\mathscr{I}^{-}$ and integrating until 
$\partial_{-}\mathscr{I}^{-}$ is reached. One can assume the boundary condition \eqref{(9.26)}. The
mass aspect has the transformation property
\begin{equation}
\left . {\cal L}_{{}^{(-)}\zeta}m_{B}^{-} \right |_{\partial_{+}\mathscr{I}^{-}}
=\left({}^{(-)}\zeta^{z}\partial_{z}+{3 \over 2}D_{z}{}^{(-)}\zeta^{z}\right)
\left . m_{B}^{-} \right |_{\partial_{+}\mathscr{I}^{-}} .
\label{(9.36)}
\end{equation}
Moreover, for some real-valued function $F$, one can take
\begin{equation}
D_{zz}(-\infty,z,{\bar z})=D_{z}^{2}F.
\label{(9.37)}
\end{equation}
The next task is to relate the initial data at $\partial_{+}\mathscr{I}^{-}$ to those
at $\partial_{-}\mathscr{I}^{+}$. For this purpose, one has first to relate the points
on $\partial_{+}\mathscr{I}^{-}$ to the points on $\partial_{-}\mathscr{I}^{+}$. 
The desired relation derives from the conformal infinity picture in which spacelike
infinity is a point and future (resp. past) null infinity is its future (resp. past) lightcone.
Null generators of null infinity then go from past null infinity to future null infinity
through spacelike infinity. These generators provide the necessary identification: all points
on the same generator are labeled by the same coordinate $(z,{\bar z})$.

The ordinary Lorentz transformations of Minkowski space-time are generated by vector fields
of the form \eqref{(9.3)} and \eqref{(9.7)}, restricted by the condition
\begin{equation}
{}^{(+)}\zeta^{z}={}^{(-)}\zeta^{z}.
\label{(9.38)}
\end{equation}
This condition is defined only after having identified points on
$\partial_{+}\mathscr{I}^{-}$ and $\partial_{-}\mathscr{I}^{+}$. Such Lorentz transformations 
preserve the condition
\begin{equation}
\left . m_{B}^+ \right |_{\partial_{-}\mathscr{I}^{+}}(z,{\bar z})
=\left . m_{B}^+ \right |_{\partial_{+}\mathscr{I}^{-}}(z,{\bar z}),
\label{(9.39)}
\end{equation}
despite the fact that both mass aspects change under boosts.

Furthermore, one has to relate the initial data for $C_{zz}$ at $\partial_{-}\mathscr{I}^{+}$
to those for $D_{zz}$ at $\partial_{+}\mathscr{I}^{-}$ in a way compatible with \eqref{(9.38)}. 
This is obtained by imposing
\begin{equation}
D_{zz}(\infty,z,{\bar z})=-C_{zz}(-\infty,z,{\bar z}).
\label{(9.40)}
\end{equation}
The equations for $U_{z}$ (see \eqref{(9.2)}) and $V_{z}$ then imply
\begin{equation}
\left . V_{z} \right |_{\partial_{+}\mathscr{I}^{-}}
=\left . U_{z} \right |_{\partial_{-}\mathscr{I}^{+}}.
\label{(9.41)}
\end{equation}
This simple but non-trivial condition unifies the separate BMS symmetries on future and past
null infinity into a single one acting on both parts of null infinity, so that
\begin{equation}
{}^{(-)}f(z,{\bar z})={}^{(+)}f(z,{\bar z}).
\label{(9.42)}
\end{equation}
If ${}^{(+)}f$ and ${}^{(-)}f$, besides being equal, are constant, it means we deal with global time
translations. The three constant spatial translations are given by the formula
\begin{equation}
f={(1-z{\bar z})\over (1+z{\bar z})}.
\label{(9.43)}
\end{equation}
Generators that obey \eqref{(9.38)} and \eqref{(9.42)} are the diagonal generators of 
${\rm BMS}^{+} \times {\rm BMS}^{-}$. Strominger has stressed \cite{ST2014} that the
possibility of linking data on future and past null infinity depends crucially on the
property of Christodoulou-Klainerman space-times, and such a link is not always possible
for more general types of asymptotically flat space-times. For example, if the news tensor
obeys the ${1 \over |u|}$ fall off, $C_{zz}$ diverges logarithmically, and the boundary
condition \eqref{(9.40)} cannot be imposed.

\subsection{Symmetries of classical scattering}

To sum up, one starts from a Christodoulou-Klainerman space-time in the center-of-mass frame,
jointly with conformally invariant radiative data $N_{zz}$ and $M_{zz}$ on null infinity.
In order to be able to regard this as a scattering solution, one has to construct both the
initial data on past null infinity and the final data on future null infinity. For this purpose,
one assumes that (9.39) is fulfilled with a boundary value $M$, while
\begin{equation}
\left . C_{zz} \right |_{\partial_{-}\mathscr{I}^{+}}(z,{\bar z})
=\left . D_{zz} \right |_{\partial_{+}\mathscr{I}^{-}}(z,{\bar z})=0,
\label{(9.44)}
\end{equation}
and one performs integration of the constraint equations on future and past null infinity
away from spacelike infinity. Hence one obtains final data $(C_{zz},m_{B}^+)$ that solve the
scattering problem in correspondence to the initial data $(D_{zz},m_{B}^{-})$.

Strominger has also stressed that, for every Christodoulou-Klainerman space, one can
generate an infinite-parameter family of solutions to the scattering problem. Some of these
differ by Poincar\'e transformations, while others differ by supertranslations.

\section{Recent developments and concluding remarks}

The latest applications of all concepts and properties described so far have to do with
black-hole physics in quantum gravity. Indeed, supertranslations transform the Minkowski
vacuum into a physically inequivalent zero-energy vacuum. This means that the vacuum is not
invariant, and hence supertranslation symmetry is spontaneously broken. But if the 
quantum gravity vacuum is not unique, it becomes conceivable that the final vacuum state
might get correlated with the thermal Hawking radiation in such a way that no loss of
unitarity ever occurs \cite{HPS1}. 

Moreover, Bondi-Metzner-Sachs transformations are 
diffeomorphisms that change the physical state (e.g. a supertranslation maps a stationary
black hole to a physically inequivalent one). During black hole evaporation, supertranslation
charge is radiated through null infinity, and since this charge is conserved, the sum of
the black hole and radiated supertranslation charge is fixed at all times. This implies
that black holes carry soft hair (i.e. additional data, besides mass, charge and
angular momentum), and if the black hole evaporates completely, the net supertranslation
charge in the outgoing radiation must be conserved. This leads in turn to correlations 
between the early- and late-time Hawking radiation \cite{HPS1}.  

In subsequent work \cite{HPS2}, the same authors have proved that black hole spacetimes in 
classical general relativity are characterized by an infinite head of supertranslation hair, in
addition to Arnowitt-Deser-Misner mass $M$, linear momentum $\vec P$, angular momentum
$\vec J$ and boost charge $\vec K$. Classical superrotation charges measured at infinity lead to
distinct black holes. Solutions with supertranslation hair are diffeomorphic to 
Schwarzschild spacetime, and a black hole can be supertranslated by throwing in an asymmetric
shock wave \cite{HPS2}.

Some interesting recent work inspired by Refs. \cite{ST2014,HPS1,HPS2} is the one by
Asorey and coauthors in Ref. \cite{Asorey}. These authors study infrared transformations
in local quantum physics. Their observables are smeared by test functions, at first vanishing
at infinity. They prove that the equations of motion can be seen as constraints, that generate
the group of space- and time-dependent gauge transformations. Infrared non-trivial effects are
then obtained by allowing for test functions that do not vanish at infinity. The resulting
extended operators generate a larger group, and the quotient of the two groups gives rise
to the so-called superselection sectors, that distinguish among the various infrared sectors.
The BMS group plays the role of changing the superselection sector.

As far as black hole theory is concerned, we find it worth bringing to the attention of the reader
the work in Ref. \cite{Bousso}, where the authors have provided a derivation and an observable
realization of the algebra imposed by Strominger on unobservable boundary degrees of freedom. The
conservation laws associated with asymptotic symmetries are found to arise from free propagation 
of infrared modes. In their opinion, this adds evidence in favour of soft charges failing to constrain
the hard scattering problem, and hence not being relevant for the so-called black hole
information paradox.

On reverting now to classical theory, we stress again the new perspective resulting from the
work in Refs. \cite{ST2014,HPS2}: the very existence of a scattering problem from past to future
null infinity in general relativity makes it necessary to consider an infinite number of conserved
supertranslation and superrotation charges. This property was not discovered for several decades,
possibly because many research lines focused on space-times where the peeling property and
the Penrose conformal infinity can be exploited. Thanks to the work of Christodoulou and 
Klainerman \cite{CK1993}, it is by now well known that generic space-times in a finite
neighbourhood of Minkowski lie precisely in the region where it is possible to define an infinite
number of finite, non-vanishing and conserved supertranslation and superrotation charges. 
The work in Ref. \cite{HPS2} has studied a family of space-times whose asymptotic forms lie in
the same region and possess the conserved charges, although they are not necessarily in a
small neighbourhood of Minkowski and may well contain black holes in the interior.
Further work on the rigorous theory of supertranslation and superrotation charges is likely
to shed new light on the mathematical structures of space-time 
\cite{ES1992} and their physical implications.

\section*{acknowledgments}
The authors are grateful to the Dipartimento di Fisica ``Ettore Pancini'' of Federico II University for
hospitality and support. 

\appendix
\chapter{}
\section{Christoffel Symbols}
\label{C}
The metric tensor in \eqref{eqn:136} is 
\begin{equation*}
g_{ab}=\left(\begin{matrix}\frac{V}{r}e^{2\beta}-r^2h_{AB}U^AU^B & e^{2\beta} & r^2h_{2B}U^B & r^2h_{3B}U^B \\
e^{2\beta} & 0 & 0 & 0 \\
r^2h_{2A}U^A & 0 & -r^2h_{22} &-r^2h_{23} \\
r^2h_{3A}U^A & 0 & -r^2h_{32} & -r^2h_{33}
\end{matrix}\right),
\end{equation*}
and its inverse is 
\begin{equation*}
g^{ab}=\left(\begin{matrix} 0 & e^{-2\beta} & 0 & 0 \\
e^{-2\beta} & -\frac{V}{r}e^{-2\beta} & U^2e^{-2\beta} & U^3e^{-2\beta} \\
0 & U^2e^{-2\beta}& -\frac{h_{22}}{r^2} &-\frac{h_{23}}{r^2} \\
0 & U^3e^{-2\beta} & -\frac{h_{32}}{r^2} & -\frac{h_{33}}{r^2}
\end{matrix}\right),
\end{equation*}
while the inverse of the matrix $h_{AB}$ in \eqref{eqn:141} is
\begin{equation*}
h^{AB}=\left(\begin{matrix}
e^{-2\gamma}\cosh 2\delta & -\displaystyle{\frac{\sinh 2\delta}{\sin\theta}}\\ \\
-\displaystyle{\frac{\sinh 2\delta}{\sin\theta}} & \displaystyle{\frac{e^{2\gamma}\cosh 2\delta}{\sin^2\theta}}
\end{matrix}\right).
\end{equation*}
It can be easily verified that 
\begin{equation*}
h^{AB}\partial_r h_{AB}=h^{AB}\partial_u h_{AB}=0.
\end{equation*}
The Christoffel symbols are
\begin{align*}
&\Gamma^{u}{}_{rr}=0,\\
&\Gamma^{r}{}_{rr}=2\partial_r\beta,\\
&\Gamma^{A}{}_{rr}=0,\\
&\Gamma^{u}{}_{rA}=0,\\
&\Gamma^{r}{}_{rA}=\frac{e^{-2\beta}r^2}{2}h_{AB}\left(\partial_r U^B\right)+\partial_{A} \beta,\\
&\Gamma^{B}{}_{rA}=\frac{\delta^B_A}{r}+\frac{\left(\partial_r h_{AC}\right)h^{BC}}{2},\\
&\Gamma^{u}{}_{AB}=e^{-2\beta}rh_{AB}+\frac{e^{-2\beta}r^2}{2}\left(\partial_rh_{AB}\right),\\
&\Gamma^{r}{}_{AB}=\frac{e^{-2\beta}r^2}{2}\left(\partial_AU_B+\partial_BU_A\right)
+\frac{e^{-2\beta}r^2}{2}\left(\partial_uh_{AB}\right)-Ve^{-2\beta}h_{AB}\\
&-\frac{rVe^{-2\beta}}{2}\left(\partial_rh_{AB}\right)-U^C\frac{e^{-2\beta}r^2}{2}\left(\partial_Ah_{CB}
+\partial_Bh_{AC}-\partial_Ch_{AB}\right),\\
&\Gamma^{C}{}_{AB}=rU^Ch_{AB}e^{-2\beta}+\frac{r^2e^{-2\beta}}{2}U^C\left(\partial_rh_{AB}\right)\\
&+\frac{h^{CD}}{2}\left(\partial_Ah_{DB}+\partial_Bh_{DA}-\partial_{D}h_{AB}\right),\\
&\Gamma^{u}{}_{Au}=\partial_A\beta-re^{-2\beta}U_A-\frac{r^2e^{-2\beta}}{2}\left(\partial_rU_A\right),\\
&\Gamma^{r}{}_{Au}=\frac{\partial_AV}{2r}-\frac{r^2e^{-2\beta}}{2}U^B\left(\partial_AU_B\right)
+e^{-2\beta}VU_A+\frac{Vre^{-2\beta}}{2}\left(\partial_rU_A\right)\\&
-\frac{r^2e^{-2\beta}}{2}U^B\left(\partial_uh_{AB}\right)-\frac{r^2e^{-2\beta}}{2}U^B\left(\partial_BU_A\right),\\
&\Gamma^{B}{}_{Au}=U^B\left(\partial_A\beta\right)-re^{-2\beta}U^B\left(\partial_rU_A\right)
-\frac{r^2e^{-2\beta}}{2}U^B\left(\partial_rU_A\right)-\frac{h^{BC}}{2}\left(\partial_AU_C\right)\\
&+\frac{h^{BC}}{2}\left(\partial_CU_A\right)+\frac{h^{BC}}{2}\left(\partial_uh_{AC}\right),\\
&\Gamma^{u}{}_{ru}=re^{-2\beta}U^AU_A+\frac{r^2e^{-2\beta}}{2}U^A\left(\partial_rU_A\right)
-U^A\left(\partial_A\beta\right),\\
&\Gamma^{r}{}_{ru}=\frac{\partial_rV}{2r}-\frac{V}{2r^2}+\frac{V}{r}\left(\partial_r\beta\right)
-\frac{r^2e^{-2\beta}}{2}U^A\left(\partial_rU_A\right)-U^A\left(\partial_A\beta\right),\\
&\Gamma^{A}{}_{ru}=-\frac{U^A}{r}-\frac{h^{AB}}{2}\left(\partial_rU_B\right)+\frac{h^{AB}}{r^2}\left(\partial_B\beta\right).
\end{align*}\medskip\medskip

\end{document}